\theoremstyle{plain}
  \newtheorem{thm}{Theorem}[section]
  \newtheorem{lem}[thm]{Lemma}
  \newtheorem{prop}[thm]{Proposition}
  \newtheorem{cor}[thm]{Corollary} 
\theoremstyle{definition}
  \newtheorem{defn}[thm]{Definition}
  \newtheorem{rmk}[thm]{Remark}
  \newtheorem{ex}[thm]{Example}
  \newtheorem{asm}{Assumption}
\theoremstyle{plain}
\def\CD{\mathcal{D}}
\def\CF{\mathcal{F}}
\def\CG{\mathcal{G}}
\def\CH{\mathcal{H}}
\def\CI{\mathcal{I}}
\def\CJ{\mathcal{J}}
\def\1{\mathds{1}}
\def\BE{\mathbb{E}}
\def\BN{\mathbb{N}}
\def\BR{\mathbb{R}}
\def\BP{\mathbb{P}}
\def\Var{\text{Var}}
\def\Cov{\text{Cov}}
\newcommand{\indep}{\perp\!\!\!\!\!\!\perp} 
\begin{document}

\begin{frontmatter}

\title{Consistency of Honest Decision Trees and Random Forests}
\runtitle{Consistency of Honest Decision Trees and Random Forests}

\begin{aug}
\author[A]{\inits{}\fnms{Martin}~\snm{Bladt}\ead[label=e1]{martinbladt@math.ku.dk}} \and
\author[A]{\inits{}\fnms{Rasmus Frigaard}~\snm{Lemvig}\ead[label=e2]{rfl@math.ku.dk}}
\address[A]{Department of Mathematical Sciences, University of Copenhagen, Copenhagen, 2100, DK\printead[presep={,\ }]{e1}, \printead[presep={,\ }]{e2}}

\end{aug}

\begin{abstract}
Various types of consistency of honest decision trees and random forests are considered in the regression setting. In contrast to related literature, the proofs are elementary and follow the classical arguments used for smoothing methods. Under mild regularity conditions on the regression function and data distribution, weak and almost sure convergence of honest trees and honest forest averages to the true regression function is established, and uniform convergence over compact covariate domains is obtained. The framework naturally accommodates ensemble variants based on subsampling and also a two-stage bootstrap sampling scheme. Existing analyses are synthesized and simplified, and hence existing gaps are indirectly bridged. The elementary nature of the arguments clarifies the close relationship between data-adaptive partitioning and kernel-type methods, providing an accessible approach to understanding the asymptotic behavior of tree-based methods.
\end{abstract}


\begin{keyword}[class=MSC]
\kwd[Primary ]{62E20}
\kwd{62G05}
\kwd[; secondary ]{68T05}
\end{keyword}

\begin{keyword}
\kwd{Decision trees}
\kwd{Random forests}
\kwd{Consistency}
\kwd{Nonparametric regression}
\end{keyword}

\end{frontmatter}
\tableofcontents
\newpage

\section{Introduction}

Decision trees and their ensemble extensions are central tools in statistical machine learning due to their flexibility, interpretability, and excellent empirical performance. From a practical standpoint, methods such as CART \cite{CART} and Breiman's random forests \cite{BreimanRF,BreimanBagging} have become default choices across many applied fields. From
a theoretical viewpoint, however, providing simple and general consistency guarantees has been a persistent challenge because the
partitioning is highly data dependent and interacts with randomization and resampling steps in intricate ways. Early theoretical work established consistency for simplified or highly
structured variants of forests. For example, \cite{BiauDevroyeLugosi2008} proved universal consistency results for broad classes of averaging rules, \cite{LinJeon2006} connected certain forests to adaptive nearest-neighbour methods and \cite{Biau2012} analysed a centered random-forest model and showed adaptation to sparsity. $L^2$-consistency for non-adaptive splitting rules was proved in \cite{ScornetAsymp}, showing that some of the practical success of forests is theoretically tractable. Sharp bounds on the bias for many simple random forest models were derived in \cite{Klusowski}. Subsequent work further clarified the connection between forests and classical smoothers: \cite{ScornetKernel} recasts certain random-forest variants as kernel estimators and derives rates of consistency in that setup.
Other contributions include analyses of purely-random and centered
forests \cite{Genuer2012,Arlot2014}, investigations of the effect of
subsampling and tree depth \cite{RFSubsampling}, and uniform
consistency results for specialized forest forms such as random survival
forests \cite{RSF, RSFcon}.

A related and influential strand of research advocates for \emph{honest}
tree constructions, where the sample is split into disjoint parts used
separately for partitioning (split selection) and for estimation inside
leaves. Honesty simplifies bias analysis and underpins valid
inference procedures in causal and heterogeneous-effects contexts
\cite{WagerAthey,AtheyImbens2019}. Honesty also allows for combining conditioning arguments with classical statistical analysis, leading to more simple and transparent proofs. Consistency and inference for
honest forest variants have been developed in these works and in
follow-up studies addressing uncertainty quantification and inference
for heterogeneous effects \cite{WagerAthey}.

More recently, a number of authors obtained stronger and more general results for tree-based learners. As was shown in \cite{CattaneoChandakKlusowski}, convergence rates for oblique (multivariate-split) regression 
trees can be derived under structural assumptions, \cite{RFSubsampling} quantified
the interplay of subsampling and tree depth, and a body of work studied variance/bias trade-offs in various randomization schemes \cite{Genuer2012,Biau2012}. Nevertheless, many consistency proofs in the literature rely either on restrictive structural assumptions (e.g.\ additive models or sufficient-impurity-decrease conditions) or on technical arguments that obscure the connection to classical nonparametric smoothing. Furthermore, the inherent complexity of recursive, data-dependent partitioning has occasionally led to technical gaps or a reliance on heuristic arguments in existing derivations. Additionally, it is common in the literature to analyse only the entire forest instead of the estimators produced from individual trees. This makes theoretical properties of random forests difficult to translate to other types of estimators, especially ones that are functional in nature. Furthermore, Monte Carlo effects from bootstrapping are often not addressed, which makes it unclear what the effects of different types of bootstrapping can have on the final estimate. In the literature, this often takes place in the form of \emph{infinite forests}, which can be seen as a limit of the random forest estimate when the number of trees approaches infinity. The precise relationship between finite and infinite forests is analysed in \cite{ScornetAsymp} while \cite{MentchHooker} and \cite{WagerHastieEfron} investigate the effect of choosing a finite number of trees.

In this paper we present a streamlined analysis of honest regression trees and honest random forests that (i) uses elementary proofs modelled
on kernel-smoothing techniques, and remarkably our analysis takes place at the level of individual trees; (ii) yields transparent conditions for \emph{strong} (almost-sure) consistency, $L^p$ consistency and, crucially, \emph{uniform} convergence in
the covariate; and (iii) covers a broader class of ensemble schemes, including a double-bootstrap (two-stage subsampling/bootstrapping)
ensemble that, to our knowledge, has not been considered in consistency
analyses before. We aim to provide a robust theoretical foundation that bridges the gaps found in prior literature. Our assumptions are deliberately mild: honesty of the
tree estimator, control on minimal leaf mass, and modest regularity of the regression function (e.g.\ continuity or local Lipschitz conditions). Under these conditions we obtain almost-sure, uniform convergence to the true regression function under general conditions on the data generating process and splitting scheme. This contrasts with several prior works that either established only $L^2$ or pointwise consistency  \cite{ScornetAsymp}, or obtained rates under stronger technical
conditions \cite{CattaneoChandakKlusowski,RFSubsampling}. By recasting tree averages as adaptive kernel-like averages and transferring standard kernel arguments, our analysis both simplifies and generalises existing results: it recovers many classical consistency claims as special cases while providing a flexible framework that handles new sampling-based ensemble variants such as the double-bootstrap scheme. Our explicit handling of various bootstrap schemes also provides insight into the concrete assumptions needed for consistency to be retained for each tree in a random forest. 

An outline for the paper and our central results is as follows. Section~\ref{sec:singletree} concerns consistency for a single tree and starts by presenting the key assumptions under which the main consistency results are established. The subsequent paragraphs concern weak, strong and uniform consistency, respectively. The main results in this section are Theorems~\ref{thm:conspart1}, \ref{thm:strongconsistency} and \ref{thm:uniformconsistency} which establish weak, strong and uniform consistency under general conditions on the splitting rule. More tractable regularity conditions on the splits are subsequently discussed, culminating with Proposition~\ref{prop:regularIuniform} which establishes uniform strong consistency for such general schemes. The general results initially presented are further exemplified through two concrete non-adaptive splitting schemes. Section~\ref{sec:bootstrap} modifies the assumptions from Section~\ref{sec:singletree} to our proposed \enquote{double bootstrap} framework. The following subsections deal with weak and strong consistency for bootstrapped trees which immediately translate to consistency results for a whole forest. The main results are Theorems~\ref{thm:bootstraptreeconsistency} and \ref{thm:strongconsistencybootstrap}, establishing weak and strong convergence. Theorems~\ref{thm:weakconsistencybootstrap} and \ref{thm:regularstrongconsistency} then establish weak and strong consistency under more tractable regularity conditions, suitably adapted to handle the subsampling arising from bootstrapping. Throughout this section, all requirements on the bootstrap weights are exemplified through various schemes used in practical applications, including subsampling with or without replacement. Section~\ref{sec:proofs} contains the proofs of our results, while Appendix~\ref{sec:moreresults} contains additional results and background used in some of the proofs of Section~\ref{sec:proofs}. Finally, Section~\ref{sec:disc} concludes with a brief discussion and outlook.

\section{Consistency for a single tree} \label{sec:singletree}

In this section, we present an approach to establishing consistency of decision trees and random forests by a separate analysis of the node volume and the tree itself. To be more precise, we show that under an honesty condition, decision trees are consistent if the size of the leaf around the chosen prediction point shrinks to zero in a suitably regular fashion, and the number of observations used for growing the tree grows sufficiently fast compared to the rate that the volume of a leaf node shrinks. We start by presenting the assumptions, we need.

\subsection{Assumptions}

Throughout the paper, we assume that we are given iid training data $(X_i, Y_i)_{i=1}^n$ with $X_i \in \BR^d$ being $d$-dimensional features and $Y_i$ one-dimensional responses. We let $(X, Y)$ denote a generic observation and use $X^j$ to denote the $j$'th coordinate of $X$. Every random variable is defined on the background probability space $(\Omega, \CF, \BP)$. The goal is to estimate the conditional mean $\BE[Y \mid X = x]$ using decision trees and random forests. We use $\overset{\BP}{\to}$ to denote convergence in probability. All limits are understood to be for $n \to \infty$ unless otherwise stated.

The first assumption concerns the data generating process.

\begin{asm}[Covariates]\label{asm:covariate}
We assume that $X \in [0, 1]^d$ such that $X$ has continuous density $f$ bounded away from zero and infinity, $0 < \varepsilon \leq f \leq C < \infty$. We furthermore assume that $\BE[Y^2] < \infty$ with $\BE[Y^2 \mid X = x]$ continuous in $x \in [0, 1]^d$ and $\BE[Y^2 \mid X = x] \leq K$ for some constant $K$.
\end{asm}

In growing a tree, we have an auxiliary parameter $\theta$ independent of the data. This parameter determines the randomness injected into the splitting procedure. For multidimensional covariates, $\theta$ could control the chosen feature in a split, but $\theta$ could also play a role in determining the threshold after the feature was chosen. For given $x \in [0, 1]^d$, let $L(x, \theta)$ denote the leaf that $x$ falls into. The following honesty condition is the key assumption of the paper. 

\begin{asm}[Honesty]\label{asm:honesty}
Partition the data $(X_1, Y_1), ..., (X_n, Y_n)$ into two subsets $\CI$ and $\CJ$ with $\CI$ used only for prediction and $\CJ$ used for determining the splits. We assume that the number of samples in $\CI$ and $\CJ$ is chosen deterministically and that $n_\CI := |\CI| = \Theta(n)$ and $n_\CJ := |\CJ| = \Theta(n)$. 
\end{asm}

This honesty assumption differs from the one in~\cite{WagerAthey} presented for double-sample trees. In~\cite{WagerAthey} it is allowed that the features from $\CI$ play a role in placing splits. We only consider binary splits on single features such that a generic observation $X$ belongs to the left node if $X^j \leq c$ for some threshold $c$ in the $j$'th interval of the node and to the right node otherwise. Recent work, see~\cite{CattaneoChandakKlusowski}, studies asymptotic properties of trees grown using splits at linear combinations of variables, and extending our work in this direction is an interesting avenue for further research. 

By reordering the observations if necessary, we may assume that
\begin{equation*}
    \CI = \{(X_1, Y_1), ..., (X_{n_\CI}, Y_{n_\CI})\}.
\end{equation*}
Let
\begin{equation*}
	N_L(x, \theta) = \sum_{i = 1}^{n_\CI} \1_{\{X_i \in L(x, \theta)\}}
\end{equation*}
denote the number of observations among the prediction data $\CI$ which falls into the leaf $L(x, \theta)$. We emphasise that $L(x, \theta)$ is a function of $\CJ$ and $\theta$ only, but that $N_L(x, \theta)$ depends on both $\theta$ and the entire dataset. The tree predictor can then be written as
\begin{equation*}
	T(x, \theta) = \sum_{i = 1}^{n_\CI} \frac{\1_{\{X_i \in L(x, \theta)\}}}{N_L(x, \theta)}Y_i.
\end{equation*}

It is convenient to introduce more notation related to the leaf $L(x, \theta)$. We write
\begin{equation*}
    L(x, \theta) = (a_{n, 1}^x, b_{n, 1}^x] \times \cdots \times (a_{n, d}^x, b_{n, d}^x]
\end{equation*}
with, slightly abusing notation, $a_{n, j}^x := a_{n, j}^x(\theta, \CJ, s(x))$ and $b_{n, j}^x := b_{n, j}^x(\theta, \CJ, s(x))$ where $s(x)$ denotes the number of splits leading to the leaf $L(x, \theta)$. We furthermore let $\lambda$ denote Lebesgue measure, so that
\begin{equation*}
    \lambda(L(x, \theta)) = \prod_{j = 1}^d (b_{n, j}^x - a_{n, j}^x)
\end{equation*}
denotes the volume of the leaf $L(x, \theta)$. We can now formulate a property on the splitting rule which leads to consistency, as we shall see below.

\begin{asm}[{Consistency}]\label{asm:leaf}
We assume the following properties of the splitting rule of the tree:
\begin{enumerate}
    \item[(I)] $b_{n, j}^x - a_{n, j}^x \overset{\BP}{\to} 0$ for all $j = 1, ..., d$.
    \item[(II)] $n_\CI \lambda(L(x, \theta)) \overset{\BP}{\to} \infty$.
\end{enumerate}
\end{asm}

Points (I) and (II) are the main conditions of interest, and we later verify them for a large class of splitting schemes as well as some concrete ones. 

\subsection{Weak consistency of decision trees via honesty}

The goal of this subsection is to prove weak consistency of the tree predictor $T(x, \theta)$. Our approach is inspired by the proof of weak consistency for the Nadaraya--Watson kernel estimator, see~\cite{Ghosh} for background on kernel methods. The strategy is as follows. We introduce the estimator 
\begin{equation*}
	\widehat{f}(x, \theta) := \sum_{i = 1}^{n_\CI} \frac{\1_{\{X_i \in L(x, \theta)\}}}{n_\CI \lambda(L(x, \theta))}
\end{equation*}
for the density $f$. We prove that this estimator is indeed consistent for $f$, and then we prove consistency of $\widehat{m}(x, \theta) := T(x, \theta) \widehat{f}(x, \theta)$, the analogue of the Priestley--Chao estimator of $m(x) := \BE[Y \mid X = x]f(x)$. Combining these results yields consistency of $T(x, \theta)$. In order to show weak consistency of these components, we use a conditioning argument which relies on the following technical lemma which states that we may replace an $n$-dependent $\sigma$-algebra with a fixed one. In order to state the lemma, recall that we suppress the dependence on $n$ when we write $\CI$ and $\CJ$. Writing $\CI_n$ and $\CJ_n$ instead, we can think of these two sets as being built separately as $n$ grows, and we may consider
\begin{equation*}
    \CJ_\infty := \bigcup_{n = 1}^\infty \CJ_n.
\end{equation*}
\begin{lem}[Conditioning]\label{lem:convergenceconditioning}
Fix $\delta > 0$. It holds that
\begin{equation*}
    \BP(|\widehat{f}(x, \theta) - \BE[\widehat{f}(x, \theta) \mid \theta, \CJ]| > \delta \mid \theta, \CJ) = \BP(|\widehat{f}(x, \theta) - \BE[\widehat{f}(x, \theta) \mid \theta, \CJ]| > \delta \mid \theta, \CJ_\infty)
\end{equation*}
and that
\begin{equation*}
    \BP(|\widehat{m}(x, \theta) - \BE[\widehat{m}(x, \theta) \mid \theta, \CJ]| > \delta \mid \theta, \CJ) = \BP(|\widehat{m}(x, \theta) - \BE[\widehat{m}(x, \theta) \mid \theta, \CJ]| > \delta \mid \theta, \CJ_\infty).
\end{equation*}
\end{lem}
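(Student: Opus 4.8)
The plan is to exploit the structure of the two $\sigma$-algebras and the fact that the random variable $W := \widehat{f}(x,\theta) - \BE[\widehat{f}(x,\theta)\mid\theta,\CJ]$ is measurable with respect to a much smaller $\sigma$-algebra than $\sigma(\theta,\CJ_\infty)$. The key observation is that, although $\CJ_\infty = \bigcup_n \CJ_n$ contains infinitely many observations, the event $\{|W| > \delta\}$ together with the conditioning value $\BE[\widehat{f}(x,\theta)\mid\theta,\CJ]$ only involves the fixed finite collection $\CJ = \CJ_n$ (which determines the leaf $L(x,\theta)$ and its volume) and the prediction observations in $\CI = \CI_n = \{(X_1,Y_1),\dots,(X_{n_\CI},Y_{n_\CI})\}$. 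Since $\CI_n$ is disjoint from $\CJ_\infty$ (the prediction data and the split data are drawn from disjoint index sets, and all observations are iid), the extra observations in $\CJ_\infty \setminus \CJ_n$ are independent of everything entering the event in question.

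Concretely, I would first record that $L(x,\theta)$, and hence $\lambda(L(x,\theta))$ and the functions $a_{n,j}^x, b_{n,j}^x$, are $\sigma(\theta,\CJ)$-measurable, so the conditional expectation $\BE[\widehat{f}(x,\theta)\mid\theta,\CJ]$ is a fixed measurable function of $(\theta,\CJ)$; call it $g(\theta,\CJ)$. Then the event $A := \{|\widehat{f}(x,\theta) - g(\theta,\CJ)| > \delta\}$ lies in $\sigma(\theta,\CJ,\CI)$. Next I would invoke the standard fact that if $A \in \sigma(\CG \vee \CH_1)$ and $\CH_2$ is independent of $\sigma(\CG \vee \CH_1)$, then $\BP(A\mid \CG) = \BP(A \mid \CG\vee\CH_2)$; here $\CG = \sigma(\theta,\CJ)$, $\CH_1 = \sigma(\CI)$, and $\CH_2 = \sigma(\CJ_\infty\setminus\CJ_n)$ (the observations indexed outside $\CI\cup\CJ$ used to build $\CJ_m$ for $m > n$). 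Adding $\CH_2$ to the conditioning upgrades $\sigma(\theta,\CJ)$ to $\sigma(\theta,\CJ_\infty)$ without changing the conditional probability, which is exactly the claimed identity. The identical argument applies verbatim to $\widehat{m}(x,\theta) = T(x,\theta)\widehat{f}(x,\theta)$, which is likewise a function of $(\theta,\CJ,\CI)$ only.

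The main point requiring care — the principal obstacle — is the precise bookkeeping of how $\CJ_\infty$ is constructed relative to $\CI_n$ and $\CJ_n$, since the statement treats $\CI_n$ and $\CJ_n$ as "built separately as $n$ grows." One must make sure that the prediction set $\CI_n$ is genuinely disjoint from $\CJ_m$ for all $m$ (so that the future split observations are a fresh independent block), and that the reindexing convention $\CI = \{(X_1,Y_1),\dots,(X_{n_\CI},Y_{n_\CI})\}$ is compatible with this nested construction. Once the independence of $\sigma(\CJ_\infty\setminus\CJ_n)$ from $\sigma(\theta,\CJ_n,\CI_n)$ is established from the iid assumption and the independence of $\theta$ from the data, the conclusion is immediate from the elementary conditioning lemma; I would state that lemma (or cite it from Appendix~\ref{sec:moreresults}) and apply it twice.
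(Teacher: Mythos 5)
Your proposal is correct and matches the paper's proof essentially verbatim: you identify the event as measurable with respect to $\CD := \sigma(\theta,\CJ,\CI)$, set $\CG := \sigma(\theta,\CJ)$ and $\CH := \sigma(\CJ_\infty\setminus\CJ)$, note $\CG\subseteq\CD$ and $\CD\indep\CH$, and then invoke exactly the appendix conditioning lemma (Lemma~\ref{lem:condmean}) that the paper itself applies. The bookkeeping caveat you flag about the nested construction of $\CI_n$ and $\CJ_n$ is the same implicit hypothesis the paper relies on, so there is no gap.
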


\begin{prop}[Weak consistency of the density estimator]\label{prop:densityestimatorconsistent}
Under Assumptions \ref{asm:covariate}, \ref{asm:honesty} and \ref{asm:leaf}, it holds that
\begin{equation*}
	\widehat{f}(x, \theta) \overset{\BP}{\to} f(x).
\end{equation*}
\end{prop}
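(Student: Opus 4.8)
The plan is to follow the classical route for kernel density estimators: decompose the error into a stochastic (variance) part and a deterministic (bias) part, handling the former by a conditional Chebyshev argument and the latter by continuity of $f$ together with Assumption~\ref{asm:leaf}(I). Write
\begin{equation*}
  \widehat{f}(x,\theta) - f(x) = \bigl(\widehat{f}(x,\theta) - \BE[\widehat{f}(x,\theta)\mid \theta,\CJ]\bigr) + \bigl(\BE[\widehat{f}(x,\theta)\mid \theta,\CJ] - f(x)\bigr),
\end{equation*}
and it suffices to show each term tends to zero in probability.

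For the stochastic term, I would condition on $(\theta,\CJ)$, which fixes the leaf $L(x,\theta)$ and hence also $\lambda(L(x,\theta))$; the only remaining randomness is in the locations of the prediction points $X_1,\dots,X_{n_\CI}$, which are iid with density $f$ and, crucially by honesty (Assumption~\ref{asm:honesty}), independent of $\CJ$. Conditionally, $N_L(x,\theta) = \sum_{i=1}^{n_\CI}\1_{\{X_i\in L(x,\theta)\}}$ is a sum of iid Bernoulli variables with success probability $p := \int_{L(x,\theta)} f$, so $\Var(\widehat{f}(x,\theta)\mid\theta,\CJ) = \frac{n_\CI\, p(1-p)}{(n_\CI\lambda(L(x,\theta)))^2} \le \frac{p}{n_\CI\lambda(L(x,\theta))^2}$. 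Since $f\le C$ gives $p \le C\lambda(L(x,\theta))$, this is at most $\frac{C}{n_\CI\lambda(L(x,\theta))}$. By Chebyshev's inequality, $\BP(|\widehat{f}(x,\theta)-\BE[\widehat{f}(x,\theta)\mid\theta,\CJ]| > \delta \mid \theta,\CJ) \le \frac{C}{\delta^2 n_\CI\lambda(L(x,\theta))}$, which by Assumption~\ref{asm:leaf}(II) tends to zero in probability (as a random variable in $(\theta,\CJ)$). To pass from this conditional statement to an unconditional one, I would invoke Lemma~\ref{lem:convergenceconditioning} to replace the $n$-dependent $\sigma$-algebra $\sigma(\theta,\CJ)$ by the fixed $\sigma(\theta,\CJ_\infty)$, then apply dominated convergence (the conditional probabilities are bounded by $1$) to conclude $\BP(|\widehat{f}(x,\theta)-\BE[\widehat{f}(x,\theta)\mid\theta,\CJ]| > \delta) \to 0$.

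For the bias term, note that $\BE[\widehat{f}(x,\theta)\mid\theta,\CJ] = \frac{1}{\lambda(L(x,\theta))}\int_{L(x,\theta)} f(u)\,du$, the average of $f$ over the leaf. Since $f$ is continuous on the compact set $[0,1]^d$ it is uniformly continuous, and $L(x,\theta) = (a^x_{n,1},b^x_{n,1}]\times\cdots\times(a^x_{n,d},b^x_{n,d}]$ contains $x$ (so every point of the leaf is within $\sum_j (b^x_{n,j}-a^x_{n,j})$ of $x$ in, say, the $\ell^1$ norm). Thus $|\BE[\widehat{f}(x,\theta)\mid\theta,\CJ] - f(x)| \le \sup_{u\in L(x,\theta)}|f(u)-f(x)| \le \omega_f\bigl(\sum_{j=1}^d (b^x_{n,j}-a^x_{n,j})\bigr)$, where $\omega_f$ is the modulus of continuity of $f$. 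By Assumption~\ref{asm:leaf}(I) each $b^x_{n,j}-a^x_{n,j}\overset{\BP}{\to}0$, so the diameter goes to zero in probability, and continuity of $\omega_f$ at $0$ (with $\omega_f(0)=0$) combined with the continuous mapping theorem gives $\BE[\widehat{f}(x,\theta)\mid\theta,\CJ] - f(x)\overset{\BP}{\to}0$. Adding the two pieces completes the proof.

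The main obstacle I anticipate is the bookkeeping around the conditioning: the quantity $\BP(\cdots > \delta \mid \theta,\CJ)$ is itself random, its bound $\frac{C}{\delta^2 n_\CI\lambda(L(x,\theta))}$ converges only in probability (not surely), and one must legitimately move from "converges to zero in probability, conditionally" to "converges to zero in probability, unconditionally." This is exactly the role of Lemma~\ref{lem:convergenceconditioning} — it lets one fix the $\sigma$-algebra so that a dominated-convergence / bounded-convergence argument applies cleanly rather than wrestling with a sequence of $\sigma$-algebras that changes with $n$. The rest (Bernoulli variance computation, uniform continuity) is routine.
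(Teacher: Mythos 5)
Your proof is correct and follows essentially the same route as the paper's: bias/variance decomposition, conditional Chebyshev bound, Lemma~\ref{lem:convergenceconditioning} to fix the $\sigma$-algebra, then dominated convergence. The only cosmetic differences are that you read off the conditional variance directly from the Binomial law of $N_L(x,\theta)$ rather than computing the second moment term by term, and you justify the bias term via uniform continuity and a modulus-of-continuity bound rather than the Lebesgue-point footnote — both are valid and equivalent here.
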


The following result is one of the fundamental contributions of the paper.

\begin{thm}[Weak consistency of decision trees]\label{thm:conspart1}
Under Assumptions \ref{asm:covariate}, \ref{asm:honesty} and \ref{asm:leaf}, we have
\begin{equation*}
	T(x, \theta) \overset{\BP}{\to} \BE[Y \mid X = x].
\end{equation*}
\end{thm}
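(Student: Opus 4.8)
The plan is to follow the Nadaraya--Watson decomposition announced in the text: write $T(x,\theta) = \widehat m(x,\theta)/\widehat f(x,\theta)$ and deduce convergence of the ratio from convergence of numerator and denominator. Proposition~\ref{prop:densityestimatorconsistent} already gives $\widehat f(x,\theta) \overset{\BP}{\to} f(x)$, and by Assumption~\ref{asm:covariate} we have $f(x) \geq \varepsilon > 0$, so the denominator is bounded away from zero with probability tending to one. Thus the crux is to prove that the Priestley--Chao analogue satisfies $\widehat m(x,\theta) \overset{\BP}{\to} m(x) = \BE[Y\mid X=x]\,f(x)$; once both are in hand, the continuous mapping theorem (applied to $(u,v)\mapsto u/v$ on a neighbourhood of $(m(x),f(x))$ where $v$ is bounded below) finishes the proof.

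To handle $\widehat m(x,\theta) = \sum_{i=1}^{n_\CI} \frac{\1_{\{X_i \in L(x,\theta)\}}}{n_\CI \lambda(L(x,\theta))} Y_i$, I would mirror the structure used for $\widehat f$: split into a ``variance'' part and a ``bias'' part via
\begin{equation*}
  \widehat m(x,\theta) - m(x) = \bigl(\widehat m(x,\theta) - \BE[\widehat m(x,\theta)\mid \theta,\CJ]\bigr) + \bigl(\BE[\widehat m(x,\theta)\mid \theta,\CJ] - m(x)\bigr).
\end{equation*}
For the first term, condition on $(\theta,\CJ)$: given the leaf $L(x,\theta)$, the summands indexed by $\CI$ are iid, so the conditional variance of $\widehat m$ is $\tfrac{1}{n_\CI \lambda(L)^2}\Var(Y\1_{\{X\in L\}}\mid \theta,\CJ) \le \tfrac{1}{n_\CI \lambda(L)^2}\BE[Y^2\1_{\{X\in L\}}\mid\theta,\CJ]$, and since $\BE[Y^2\mid X=x]\le K$ this is at most $\tfrac{K}{n_\CI\lambda(L)^2}\BP(X\in L\mid\theta,\CJ) \le \tfrac{KC}{n_\CI\lambda(L)}$ using $f\le C$. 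By Assumption~\ref{asm:leaf}(II) this conditional variance tends to $0$ in probability, and Chebyshev plus the conditioning Lemma~\ref{lem:convergenceconditioning} (which lets me replace $\CJ$ by the fixed $\CJ_\infty$ and then apply dominated convergence / a subsequence argument to the conditional probabilities) upgrades this to unconditional convergence in probability of the first term to zero. For the bias term, $\BE[\widehat m(x,\theta)\mid\theta,\CJ] = \tfrac{1}{\lambda(L)}\int_{L(x,\theta)} \BE[Y\mid X=u]\,f(u)\,du$, i.e. the average of the continuous function $m$ over the leaf; since by Assumption~\ref{asm:leaf}(I) every side length $b_{n,j}^x - a_{n,j}^x \to 0$ in probability and $x\in L(x,\theta)$, the leaf shrinks to $\{x\}$ and continuity of $m$ (ensured by continuity of $f$ and of $\BE[Y\mid X=x]$, the latter following from continuity of $\BE[Y^2\mid X=x]$ via Cauchy--Schwarz) forces this average to $m(x)$ in probability.

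The main obstacle, as in the density case, is the interplay between the randomness of the partition and the standard kernel-smoothing arguments: the conditional variance and conditional mean are themselves random because $L(x,\theta)$ depends on $(\theta,\CJ)$, so one cannot apply Chebyshev naïvely. The device that resolves this is exactly Lemma~\ref{lem:convergenceconditioning}: it identifies the conditional deviation probability given the growing $\sigma$-algebra generated by $(\theta,\CJ_n)$ with that given the fixed $\sigma$-algebra generated by $(\theta,\CJ_\infty)$, so that one can pass to the limit inside the conditional expectation (bounding $\BP(\,\cdot\mid\theta,\CJ_\infty) \le \min\{1, \text{conditional variance}/\delta^2\}$ and using that the conditional variance bound $KC/(n_\CI\lambda(L))\to 0$ a.s.-along-subsequences, hence in probability by dominated convergence) and conclude unconditional convergence. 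A secondary technical point is verifying the moment bound $\BE[Y^2\1_{\{X\in L\}}\mid\theta,\CJ]\le K\,\BP(X\in L\mid\theta,\CJ)$ rigorously, which uses that $L(x,\theta)$ is $\sigma(\theta,\CJ)$-measurable and independent of the generic $(X,Y)$ drawn from $\CI$, together with the tower property and the pointwise bound $\BE[Y^2\mid X]\le K$. With these pieces assembled the proof reduces to bookkeeping.
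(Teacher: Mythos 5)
Your proof follows the paper's proof of Theorem~\ref{thm:conspart1} in all structural respects: the decomposition $T = \widehat m/\widehat f$, the bias/variance split of $\widehat m - m(x)$ conditional on $(\theta,\CJ)$, invocation of Lemma~\ref{lem:convergenceconditioning} to remove the $n$-dependence from the conditioning $\sigma$-algebra, and a final continuous-mapping step using $f(x)\ge\varepsilon>0$. The only cosmetic difference is in the variance term: the paper computes $\BE[\widehat m^2\mid\theta,\CJ]$ exactly (splitting into diagonal and off-diagonal sums) and shows it tends to $m(x)^2$, whereas you use the cruder but equally valid bound $\Var[\widehat m\mid\theta,\CJ]\le \tfrac{1}{n_\CI\lambda(L)^2}\BE[Y^2\1_{\{X\in L\}}\mid\theta,\CJ]\le KC/(n_\CI\lambda(L))$ and then apply (II).

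One justification you give is incorrect, however. You assert that continuity of $x\mapsto\BE[Y\mid X=x]$ ``follows from continuity of $\BE[Y^2\mid X=x]$ via Cauchy--Schwarz.'' Cauchy--Schwarz only yields the pointwise bound $|\BE[Y\mid X=x]|\le\sqrt{\BE[Y^2\mid X=x]}\le\sqrt{K}$, i.e.\ boundedness, not continuity: one can easily arrange $\BE[Y^2\mid X=x]\equiv 1$ while $\BE[Y\mid X=x]$ has a jump. What the bias argument really needs is that $x$ is a Lebesgue point of $m(\cdot)=\BE[Y\mid X=\cdot]f(\cdot)$, which is most naturally guaranteed by directly assuming $\BE[Y\mid X=x]$ continuous. (To be fair, the paper's own proof also uses $\tfrac{1}{\lambda(L)}\int_L m\to m(x)$ without spelling this out, while Assumption~\ref{asm:covariate} as stated only demands continuity of $\BE[Y^2\mid X=x]$ and of $f$; this is arguably a small omission on the paper's side. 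But the Cauchy--Schwarz step you offer as the fix does not work and should be replaced by an explicit continuity hypothesis on $\BE[Y\mid X=x]$.) Once this is repaired, the rest of your argument is sound and matches the paper's.
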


It is of interest to extend this result to stronger forms of convergence. The following result uses the result of the previous theorem to establish $L^p$-consistency.

\begin{thm}[$L_1$ and $L_p$ convergence of decision trees]\label{thm:L1consistency}
Under Assumptions \ref{asm:covariate}, \ref{asm:honesty} and \ref{asm:leaf}, it holds that
\begin{equation*}
    T(x, \theta) \overset{L^1}{\to} \BE[Y \mid X = x].
\end{equation*}
More generally, if we strengthen the moment condition of Assumption~\ref{asm:covariate} to $\BE[|Y|^{p + \delta} \mid X = x] \leq K$ for some constants $p \geq 1$ and $\delta, K > 0$, it holds that
\begin{equation*}
    T(x, \theta) \overset{L^p}{\to} \BE[Y \mid X = x].
\end{equation*}
\end{thm}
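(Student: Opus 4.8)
\medskip

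The plan is to upgrade the convergence in probability obtained in Theorem~\ref{thm:conspart1} to $L^p$ convergence via the Vitali convergence theorem: since $T(x,\theta) \overset{\BP}{\to} \BE[Y\mid X=x]$ and the limit is a constant, it suffices to show that the family $\{|T(x,\theta)|^p : n \in \BN\}$ is uniformly integrable. A convenient sufficient condition for this is the uniform bound $\sup_n \BE[|T(x,\theta)|^{p+\delta}] < \infty$, and this is precisely what the strengthened moment hypothesis is tailored to provide. The $L^1$ statement is then the special case $p=1$, $\delta=1$, since the bound $\BE[Y^2 \mid X=x] \le K$ already holds under Assumption~\ref{asm:covariate} and no further strengthening is needed there.

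To establish the uniform bound, first observe that on $\{N_L(x,\theta) \ge 1\}$ the predictor $T(x,\theta)$ is a convex combination of the responses $Y_i$ with $X_i \in L(x,\theta)$, so Jensen's inequality applied to $t \mapsto |t|^{p+\delta}$ gives
\begin{equation*}
    |T(x,\theta)|^{p+\delta} \le \frac{1}{N_L(x,\theta)}\sum_{i=1}^{n_\CI}\1_{\{X_i \in L(x,\theta)\}}|Y_i|^{p+\delta},
\end{equation*}
while $T(x,\theta)=0$ by convention on $\{N_L(x,\theta)=0\}$. Condition on $\CG := \sigma(\theta, \CJ)$. By Assumption~\ref{asm:honesty} the prediction sample $(X_1,Y_1),\dots,(X_{n_\CI},Y_{n_\CI})$ is iid with the original law and independent of $\CG$, whereas $L(x,\theta)$ is $\CG$-measurable; write $p_L := \BP(X \in L(x,\theta) \mid \CG) = \int_{L(x,\theta)} f$. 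Using exchangeability of the prediction sample given $\CG$ together with the leave-one-out identity $N_L(x,\theta) = 1 + N'$ valid on $\{X_1 \in L(x,\theta)\}$, where $N' \sim \mathrm{Bin}(n_\CI-1, p_L)$ is independent of $(X_1,Y_1)$ given $\CG$, one obtains
\begin{equation*}
    \BE\bigl[|T(x,\theta)|^{p+\delta} \mid \CG\bigr] \le n_\CI\, \BE\bigl[\1_{\{X_1 \in L(x,\theta)\}}|Y_1|^{p+\delta} \mid \CG\bigr]\,\BE\bigl[(1+N')^{-1} \mid \CG\bigr].
\end{equation*}
The first factor is at most $K p_L$ by the strengthened moment bound, and the classical reciprocal-moment identity $\BE[1/(1+\mathrm{Bin}(m,q))] = (1-(1-q)^{m+1})/((m+1)q) \le 1/((m+1)q)$ bounds the second by $1/(n_\CI p_L)$, the case $p_L=0$ being trivial. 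Hence $\BE[|T(x,\theta)|^{p+\delta} \mid \CG] \le K$, and taking expectations yields $\sup_n \BE[|T(x,\theta)|^{p+\delta}] \le K$. Uniform integrability of $\{|T(x,\theta)|^p\}_n$ follows, and the Vitali theorem delivers the claimed $L^p$ convergence.

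I expect the only genuinely delicate point to be the self-normalization by $N_L(x,\theta)$: because the same observations enter both the numerator and the count $N_L(x,\theta)$, one cannot bound $1/N_L(x,\theta)$ pointwise, and it is the leave-one-out decomposition combined with the binomial reciprocal-moment identity that makes the estimate go through. Honesty is essential here, since it is exactly what guarantees that, conditionally on the data-dependent leaf, the prediction sample is still iid from the target law; the convention on $\{N_L(x,\theta)=0\}$ must be kept in mind but contributes nothing to the bound.
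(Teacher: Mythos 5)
Your argument is correct and reaches the same bound $\BE[|T(x,\theta)|^{p+\delta}] \le K$, but via a genuinely different decomposition of the self-normalization. The paper conditions directly on the event $\{N_L(x,\theta)=m\}$ and invokes exchangeability to rewrite the conditional expectation as an average of exactly $m$ responses conditioned on falling in the leaf, after which the denominator $m$ is deterministic and Jensen collapses the problem immediately; the delicate $1/N_L$ term never appears as a genuine random factor. You instead condition only on $\CG=\sigma(\theta,\CJ)$, keep $N_L$ random, apply Jensen first, and then dispose of the $1/N_L$ factor through a leave-one-out identity $N_L = 1+N'$ with $N'\sim\mathrm{Bin}(n_\CI-1,p_L)$ conditionally independent of $(X_1,Y_1)$, together with the binomial reciprocal-moment identity $\BE[1/(1+\mathrm{Bin}(m,q))]\le 1/((m+1)q)$. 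Your route is arguably more explicit about exactly which $\sigma$-algebra is being conditioned on (the paper's conditioning on $\{N_L=m\}$ simultaneously involves $\CI$-counts and the $\CJ$-measurable leaf geometry, and the subsequent exchangeability step is stated somewhat informally), and it makes the cancellation $n_\CI \cdot p_L \cdot \frac{1}{n_\CI p_L}=1$ fully transparent; the price is the extra machinery of the leave-one-out trick and the binomial moment formula. You also explicitly handle the $\{N_L=0\}$ case and the degenerate $p_L=0$ case, which the paper leaves implicit. Both approaches otherwise share the same skeleton: Vitali convergence, uniform $(p+\delta)$-moment bound, convexity of $t\mapsto|t|^{p+\delta}$, and the tower property exploiting honesty.
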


\subsection{Strong consistency}

In this subsection, we investigate conditions for strong consistency of $T(x, \theta)$. The strategy is again to consider the components $\widehat{f}(x, \theta)$ and $\widehat{m}(x, \theta)$ separately and using the general conditioning results in the appendix.

\begin{prop}[Strong consistency of the density estimator]
\label{prop:densityestimatorstrongconvergence}
Assume that Assumptions \ref{asm:covariate}, \ref{asm:honesty} and \ref{asm:leaf} hold where convergence in probability in the latter has been replaced with a.s. convergence. If in addition,
\begin{equation*}
    \sum_{n_\CI = 1}^\infty \exp(-\delta n_\CI \lambda(L(x, \theta))) < \infty \quad \BP\text{-a.s.}
\end{equation*}
for any $\delta > 0$, then
\begin{equation*}
    \widehat{f}(x, \theta) \to f(x) \quad \BP\text{-a.s.}
\end{equation*}
\end{prop}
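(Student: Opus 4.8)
The plan is to follow the classical argument for strong consistency of kernel density estimators, writing
\[
\widehat{f}(x,\theta) - f(x) = \big(\widehat{f}(x,\theta) - \BE[\widehat{f}(x,\theta)\mid\theta,\CJ]\big) + \big(\BE[\widehat{f}(x,\theta)\mid\theta,\CJ] - f(x)\big)
\]
and showing that the stochastic part (the first bracket) and the bias part (the second bracket) each converge to $0$ $\BP$-almost surely.

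For the bias part, I would first invoke honesty: since the prediction sample $\CI$ is disjoint from $\CJ$ and $\theta$ is independent of the data, the covariates $X_1,\dots,X_{n_\CI}$ are independent of $(\theta,\CJ)$, while $L := L(x,\theta)$ and $\lambda(L)$ are $\sigma(\theta,\CJ)$-measurable, so
\[
\BE[\widehat{f}(x,\theta)\mid\theta,\CJ] = \frac{1}{\lambda(L)}\int_{L} f(y)\,dy .
\]
Because $x\in L$ and, by the almost-sure form of Assumption~\ref{asm:leaf}(I), each sidelength $b_{n,j}^x - a_{n,j}^x \to 0$, the rectangle $L$ has diameter tending to $0$; uniform continuity of $f$ on $[0,1]^d$ then gives $\sup_{y\in L}|f(y)-f(x)|\to 0$, hence $\big|\lambda(L)^{-1}\int_L f - f(x)\big|\le \sup_{y\in L}|f(y)-f(x)| \to 0$ $\BP$-a.s.

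For the stochastic part, fix $\delta>0$. By Lemma~\ref{lem:convergenceconditioning} I may replace the $n$-dependent outer conditioning on $\sigma(\theta,\CJ)$ by conditioning on the fixed $\sigma$-algebra $\sigma(\theta,\CJ_\infty)$. Conditionally on $(\theta,\CJ_\infty)$ the leaf $L$ is deterministic and $X_1,\dots,X_{n_\CI}$ are still i.i.d.\ with density $f$, so $N_L(x,\theta)\sim\mathrm{Bin}(n_\CI,p)$ with $p=\int_L f\le C\lambda(L)$ and $\widehat{f}(x,\theta)-\BE[\widehat{f}(x,\theta)\mid\theta,\CJ] = (N_L(x,\theta)-n_\CI p)/(n_\CI\lambda(L))$. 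The key point is that $\Var(N_L(x,\theta)) = n_\CI p(1-p)\le C n_\CI\lambda(L)$ is of the same (small) order as the required deviation level $n_\CI\lambda(L)$, so a Bernstein-type inequality yields
\[
\BP\big(|\widehat{f}(x,\theta)-\BE[\widehat{f}(x,\theta)\mid\theta,\CJ]|>\delta \mid \theta,\CJ_\infty\big) \le 2\exp\big(-c_\delta\, n_\CI\lambda(L(x,\theta))\big),
\]
with $c_\delta>0$ depending only on $\delta$ and $C$; note that a plain Hoeffding bound would only give the weaker exponent $c\delta^2 n_\CI\lambda(L)^2$, which is why the variance-sensitive inequality is needed and why the hypothesis is stated with $\lambda(L)$ rather than $\lambda(L)^2$.

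It then remains to sum over $n$ and apply Borel--Cantelli. Using $n_\CI=\Theta(n)$ together with the fact that the summability hypothesis holds for \emph{every} $\delta>0$ (which lets one pass from the stated sum over $n_\CI$ to a sum over $n$), the bound above gives $\sum_{n}\BP(|\widehat{f}(x,\theta)-\BE[\widehat{f}(x,\theta)\mid\theta,\CJ]|>\delta \mid \theta,\CJ_\infty)<\infty$ $\BP$-a.s. Applying the Borel--Cantelli lemma conditionally on $(\theta,\CJ_\infty)$ — no independence across $n$ is needed — and taking expectations shows $\limsup_n|\widehat{f}(x,\theta)-\BE[\widehat{f}(x,\theta)\mid\theta,\CJ]|\le\delta$ $\BP$-a.s.; intersecting over $\delta=1/k$, $k\in\BN$, gives that the stochastic part vanishes a.s., and adding the bias estimate finishes the proof. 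I expect the main obstacle to be the careful conditioning: justifying via Lemma~\ref{lem:convergenceconditioning} that one may freeze the leaf and keep the prediction sample i.i.d.\ before applying the concentration inequality, and handling the bookkeeping between the sum over $n_\CI$ appearing in the hypothesis and the sum over $n$ needed for Borel--Cantelli.
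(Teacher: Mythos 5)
Your proposal is correct and follows essentially the same route as the paper: the same bias/fluctuation decomposition, the same Bernstein-type bound giving an exponent of order $n_\CI\lambda(L(x,\theta))$ (with the same observation that a Hoeffding bound would only give $\lambda(L)^2$), followed by the conditioning Lemma~\ref{lem:convergenceconditioning} and a conditional Borel--Cantelli argument. Your explicit remarks on the $n_\CI$-vs-$n$ bookkeeping and on why the variance-sensitive inequality is needed are accurate refinements of the paper's more terse presentation, but do not represent a different method.
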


In the case where the response variable $Y$ is bounded, one may essentially copy the proof of Proposition~\ref{prop:densityestimatorstrongconvergence} to show that $\widehat{m}(x, \theta) \to m(x)$ a.s. and thus that $T(x, \theta) \to \BE[Y \mid X = x]$ a.s. The following theorem establishes the same result under significantly weaker assumptions.

\begin{thm}[Strong consistency of decision trees]\label{thm:strongconsistency}
Assume that Assumptions \ref{asm:covariate}, \ref{asm:honesty} and \ref{asm:leaf} hold, where convergence in probability has been replaced by a.s. convergence. If one of the following sets of conditions hold, we have
\begin{equation*}
    T(x, \theta) \to \BE[Y \mid X = x] \quad \BP\text{-a.s.}
\end{equation*}
\begin{enumerate}
    \item[(i)] $\BE[Y^4] < \infty$ and 
    \begin{equation*}
        \sum_{n_\CI = 1}^\infty \frac{1}{n_\CI^2 \lambda(L(x, \theta))^2} < \infty \quad \BP\text{-a.s.}
    \end{equation*}
    \item[(ii)] The moment generating function
    \begin{equation*}
        \kappa_{Y\mid X= x}(t) := \BE[e^{tY} \mid X = x]
    \end{equation*}
    is finite in a neighbourhood $(-c, c)$ of zero which does not depend on $x \in [0, 1]^d$, $x \mapsto \kappa_{Y \mid X =x}(t)$ is continuous for all $t \in (-c, c)$ and
    \begin{equation*}
        \sum_{n_\CI = 1}^\infty \exp(-\delta n_\CI \lambda(L(x, \theta))) < \infty \quad \BP\text{-a.s.}
    \end{equation*}
    for any $\delta > 0$.
\end{enumerate}
\end{thm}

\subsection{Uniform consistency}

It is of interest to establish uniform consistency in the test value $x \in [0,1]^d$. The strategy is to establish uniform consistency in $\widehat{m}$ and $\widehat{f}$, which then translates into uniform consistency of $T$ due to compactness of the covariate domain. The following theorem contains both a weak and a strong uniform convergence statement for the components. In the theorem, the infimum over all possible leaf volumes,
\begin{equation*}
    \underline{v}_n := \inf_{x \in [0, 1]^d} \lambda(L(x, \theta))
\end{equation*}
is an essential quantity, and describing the exact behaviour of this infimum is essential for establishing uniform convergence results for concrete splitting schemes. We make the remark that, despite the supremum running over an uncountable set, this infimum is actually measurable, since we only have a finite number of leaves.

The only difference in establishing weak and strong uniform convergence is the requirement on the asymptotic behaviour of the quantity $\underline{v}_n$ and an auxiliary sequence $\{M_n\}$ which is relevant only for the case where $Y$ is unbounded, see also the corollary below.

\begin{thm}[Uniform consistency of decision trees]\label{thm:uniformconsistency}

Let Assumptions \ref{asm:covariate} and \ref{asm:honesty} hold, and assume furthermore that $f$ and $m$ are Lipschitz continuous and that the moment generating function $\kappa_Y$ of $Y$ exists in a neighbourhood of zero.
\begin{enumerate}
    \item[(i)] (Weak uniform consistency): If
    \begin{equation*}
        \sup_{x \in [0, 1]^d} \{b_{n, j}^x - a_{n, j}^x\} \overset{\BP}{\to} 0
    \end{equation*}
    for all $j = 1, \dots, d$ and there exists a sequence $\{M_n\}$ such that
    \begin{equation*}
        \frac{e^{-tM_n}}{\underline{v}_n} \overset{\BP}{\to} 0 \quad \text{and} \quad \max\Big\{\Big(\frac{\underline{v}_n \sqrt{n_\CI}}{M_n} \Big)^{4d} , 1\Big\} e^{-\delta \underline{v}_n^2 n_\CI/M_n^2} \overset{\BP}{\to} 0 
    \end{equation*}
    for any $t, \delta > 0$, then the tree estimator $T$ is weakly uniformly consistent in $x$,
    \begin{equation*}
        \sup_{x \in [0, 1]^d} |T(x, \theta) - \BE[Y \mid X = x]| \overset{\BP}{\to} 0.
    \end{equation*}
    \item[(ii)] (Strong uniform consistency): If
    \begin{equation*}
        \sup_{x \in [0, 1]^d} \{b_{n, j}^x - a_{n, j}^x\} \to 0 \quad \BP\text{-a.s.}
    \end{equation*}
    for all $j = 1, \dots, d$ and there exists a sequence $\{M_n\}$ such that
    \begin{equation*}
        \sum_{n = 1}^\infty \frac{e^{-tM_n}}{\underline{v}_n} < \infty \quad \text{and} \quad \sum_{n = 1}^\infty \max\Big\{\Big(\frac{\underline{v}_n \sqrt{n_\CI}}{M_n} \Big)^{4d} , 1\Big\} e^{-\delta \underline{v}_n^2 n_\CI/M_n^2} < \infty \quad \BP\text{-a.s.}
    \end{equation*}
    for any $t, \delta > 0$, then the tree estimator $T$ is strongly uniformly consistent in $x$:
    \begin{equation*}
        \sup_{x \in [0, 1]^d} |T(x, \theta) - \BE[Y \mid X = x]| \to 0 \quad \BP\text{-a.s.}
    \end{equation*}
\end{enumerate}
\end{thm}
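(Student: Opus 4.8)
The plan is to establish the statements for $\widehat f(\cdot,\theta)$ and $\widehat m(\cdot,\theta)$ in parallel, working conditionally on the splitting information and imitating the classical uniform-consistency proof for kernel regression, which separates a deterministic bias from a uniform stochastic fluctuation. Fix $n$ and condition on $\theta,\CJ$: by Assumption~\ref{asm:honesty} the partition of $[0,1]^d$ into leaves is then a fixed finite family of axis-aligned boxes of volume at least $\underline v_n$, the sample $(X_i,Y_i)_{i\in\CI}$ is still i.i.d.\ from the original law, and
\[
\BE[\widehat f(x,\theta)\mid\theta,\CJ]=\frac{1}{\lambda(L(x,\theta))}\int_{L(x,\theta)}f(u)\,du,\qquad \BE[\widehat m(x,\theta)\mid\theta,\CJ]=\frac{1}{\lambda(L(x,\theta))}\int_{L(x,\theta)}m(u)\,du .
\]
I would then use, for $g\in\{f,m\}$, the triangle inequality $\sup_x|\widehat g(x,\theta)-g(x)|\le \sup_x|\widehat g(x,\theta)-\BE[\widehat g(x,\theta)\mid\theta,\CJ]|+\sup_x|\BE[\widehat g(x,\theta)\mid\theta,\CJ]-g(x)|$, bound the two pieces separately, and finally pass to unconditional statements. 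In case (i) it suffices that each bound tends to $0$ in probability. In case (ii) I would condition throughout on the \emph{fixed} $\sigma$-algebra $\sigma(\theta,\CJ_\infty)$ --- justified by the device behind Lemma~\ref{lem:convergenceconditioning}, and with respect to which $\underline v_n$, the leaves at stage $n$, and $M_n$ are measurable while $(X_i,Y_i)_{i\in\CI}$ remains i.i.d.\ --- so that a.s.\ summability of the conditional bounds plus the (conditional) Borel--Cantelli lemma upgrades the conclusions to a.s.\ convergence, which then holds unconditionally.

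The bias piece is immediate: if $u$ and $x$ lie in the same leaf then $|u-x|\le \mathrm{diam}\,L(x,\theta)\le \sqrt d\,\max_j\sup_{x'\in[0,1]^d}\{b_{n,j}^{x'}-a_{n,j}^{x'}\}$, so the assumed Lipschitz continuity of $f$ and $m$ gives
\[
\sup_{x\in[0,1]^d}\bigl|\BE[\widehat g(x,\theta)\mid\theta,\CJ]-g(x)\bigr|\le \mathrm{Lip}(g)\,\sqrt d\,\max_{j}\sup_{x'\in[0,1]^d}\{b_{n,j}^{x'}-a_{n,j}^{x'}\}\quad(g\in\{f,m\}),
\]
which tends to $0$ in probability under (i) and a.s.\ under (ii), by hypothesis (and Lipschitzness entails the continuity required in Assumption~\ref{asm:covariate}). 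For the stochastic piece, note that $x\mapsto\widehat g(x,\theta)-\BE[\widehat g(x,\theta)\mid\theta,\CJ]$ is constant on each leaf, so $\sup_{x\in[0,1]^d}$ reduces to a measurable maximum over the finitely many leaves; for a leaf $L$ the quantity equals $\tfrac1{n_\CI\lambda(L)}\bigl|\sum_{i\in\CI}(W_i-\BE W_i)\bigr|$ with $W_i=\1_{\{X_i\in L\}}$ for $\widehat f$ and $W_i=Y_i\1_{\{X_i\in L\}}$ for $\widehat m$. For $\widehat f$ the $W_i$ lie in $[0,1]$, so conditionally Hoeffding's inequality bounds the probability that this exceeds $\varepsilon$ by $2\exp(-2\varepsilon^2 n_\CI\lambda(L)^2)\le 2\exp(-2\varepsilon^2 n_\CI\underline v_n^2)$.

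For $\widehat m$ I would first truncate: split $Y_i=Y_i\1_{\{|Y_i|\le M_n\}}+Y_i\1_{\{|Y_i|>M_n\}}$ and $\widehat m=\widehat m_{\le}+\widehat m_{>}$ accordingly. The tail part is controlled uniformly in $x$, without a union bound, via $\sup_x|\widehat m_{>}(x,\theta)|\le \tfrac1{n_\CI\underline v_n}\sum_{i\in\CI}|Y_i|\1_{\{|Y_i|>M_n\}}$ and $\sup_x|\BE[\widehat m_{>}(x,\theta)\mid\theta,\CJ]|\le \underline v_n^{-1}\BE[|Y|\1_{\{|Y|>M_n\}}]$; since $\kappa_Y$ is finite near $0$, a Chernoff bound yields $\BE[|Y|\1_{\{|Y|>M_n\}}]\le C e^{-tM_n}$ for some $t>0$, so the first displayed hypothesis together with conditional Markov and Borel--Cantelli makes both terms vanish (resp.\ a.s.). For the truncated part the summands $Y_i\1_{\{|Y_i|\le M_n\}}\1_{\{X_i\in L\}}$ take values in $[-M_n,M_n]$, so Hoeffding gives, per leaf, $2\exp\!\bigl(-\varepsilon^2 n_\CI\lambda(L)^2/(2M_n^2)\bigr)\le 2\exp\!\bigl(-\varepsilon^2 n_\CI\underline v_n^2/(2M_n^2)\bigr)$, and it remains to take a union bound over the leaves. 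Equivalently, this is a uniform-deviation bound over the class of leaf-boxes, a VC class of dimension $2d$; discretising the box-boundaries at the resolution set by the per-box Hoeffding bound and controlling the discretisation error through $f\le C$ yields a bound of the form $\max\{(\underline v_n\sqrt{n_\CI}/M_n)^{4d},1\}\exp(-\delta n_\CI\underline v_n^2/M_n^2)$, the exponent $4d$ reflecting the $2d$ degrees of freedom of axis-aligned boxes; the analogous simpler estimate (with $M_n$ replaced by a constant) covers $\widehat f$.

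Collecting the bias, truncated-variance and tail contributions, the hypotheses in (i) (resp.\ (ii)) assert exactly that the resulting stochastic bounds tend to $0$ in probability (resp.\ are a.s.\ summable), which gives the claimed uniform convergences --- directly in case (i), and after conditional Borel--Cantelli inside $\sigma(\theta,\CJ_\infty)$ and removal of the conditioning in case (ii). I expect the main obstacle to be precisely the last union-bound / empirical-process step for $\widehat m$: identifying the correct discretisation resolution for the data-adaptive family of leaf-boxes, and verifying that the discretisation error is negligible after division by $\lambda(L)\ge\underline v_n$, so that the clean rate $\max\{(\underline v_n\sqrt{n_\CI}/M_n)^{4d},1\}e^{-\delta n_\CI\underline v_n^2/M_n^2}$ of the theorem is what emerges. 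The bias estimate and the heavy-tail handling via the moment generating function are, by contrast, routine transcriptions of the kernel-smoothing template.
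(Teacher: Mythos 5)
Your plan matches the paper's proof almost step for step: the bias/variance decomposition conditional on $\theta,\CJ$, the Lipschitz bound on $\sup_x\bigl|\BE[\widehat g(x,\theta)\mid\theta,\CJ]-g(x)\bigr|$ in terms of $\max_j\sup_x\{b_{n,j}^x-a_{n,j}^x\}$, the truncation $Y=Y\1_{\{|Y|\le M_n\}}+Y\1_{\{|Y|>M_n\}}$, the moment-generating-function tail bound producing the $e^{-tM_n}/\underline v_n$ term, and the transfer from conditional to unconditional convergence via Lemma~\ref{lem:convergenceconditioning} together with the conditional Borel--Cantelli device. The one step you flagged as the ``main obstacle'' --- turning the per-leaf Hoeffding estimate into a uniform bound over the data-adaptive partition --- is handled in the paper not by a discretisation of your own making (nor by a union bound over the finitely many leaves, which would produce a factor $\sim 1/\underline v_n$ rather than the $(\underline v_n\sqrt{n_\CI}/M_n)^{4d}$ in the hypotheses), but by dominating $\sup_{x}$ by $\sup_{A\in\mathcal A}$ over the fixed class $\mathcal A$ of all half-open boxes, recognising $\{(x,y)\mapsto\1_A(x)y:A\in\mathcal A\}$ as a pointwise-separable VC class of index $2d$, and invoking Theorems 2.6.7 and 2.14.28 of van der Vaart and Wellner to obtain precisely the $\max\{(\underline v_n\sqrt{n_\CI}/M_n)^{4d},1\}\,e^{-\delta\underline v_n^2 n_\CI/M_n^2}$ bound; with that substitution your argument is exactly the paper's.
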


The following corollary follows immediately from the proof of the previous theorem.

\begin{cor}[Uniform consistency of decision trees with bounded responses]
Let Assumptions \ref{asm:covariate} and \ref{asm:honesty} hold, and assume furthermore that $f$ and $m$ are Lipschitz continuous and that the response $Y$ is bounded.
\begin{enumerate}
    \item (Weak uniform consistency): If
    \begin{equation*}
        \sup_{x \in [0, 1]^d} \{b_{n, j}^x - a_{n, j}^x\} \overset{\BP}{\to} 0
    \end{equation*}
    for all $j = 1, \dots, d$ and
    \begin{equation*}
        \max\{(\underline{v}_n \sqrt{n_\CI})^{4d}, 1\}e^{-\delta \underline{v}_n^2 n_\CI} \overset{\BP}{\to} 0
    \end{equation*}
    for any $\delta > 0$, then the tree estimator $T$ is weakly uniformly consistent in $x$,
    \begin{equation*}
        \sup_{x \in [0, 1]^d} |T(x, \theta) - \BE[Y \mid X = x]| \overset{\BP}{\to} 0.
    \end{equation*}
    \item (Strong uniform consistency): If
    \begin{equation*}
        \sup_{x \in [0, 1]^d} \{b_{n, j}^x - a_{n, j}^x\} \to 0 \quad \BP\text{-a.s}
    \end{equation*}
    for all $j = 1, \dots, d$ and
    \begin{equation*}
        \sum_{n = 1}^\infty \max\{(\underline{v}_n \sqrt{n_\CI})^{4d}, 1\}e^{-\delta \underline{v}_n^2 n_\CI} < \infty \quad \BP\text{-a.s.}
    \end{equation*}
    for any $\delta > 0$, then the tree estimator $T$ is strongly uniformly consistent in $x$,
    \begin{equation*}
        \sup_{x \in [0, 1]^d} |T(x, \theta) - \BE[Y \mid X = x]| \to 0 \quad \BP\text{-a.s.}
    \end{equation*}
\end{enumerate}
\end{cor}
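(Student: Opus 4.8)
The plan is to read the corollary straight off the proof of Theorem~\ref{thm:uniformconsistency}, specialised to a bounded response. Fix an almost sure bound $B$ for $|Y|$, i.e. $|Y|\le B$ $\BP$-a.s. Then the moment generating function $\kappa_Y$ is finite on all of $\BR$, so the corresponding hypothesis of Theorem~\ref{thm:uniformconsistency} holds automatically; the only thing left is to produce an admissible truncation sequence $\{M_n\}$, and the natural choice is the constant sequence $M_n\equiv B$.

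Recall that in the proof of Theorem~\ref{thm:uniformconsistency} the quantity $\sup_{x}|\widehat m(x,\theta)-m(x)|$ (and likewise for $\widehat f$) is decomposed into (a) a deterministic bias term, bounded by means of the Lipschitz continuity of $m$ and $f$ together with $\sup_{x}\{b_{n,j}^x-a_{n,j}^x\}\to 0$; (b) a \emph{tail} term collecting the contributions of the summands $Y_i\1_{\{|Y_i|>M_n\}}$, whose control is exactly what forces the hypotheses $e^{-tM_n}/\underline v_n\to 0$ and $\sum_n e^{-tM_n}/\underline v_n<\infty$; and (c) a \emph{bulk} stochastic term collecting the truncated summands $Y_i\1_{\{|Y_i|\le M_n\}}$, handled by discretising $[0,1]^d$ and applying a Hoeffding-type concentration bound with effective range $M_n$, which is what produces the hypotheses involving $\max\{(\underline v_n\sqrt{n_\CI}/M_n)^{4d},1\}\,e^{-\delta\underline v_n^2 n_\CI/M_n^2}$.

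With $M_n\equiv B$ every summand satisfies $|Y_i|\le M_n$ a.s., so the tail term (b) is identically zero; hence the hypotheses on $e^{-tM_n}/\underline v_n$ are simply not needed. For the bulk term (c) the argument goes through unchanged with the uniform bound $B$ in place of $M_n$, and, tracking constants, the requirement $\max\{(\underline v_n\sqrt{n_\CI}/M_n)^{4d},1\}\,e^{-\delta\underline v_n^2 n_\CI/M_n^2}\to 0$ becomes $\max\{B^{-4d}(\underline v_n\sqrt{n_\CI})^{4d},1\}\,e^{-(\delta/B^2)\,\underline v_n^2 n_\CI}\to 0$. Since $\delta>0$ is arbitrary so is $\delta/B^2$, and the fixed positive factor $B^{-4d}$ inside the maximum (which we may bound by $\max\{B^{-4d},1\}$) does not affect whether the product tends to zero; this is therefore equivalent to the stated condition $\max\{(\underline v_n\sqrt{n_\CI})^{4d},1\}\,e^{-\delta\underline v_n^2 n_\CI}\to 0$. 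The identical reduction applies to the summable (a.s.) versions, giving part~2. Combining (a) and (c), the proof of Theorem~\ref{thm:uniformconsistency} then delivers $\sup_{x}|\widehat f(x,\theta)-f(x)|\to 0$ and $\sup_{x}|\widehat m(x,\theta)-m(x)|\to 0$, in probability under the hypotheses of part~1 and $\BP$-a.s. under those of part~2.

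There is no genuine obstacle here beyond bookkeeping; the only two points that deserve an explicit word are that the tail term truly \emph{vanishes} (not merely becomes small), which is immediate from $|Y|\le B$ a.s., and that the constant $B$ can be absorbed into $\delta$ and into the polynomial prefactor without changing the limiting behaviour, which is clear because $\delta$ ranges over all of $(0,\infty)$ and $B^{-4d}$ is a fixed positive constant.
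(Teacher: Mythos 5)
Your proof is correct and fills in exactly the details the paper omits when it says the corollary \enquote{follows immediately from the proof of the previous theorem.} You correctly identify that one cannot simply invoke Theorem~\ref{thm:uniformconsistency} with the constant sequence $M_n\equiv B$: since $\underline v_n\le 1$ always, the condition $e^{-tM_n}/\underline v_n\to 0$ would fail for a constant $M_n$. The right move, which you make, is to go into the proof, observe that with $|Y|\le B$ a.s.\ the truncation decomposition is unnecessary because $Y_i\1_{\{|Y_i|>B\}}=0$ identically, and then apply the VC-class concentration bound directly to the bounded summands. Absorbing the fixed constants $B$, $D(d)$, $\sqrt{4d}$ into the prefactor of the maximum and into $\delta$ (using that the hypothesis is quantified over all $\delta>0$) yields the stated conditions. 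The only imprecision is minor: in the paper's Chernoff bound the exponent is $-2\delta^2\underline v_n^2 n_\CI/M^2$ (quadratic, not linear, in the deviation $\delta$), so the reduction should produce $\delta/B^2$ after substituting $\delta'^2\mapsto\delta$, but you silently wrote a linear power; this is harmless since both $\delta^2/B^2$ and $\delta/B^2$ sweep out all of $(0,\infty)$ as $\delta$ does, which is all the hypothesis needs.
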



\subsection{General splitting criteria}

We now investigate criteria on the tree-growing procedure which ensure that the conditions (I) and (II) of Assumption~\ref{asm:leaf} hold. We start by considering general restrictions on the splitting rule and later consider more concrete splitting schemes. 



Our approach is inspired by the regularity assumptions of Definitions 3 and 4 in~\cite{WagerAthey}. As we derive stronger convergence results, we gradually make more restrictive assumptions. The following proposition gives an equivalent criterion for condition (I) of Assumption~\ref{asm:leaf} to hold. We let
\begin{equation*}
    N_L^j(x, \theta) = \sum_{i = 1}^{n_\CI} \1_{\{X_i^j \in (a_{n, j}^x, b_{n, j}^x]\}}
\end{equation*}
denote the number of $\CI$-observations whose $j$'th coordinate falls into the $j$'th interval of the leaf $L(x, \theta)$. 

\begin{prop}[Consistency under balance condition]\label{prop:regularI}
Under Assumptions \ref{asm:covariate} and \ref{asm:honesty}, we have that condition (I) of Assumption~\ref{asm:leaf} holds if and only if 
\begin{equation}\label{eq:jproportion}
    \frac{N_L^j(x, \theta)}{n_\CI} \overset{\BP}{\to} 0 \quad \text{for all } j = 1, \dots, d.
\end{equation}
The equivalence also holds with convergence in probability replaced with almost sure convergence. 
\end{prop}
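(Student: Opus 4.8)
The plan is to compare the count $N_L^j(x,\theta)$ with the random probability mass
\begin{equation*}
    p_j \;:=\; \mu_j\big((a_{n,j}^x,\,b_{n,j}^x]\big)
\end{equation*}
that the $j$th coordinate of a generic $\CI$-observation lands in the $j$th side of the leaf, where $\mu_j$ denotes the marginal law of $X^j$. The honesty assumption is exactly what makes this tractable: the leaf $L(x,\theta)$, hence the interval $(a_{n,j}^x, b_{n,j}^x]$ and the number $p_j$, is a function of $\CJ$ and $\theta$ only, while the $\CI$-sample is independent of $(\CJ,\theta)$; so conditionally on $\sigma(\CJ,\theta)$ the variable $N_L^j(x,\theta)$ is $\mathrm{Binomial}(n_\CI,p_j)$. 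Moreover the marginal density $f_j(t)=\int_{[0,1]^{d-1}} f(t,u)\,du$ inherits the two-sided bound $\varepsilon \le f_j \le C$ from Assumption~\ref{asm:covariate}, since the integration domain has unit volume; hence, all leaves being contained in $[0,1]^d$,
\begin{equation*}
    \varepsilon\,(b_{n,j}^x - a_{n,j}^x) \;\le\; p_j \;\le\; C\,(b_{n,j}^x - a_{n,j}^x).
\end{equation*}
In particular $p_j \to 0$ in probability (resp.\ a.s.) if and only if $b_{n,j}^x - a_{n,j}^x \to 0$ in probability (resp.\ a.s.), so it will suffice to prove that $N_L^j(x,\theta)/n_\CI - p_j \to 0$ both in probability and almost surely.

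For this concentration step I would condition on $\sigma(\CJ,\theta)$ and apply Hoeffding's inequality to the conditionally Binomial variable: for every $\delta>0$,
\begin{equation*}
    \BP\big(|N_L^j(x,\theta)/n_\CI - p_j| > \delta \,\big|\, \CJ,\theta\big) \;\le\; 2e^{-2 n_\CI \delta^2},
\end{equation*}
a bound free of $(\CJ,\theta)$. Taking expectations, $\BP(|N_L^j(x,\theta)/n_\CI - p_j| > \delta) \le 2e^{-2n_\CI\delta^2}\to 0$ because $n_\CI = \Theta(n)\to\infty$, which gives the in-probability statement (alternatively, Chebyshev with the conditional variance $p_j(1-p_j)/n_\CI \le 1/(4n_\CI)$ suffices). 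Since $n_\CI=\Theta(n)$, the same exponential bound is summable in $n$, so Borel--Cantelli --- applied for each $\delta=1/k$ and intersected over $k\in\BN$ --- yields $N_L^j(x,\theta)/n_\CI - p_j \to 0$ $\BP$-a.s.

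With both ingredients in hand the equivalence follows in one stroke for both modes of convergence: writing $N_L^j/n_\CI = p_j + (N_L^j/n_\CI - p_j)$ and $p_j = N_L^j/n_\CI - (N_L^j/n_\CI - p_j)$ and using that the error term vanishes, one gets $N_L^j(x,\theta)/n_\CI \to 0 \iff p_j \to 0 \iff b_{n,j}^x - a_{n,j}^x \to 0$ for each $j$, which is precisely the asserted equivalence with \eqref{eq:jproportion}. I expect the only genuinely delicate point to be the almost-sure half: the interval $(a_{n,j}^x, b_{n,j}^x]$, and thus the success probability $p_j$, varies with $n$ in a data-dependent way, so no textbook strong law applies directly; the resolution is that the Hoeffding bound above is uniform in $(\CJ,\theta)$ and summable in $n$, which is exactly what Borel--Cantelli requires. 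Everything else --- the marginal density bound and the conditional Binomial representation afforded by honesty --- is routine.
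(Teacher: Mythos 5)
Your proposal is correct and follows essentially the same route as the paper: condition on $(\theta,\CJ)$, use honesty to identify $N_L^j(x,\theta)$ as conditionally $\mathrm{Binomial}(n_\CI,p_j)$, obtain a deterministic and summable tail bound, and then translate between $p_j$ and the side length via the two-sided density bound. The only (inconsequential) difference is the concentration inequality: you invoke Hoeffding to get $2e^{-2n_\CI\delta^2}$, whereas the paper uses a fourth-order Markov inequality to get the polynomial bound $3/(\delta^4 n_\CI^2)$; both are deterministic and summable, so both feed Borel--Cantelli equally well.
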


Later we present stronger conditions which imply~\eqref{eq:jproportion}. Intuitively, condition (I) of Assumption~\ref{asm:leaf} ensures that the bias decreases as $n \to \infty$, while (II) ensures that the variance of the predictions also decrease sufficiently fast. In order for (II) to hold, we need to ensure that the leaves do not shrink too quickly. The idea is to require, with probability increasing to one, a minimal number of observations $k_n$ in a node. More precisely, we consider the following set of assumptions.
\begin{asm}[Minimal node size]\label{asm:minimalnodesize}
Let $(k_n)$ denote a deterministic sequence. Define $a_n(x) := \BP(N_L(x, \theta) \geq k_n)$ and $a_n := \inf_{x \in [0, 1]^d}a_n(x)$. We then consider the following set of conditions on $k_n$, $a_n(x)$ and $a_n$:
\begin{enumerate}
    \item[(i)] $a_n(x) \to 1$,
    \item[(ii)] $\sum_{n = 1}^\infty (1 - a_n(x)) < \infty$,
    \item[(iii)] $a_n \to 1$,
    \item[(iv)] $\sum_{n = 1}^\infty n_\CI(1 - a_n) < \infty$.
\end{enumerate}
\end{asm}
The following result shows that with the proper restriction on $k_n$, we can ensure that (II) holds.


\begin{prop}[Consistency under minimal node size]\label{prop:regularII}
If a tree is honest in the sense of Assumption~\ref{asm:honesty} and satisfies (i) of Assumption \ref{asm:minimalnodesize} where $k_n$ satisfies
\begin{equation*}
    n_\CI^{4d}e^{-k_n^2/2n_\CI} \to 0,
\end{equation*}
then (II) of Assumption~\ref{asm:leaf} holds. If furthermore (ii) of Assumption \ref{asm:minimalnodesize} holds and $k_n$ satisfies
\begin{equation*}
    \sum_{n_\CI = 1}^\infty \frac{n_\CI^{4d}}{e^{k_n^2/2n_\CI}} < \infty,
\end{equation*}
then (II) of Assumption~\ref{asm:leaf} holds almost surely.
\end{prop}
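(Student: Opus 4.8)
The plan is to bound the volume of the prediction leaf $L(x,\theta)$ from below by exploiting the minimal-node-count constraint~\eqref{eq:minnodesize} together with the density upper bound $f\le C$ from Assumption~\ref{asm:covariate}, through a binomial large-deviation estimate combined with a union bound over a \emph{data-independent} family of candidate leaves. Since part~(II) of Assumption~\ref{asm:leaf} asks that $n_\CI\lambda(L(x,\theta))\overset{\BP}{\to}\infty$, it suffices to show $\BP(n_\CI\lambda(L(x,\theta))\le M)\to 0$ for each fixed $M>0$ (and, for the a.s.\ statement, that these probabilities are summable in $n$). First I would observe that $n_\CI^{4d}e^{-k_n^2/2n_\CI}\to0$ forces $k_n\to\infty$. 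The core estimate is then: for any \emph{fixed} box $B\subseteq[0,1]^d$ with $\lambda(B)\le 2M/n_\CI$ one has $\BP(X\in B)=\int_B f\le C\lambda(B)\le 2CM/n_\CI$, hence $\BE\bigl[\sum_{i=1}^{n_\CI}\1_{\{X_i\in B\}}\bigr]=n_\CI\,\BP(X\in B)\le 2CM$, which is at most $k_n/2$ once $n$ is large; since the indicators are i.i.d.\ and bounded by $1$, Hoeffding's inequality gives
\begin{equation*}
\BP\Bigl(\textstyle\sum_{i=1}^{n_\CI}\1_{\{X_i\in B\}}\ge k_n\Bigr)\le\BP\Bigl(\textstyle\sum_{i=1}^{n_\CI}\bigl(\1_{\{X_i\in B\}}-\BP(X\in B)\bigr)\ge k_n/2\Bigr)\le e^{-k_n^2/2n_\CI}.
\end{equation*}

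The remaining work is to reduce the random leaf to such fixed boxes. I would cover $[0,1]^d$ by the regular grid of mesh $n_\CI^{-2}$ and, for a box $B\ni x$ with $\lambda(B)\le M/n_\CI$, let $\lceil B\rceil$ be the smallest grid-aligned box containing $B$; then $x\in\lceil B\rceil$, the count satisfies $\sum_i\1_{\{X_i\in\lceil B\rceil\}}\ge\sum_i\1_{\{X_i\in B\}}$, and $\lambda(\lceil B\rceil)\le\lambda(B)+O(d\,n_\CI^{-2})\le 2M/n_\CI$ for $n$ large. Applying this with $B=L(x,\theta)$: on $\{n_\CI\lambda(L(x,\theta))\le M\}$ the leaf is such a box, and by~\eqref{eq:minnodesize} it contains at least $k_n$ of the $\CI$-observations, so this event is contained in the union, over the family $\mathcal{B}_n$ of grid-aligned boxes $B'\ni x$ with $\lambda(B')\le 2M/n_\CI$, of the events $\{\sum_i\1_{\{X_i\in B'\}}\ge k_n\}$. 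Since $\mathcal{B}_n$ is data-independent with $|\mathcal{B}_n|$ bounded by the total number of grid-aligned sub-boxes, which is at most $n_\CI^{4d}$ (the grid has at most $n_\CI^{2}+1$ nodes per coordinate), the core estimate and a union bound yield
\begin{equation*}
\BP\bigl(n_\CI\lambda(L(x,\theta))\le M\bigr)\le n_\CI^{4d}e^{-k_n^2/2n_\CI}\longrightarrow 0,
\end{equation*}
which is~(II) in probability. Under the stronger hypothesis $\sum_n n_\CI^{4d}e^{-k_n^2/2n_\CI}<\infty$ the same bound makes $\sum_n\BP(n_\CI\lambda(L(x,\theta))\le M)$ finite for every $M$, so by Borel--Cantelli $n_\CI\lambda(L(x,\theta))>M$ for all large $n$ almost surely; intersecting over $M\in\BN$ gives $n_\CI\lambda(L(x,\theta))\to\infty$ almost surely, i.e.\ (II) a.s.

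The main obstacle is the reduction in the second step: replacing the data-adaptive leaf by a fixed, only polynomially large family of candidate boxes so that the binomial tail bound can be applied term by term. A naive attempt to condition on $(\CJ,\theta)$ and treat $N_L(x,\theta)$ as a fresh binomial count would be circular, since the leaf is selected so as to meet the count constraint~\eqref{eq:minnodesize}; the grid discretisation sidesteps this and is what produces the exponent $4d$ (via $n_\CI^{2}$ grid lines per coordinate and $2d$ endpoint directions). The second point to handle with care is restricting the union to small-volume boxes: only there does the threshold $k_n$ sit in the genuine large-deviation regime of $\sum_i\1_{\{X_i\in B\}}$, whereas for large leaves the count typically exceeds $k_n$ comfortably and no useful bound is available. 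The hypothesis $k_n/n\to0$ plays only an ancillary role here, keeping the minimal leaf mass $k_n/n_\CI$ vanishing and hence consistent with the shrinking requirement~(I); the operative requirement is the interplay $k_n^2/n_\CI\gg d\log n_\CI$ encoded in $n_\CI^{4d}e^{-k_n^2/2n_\CI}\to0$.
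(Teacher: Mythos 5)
Your proof is correct and is essentially a self-contained re-derivation of the paper's argument. The paper handles the adaptivity of the random leaf $L(x,\theta)$ by passing to the supremum over all half-open rectangles $\mathcal{A}$ and invoking a Devroye/Vapnik--Chervonenkis concentration inequality (Lemma~\ref{lem:Devroyebound} in the appendix), which already carries the $n_\CI^{4d}$ factor via the shattering coefficient $s(\mathcal{A},n_\CI^2)$. You instead reconstruct an equivalent uniform bound from scratch: a grid of mesh $n_\CI^{-2}$ gives $O(n_\CI^{4d})$ grid-aligned boxes, and applying Hoeffding to each (after rounding $L(x,\theta)$ outward to its grid cover, which at most doubles the volume for $n$ large) plus a union bound reproduces the same $n_\CI^{4d}\,e^{-k_n^2/2n_\CI}$ tail. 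The two routes share the same key insight---a uniform bound over a polynomial-size class of candidate rectangles sidesteps the selection effect---and they land on identical exponents; yours avoids citing an external lemma. Two small observations: your informal claim $\lambda(\lceil B\rceil)\le\lambda(B)+O(d\,n_\CI^{-2})$ should really read $O(2^d n_\CI^{-2})$ when one expands the product, but this is still $o(M/n_\CI)$ so nothing breaks; and you are right that $k_n/n\to0$ is not in fact used in this particular proposition---in the paper's proof it appears only to send the constant $c$ of Lemma~\ref{lem:Devroyebound} to $1$, but $c\le e^8$ holds unconditionally for $\delta\in(0,1]$ and that already suffices.
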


The observations made in the proof of Proposition~\ref{prop:regularII} can be used to formulate the following result for strong consistency.

\begin{prop}[Strong consistency under minimal node size]\label{prop:regularIIstrong}
Consider an honest tree satisfying~\eqref{eq:jproportion} almost surely and (ii) of Assumption \ref{asm:minimalnodesize} with
\begin{equation*}
    \sum_{n_\CI = 1}^\infty \frac{n_\CI^{4d}}{e^{k_n^2/2n_\CI}} < \infty.
\end{equation*}
The tree is strongly consistent if one of the following sets of conditions apply.
\begin{enumerate}
    \item[(i)] We have $\BE[Y^4] < \infty$ and
    \begin{equation*}
        \sum_{n_\CI = 1}^\infty \frac{1}{k_n^2} < \infty.
    \end{equation*}
    \item[(ii)] The moment generating function of $Y \mid X = x$ exists and satisfies the requirements in Theorem~\ref{thm:strongconsistency} and
    \begin{equation*}
        \sum_{n_\CI = 1}^\infty e^{-\delta k_n} < \infty
    \end{equation*}
    for every $\delta > 0$.
\end{enumerate}
\end{prop}

Several choices for $k_n$ exist that satisfy the requirements for weak or strong consistency. Some are listed in the following corollary.

\begin{cor}[Concrete minimal node size sequences]\label{cor:regularconsistent}
The following minimal node size sequences satisfy all summability requirements in Proposition \ref{prop:regularIIstrong}.
\begin{enumerate}
    \item[(1)] $k_n \sim  n_\CI^\beta$ for $\beta \in (1/2, 1)$.
    \item[(2)] $k_n \sim  \sqrt{n_\CI \log(n_\CI)^\beta}$ for $\beta > 1$.
\end{enumerate}
\end{cor}

This choice even suffices for strong uniform consistency as long as stronger assumptions on $N_L^j$ are satisfied which take uniformity into account. Proposition~\ref{prop:regularI} also holds uniformly.

\begin{prop}[Uniform consistency under equivalent shrinking condition]\label{prop:regularIuniform}
Under Assumptions \ref{asm:covariate} and \ref{asm:honesty}, we have that
\begin{equation*}
    \sup_{x \in [0, 1]^d} \{b_{n, j}^x - a_{n, j}^x\} \overset{\BP}{\to} 0 \quad \text{for all } j = 1, \dots, d
\end{equation*}
if and only if
\begin{equation}\label{eq:jproportionsuniform}
    \frac{\sup_{x \in [0, 1]^d} N_L^j(x, \theta)}{n_\CI} \overset{\BP}{\to} 0 \quad \text{ for all } j = 1, \dots, d.
\end{equation}
The same equivalence holds with convergence in probability replaced with convergence almost surely. 
\end{prop}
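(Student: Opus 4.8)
The plan is to mimic the proof of Proposition~\ref{prop:regularI}, upgrading every pointwise statement to one that holds uniformly over $x \in [0,1]^d$. The key structural observation, exactly as in the non-uniform case, is that for a fixed $n$ the collection of leaves $\{L(x,\theta): x \in [0,1]^d\}$ is finite, and hence the $j$'th marginal intervals $(a_{n,j}^x, b_{n,j}^x]$ range over a finite family of intervals. This makes both suprema $\sup_x \{b_{n,j}^x - a_{n,j}^x\}$ and $\sup_x N_L^j(x,\theta)$ well-defined measurable random variables, and it means that the maximizing $x$ in each supremum is attained. So the two directions of the equivalence reduce to controlling, for the worst interval, the relationship between its Lebesgue length and the empirical count of $\CI$-observations whose $j$'th coordinate lands in it.

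For the ``only if'' direction, I would argue that a short marginal interval cannot contain many observations. Concretely, condition on $(\theta,\CJ)$ (which determines all the intervals), and for the maximizing interval $(a,b]$ achieving $\sup_x N_L^j(x,\theta)$, note $N_L^j$ is a sum of $n_\CI$ i.i.d.\ indicators $\1_{\{X_i^j \in (a,b]\}}$ whose success probability is the marginal density of $X^j$ integrated over $(a,b]$, which by Assumption~\ref{asm:covariate} is at most $C(b-a) \le C \sup_x\{b_{n,j}^x - a_{n,j}^x\}$. A uniform bound over the finite family of intervals via a union bound — the number of leaves is polynomial in $n_\CI$ under the implicit leaf-count control, or more carefully one bounds the number of distinct marginal intervals — together with a Bernstein/Hoeffding estimate for each, gives $\sup_x N_L^j(x,\theta)/n_\CI \le C\sup_x\{b_{n,j}^x-a_{n,j}^x\} + o_{\BP}(1)$, and the right side tends to zero by hypothesis. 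For the ``if'' direction, the density being bounded below by $\varepsilon$ forces a long interval to capture a nontrivial fraction of the data: if $\sup_x\{b_{n,j}^x - a_{n,j}^x\}$ does not go to zero along some subsequence, then for the maximizing $x$ the interval has length bounded below by some $\eta>0$, so its expected count is at least $\varepsilon \eta n_\CI$, and concentration gives $N_L^j(x,\theta) \ge (\varepsilon\eta/2) n_\CI$ with high probability, contradicting~\eqref{eq:jproportionsuniform}. The almost-sure versions follow by replacing the convergence-in-probability concentration bounds with their Borel–Cantelli counterparts, summing the (exponentially small, times polynomial-in-$n$) tail probabilities, exactly as the strengthened statements earlier in the paper are handled.

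The main obstacle I anticipate is making the union bound over leaves rigorous and uniform: unlike the fixed-$x$ argument, one now needs that the number of distinct $j$'th marginal intervals arising across all leaves grows at most polynomially (or at least sub-exponentially) in $n_\CI$, so that a union bound does not destroy the exponential concentration. This is where the tree structure genuinely enters — the leaves form a partition refined by at most $n_\CJ$ splits, so the number of distinct thresholds in coordinate $j$ is at most $n_\CJ + 1 = O(n)$, hence the number of marginal intervals is $O(n^2)$, which is swamped by the exponential factors $e^{-\delta n_\CI \lambda}$ appearing in the concentration bounds. Once this counting is pinned down, the rest is a routine transfer of the Proposition~\ref{prop:regularI} argument with $\sup_x$ inserted throughout.
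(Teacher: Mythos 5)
Your overall strategy --- concentration of $N_L^j(x,\theta)/n_\CI$ around $p_j(\theta,\CJ)$ uniformly in $x$, followed by the sandwich $\varepsilon(b-a) \le p_j \le C(b-a)$ from Assumption~\ref{asm:covariate} --- matches the paper, and the two-directional reading of the equivalence is correct. The difference is in how the uniform concentration is established, and here your version has a genuine gap that the paper's route avoids. You propose to union-bound over the finitely many \emph{realized} marginal intervals and to control their number by counting thresholds, claiming at most $n_\CJ+1$ thresholds per coordinate. But Assumptions~\ref{asm:covariate} and \ref{asm:honesty}, which are the only hypotheses of the proposition, impose no bound on the number of leaves or splits: the splitting rule is allowed to depend on $\CJ$ and $\theta$ in an essentially arbitrary way (non-adaptive schemes like centered splitting may split empty cells, and nothing caps the number of splits a priori). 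So "$O(n^2)$ marginal intervals" is an extra, unstated assumption; the union bound over realized intervals does not close without it.

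The paper sidesteps this counting entirely. Conditionally on $\theta,\CJ$, the random supremum over realized leaf-intervals is dominated pointwise by the supremum of the empirical-process discrepancy over the \emph{fixed} class $\mathcal{A}$ of all half-open intervals in $[0,1]$ (the one-dimensional instance of the class used throughout), and Lemma~\ref{lem:Devroyebound} — a VC/Devroye-type bound — gives $\BP(\sup_{A\in\mathcal{A}} |\BP^{(n_\CI)}_j(A) - \BP(X_1^j\in A)| > \delta) \le c(n_\CI^4+1)e^{-2n_\CI\delta^2}$, which is summable regardless of how many leaves the tree has. From there Lemma~\ref{lem:convergenceconditioning} handles the $n$-dependent conditioning, Borel--Cantelli gives the a.s. statement (hence also the in-probability one), and the sandwich finishes the argument exactly as you describe. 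So: your endgame is right, but to make the middle step rigorous you should replace the leaf-counting union bound with the fixed-class VC bound, which is both more robust and shorter.
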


\begin{prop}[Strong uniform consistency with polynomial minimal node size]\label{prop:regularIIstronguniform}
Consider an honest tree satisfying~\eqref{eq:jproportionsuniform} and (iv) of Assumption \ref{asm:minimalnodesize} with $k_n \sim n_\CI^\beta$ for $\beta \in (1/2, 1)$. If $f$ and $m$ are Lipschitz continuous and the moment generating function of $Y$ exists in a neighbourhood around zero, the tree estimator $T(x, \theta)$ is strongly uniformly consistent in $x$.
\end{prop}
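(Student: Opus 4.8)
The plan is to reduce Proposition~\ref{prop:regularIIstronguniform} to an application of the strong uniform consistency statement, part (ii) of Theorem~\ref{thm:uniformconsistency}. We already have the first hypothesis of that theorem for free: by Proposition~\ref{prop:regularIuniform}, the condition $\sup_{x} N_L^j(x,\theta)/n_\CI \to 0$ almost surely (which is~\eqref{eq:jproportionsuniform}) is equivalent to $\sup_{x}\{b_{n,j}^x - a_{n,j}^x\} \to 0$ a.s. for all $j$. The Lipschitz hypotheses on $f$ and $m$ and the existence of $\kappa_Y$ near zero are assumed directly. So the only thing left to verify is the pair of summability conditions on $\underline{v}_n$ and an auxiliary sequence $\{M_n\}$ appearing in part (ii) of Theorem~\ref{thm:uniformconsistency}.

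The first step, then, is to get a deterministic lower bound on $\underline{v}_n = \inf_{x}\lambda(L(x,\theta))$ from the minimal node size requirement~\eqref{eq:minnodesize}. Each leaf contains at least $k_n = \lceil n_\CI^\beta\rceil$ points of $\CI$, and since the density $f$ is bounded above by $C$ (Assumption~\ref{asm:covariate}), a leaf of volume $v$ contains, in expectation, at most $n_\CI C v$ prediction points; a concentration bound (Bernstein or the binomial Chernoff bound already used in the proof of Proposition~\ref{prop:regularII}) then forces $\lambda(L(x,\theta)) \gtrsim k_n/n_\CI = n_\CI^{\beta-1}$ simultaneously over all leaves, with probability tending to one fast enough to be summable. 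I would phrase this as: there is a constant $c > 0$ such that $\underline{v}_n \geq c\, n_\CI^{\beta-1}$ for all large $n$, almost surely (using Borel–Cantelli on the complementary events, whose probabilities decay like $n_\CI^{4d} e^{-c' k_n^2/n_\CI}$ as in Proposition~\ref{prop:regularII}, and this is summable because $k_n^2/n_\CI \asymp n_\CI^{2\beta - 1}$ with $2\beta - 1 > 0$).

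With $\underline{v}_n \gtrsim n_\CI^{\beta - 1}$ in hand and $n_\CI = \Theta(n)$ by honesty, the second step is to exhibit a suitable $\{M_n\}$ and check the two series in part (ii). Taking $M_n = \log(n)^2$ (or any slowly growing polylog), the first series $\sum_n e^{-tM_n}/\underline{v}_n \lesssim \sum_n n^{1-\beta} e^{-t(\log n)^2}$ converges since the exponential factor beats any polynomial. For the second series, $\underline{v}_n^2 n_\CI / M_n^2 \gtrsim n^{2\beta - 1}/(\log n)^4 \to \infty$ faster than logarithmically, while the prefactor $\max\{(\underline{v}_n\sqrt{n_\CI}/M_n)^{4d},1\}$ is at most polynomial in $n$; hence $\sum_n (\text{poly})\cdot e^{-\delta n^{2\beta-1}/(\log n)^4} < \infty$ for every $\delta > 0$. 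Plugging these verified hypotheses into Theorem~\ref{thm:uniformconsistency}(ii) yields the strong uniform consistency of $\widehat f(\cdot,\theta)$ and $\widehat m(\cdot,\theta)$, as claimed.

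The main obstacle I anticipate is the first step: turning the pointwise minimal-node-size bound into a \emph{uniform} lower bound $\underline{v}_n \gtrsim k_n/n_\CI$ that holds almost surely. The subtlety is that the leaves are data-dependent, so one cannot simply union-bound over a fixed collection of boxes; however, for fixed $n$ there are only finitely many leaves (at most $n_\CI$, since each contains a prediction point), and one can control all of them by a Vapnik–Chervonenkis-type argument over axis-aligned boxes — exactly the kind of covering-number bound that already underlies the $n_\CI^{4d}$ factor in Proposition~\ref{prop:regularII} and in Theorem~\ref{thm:uniformconsistency}. So the work is really bookkeeping: confirming that the polynomial-in-$n_\CI$ prefactor times $e^{-c k_n^2/n_\CI}$ is summable when $k_n = \lceil n_\CI^\beta\rceil$ with $\beta \in (1/2,1)$, which it is precisely because $2\beta - 1 > 0$. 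Everything downstream is the routine polylog bookkeeping of the previous paragraph.
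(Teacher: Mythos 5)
Your proof is correct and follows essentially the same route as the paper: reduce to Theorem~\ref{thm:uniformconsistency}(ii), invoke Proposition~\ref{prop:regularIuniform} for the shrinking-leaf condition, establish $\underline{v}_n \gtrsim k_n/n_\CI$ eventually almost surely via the tail bound from Proposition~\ref{prop:regularII} together with a union bound over the finitely many leaves, and then verify the two summability conditions. Your choice $M_n = (\log n)^2$ differs from the paper's $M_n = \lceil n_\CI^\gamma\rceil$ with $\gamma \in (0,\beta - 1/2)$, but both make the required series converge since $2\beta - 1 > 0$.
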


It is natural to ask what assumptions imply the balance conditions~\eqref{eq:jproportion} and~\eqref{eq:jproportionsuniform}. One such assumption is a global regularity assumption and a corresponding covariate-wise regularity condition. These assumptions are reminiscent of those presented in (see~\cite{WagerAthey}), but are presented more generally as probability statements. For the following discussion, let $s_n(x) := s_n(x, \theta, \CJ)$ denote the number of splits in the tree leading to the leaf $L(x, \theta)$ and let $s_{n, j}(x) := s_{n, j}(x, \theta, \CJ)$ be the number of splits along covariate $j$ leading to $L(x, \theta)$.

\begin{defn}[Regularity]
A decision tree is called $(\alpha, k_n)$-\emph{regular} for $\alpha \in (0, 1/2]$ if, with probability tending to one, at least a fraction $\alpha$ of the $\CI$-observations in the parent node are in each daughter node, and if the number of observations in a leaf satisfies $k_n \leq N_L(x, \theta) \leq 2k_n - 1$. To be precise, $(\alpha, k_n)$-regularity means that 
\begin{align*}
    \overline{a}_n(x) &:= \BP(k_n \leq N_L(x, \theta) \leq 2k_n - 1) \to 1, \\
    b_n(x) &:= \BP(n_\CI\alpha^{s_n(x)} \leq N_L(x, \theta)\leq n_\CI(1 - \alpha)^{s_n(x)}) \to 1.
\end{align*}
We call a tree \emph{covariate-wise regular} with parameters $\alpha_j \in (0, 1/2]$ for $j = 1, \dots, d$ if, with probability tending to one, at every split on covariate $j$, at least a fraction $\alpha_j$ of the $\CI$-observations in the parent node have $j$'th coordinate in the $j$-interval for each of the two daughter nodes, that is, if
\begin{equation*}
    b_n^j(x) := \BP(n_\CI\alpha_j^{s_{n, j}(x)} \leq N_L^j(x, \theta) \leq n_\CI(1 - \alpha_j)^{s_{n, j}(x)}) \to 1 \quad \text{for all } j = 1, \dots, d.
\end{equation*}
\end{defn}

Intuitively, if the leaf sizes are (with high probability) not allowed to grow too large compared to $n_\CI$, the total number of splits should grow to infinity. This turns out to be the case. In order to also conclude that the number of splits along each coordinate goes to infinity, the following assumption is key, see also~\cite{WagerAthey}.

\begin{defn}[Random-split tree]
A decision tree is called a \emph{random-split} tree if at each split, the probability of splitting on feature $j$ is bounded from below by a constant $\overline{\pi}_j > 0$
\end{defn}

\begin{lem}[Number of splits divergence]\label{lem:numsplits}
Under the assumptions of $(\alpha, k_n)$-regularity and $k_n/n \to 0$ as $n \to \infty$, it holds that the total number of splits $s_n(x)$ leading to the leaf $L(x, \theta)$ satisfies $s_n(x) \to \infty$ in probability. Furthermore, if the tree satisfies the random-split assumption, $s_{n, j}(x) \to \infty$ in probability as well and in particular,~\eqref{eq:jproportion} holds.
\end{lem}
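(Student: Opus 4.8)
The plan is to control the leaf volume (equivalently, the depth) from above using the $(\alpha,k_n)$-regularity, and then to combine this with a lower bound on leaf volume coming from the requirement $N_L(x,\theta)\le 2k_n-1$ together with Assumption~\ref{asm:covariate}. First I would recall that along the path from the root to the leaf $L(x,\theta)$ there are $s_n(x)$ splits, and at each split the fraction of $\CI$-observations (or, passing to the population version, the $f$-mass) passed to the chosen daughter is at most $1-\alpha$. Hence the empirical mass of the leaf satisfies $N_L(x,\theta)/n_\CI \le$ (something like) $(1-\alpha)^{s_n(x)}$ up to the usual fluctuation terms; more robustly, since $f\ge\varepsilon$ the Lebesgue volume obeys $\lambda(L(x,\theta)) \le \varepsilon^{-1}\,\BP(X\in L(x,\theta))$, and a Glivenko--Cantelli / uniform-law argument over the (finitely many) leaves gives $\BP(X\in L(x,\theta)) \le (1-\alpha)^{s_n(x)} + o(1)$ a.s.\ uniformly. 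On the other hand, $N_L(x,\theta)\ge k_n$, so by the same uniform law $\BP(X\in L(x,\theta)) \ge k_n/n_\CI - o(1) = k_n/\Theta(n) - o(1)$ a.s.\ (the hypothesis $k_n/n\to 0$ guarantees this lower bound is not vacuous, but what matters is only that it is positive infinitely often; in fact $k_n\ge 1$ already forces the leaf to contain at least one point). Combining the two inequalities yields $(1-\alpha)^{s_n(x)} \ge c/n - o(1)$ is the wrong direction --- rather, I want a lower bound on $s_n(x)$, so I use instead that the parent of the leaf has $N\ge k_n$ points cannot itself be a leaf only if... let me restate: the correct route is that $\lambda(L(x,\theta))\to 0$ a.s.\ (this is essentially condition (I), which holds here since $(1-\alpha)^{s_{n,j}}\to 0$ is what we are proving, so instead I bound volume directly), while $(1-\alpha)^{s_n(x)}$ is the only mechanism by which the mass --- and hence, via $f\le C$, the volume up to a bounded factor is \emph{not} forced small; the clean statement is $\BP(X\in L)\le (1-\alpha)^{s_n(x)}+o(1)$ and $\BP(X\in L)\ge (k_n\wedge 1)/n_\CI - o(1)>0$, which is consistent with any $s_n(x)$, so this pair of bounds alone does \emph{not} give divergence. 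The actual argument must instead say: if $s_n(x)$ stayed bounded along a subsequence, then $L(x,\theta)$ would have volume bounded below along that subsequence (only finitely many split configurations of bounded depth, each giving a leaf of some fixed positive volume after the volumes stabilize), hence $\BP(X\in L)\ge \varepsilon\lambda(L)\ge c>0$, hence $N_L(x,\theta)\ge (c/2) n_\CI\to\infty$, contradicting $N_L(x,\theta)\le 2k_n-1 = o(n)$ since $k_n/n\to 0$. This contradiction forces $s_n(x)\to\infty$ a.s.

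For the second assertion, I would add the random-split assumption: at each of the $s_n(x)$ splits, feature $j$ is chosen with probability at least $\overline\pi_j>0$, independently of the data given $\theta$ and the past (this is where the auxiliary randomization $\theta$ enters). Conditionally on the event $\{s_n(x)=m\}$, the count $s_{n,j}(x)$ stochastically dominates a $\mathrm{Binomial}(m,\overline\pi_j)$ variable; since $m=s_n(x)\to\infty$ a.s.\ from the first part, a Borel--Cantelli argument applied to $\BP(\mathrm{Bin}(m,\overline\pi_j)\le t)\le e^{-cm}$ (for fixed $t$, summable in $m$) gives $s_{n,j}(x)\to\infty$ a.s.\ for each $j$. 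Finally, plugging this into the displayed bound $N_L^j(x,\theta)/n_\CI \le (1-\alpha_j)^{s_{n,j}(x)}$ (valid for a covariate-wise regular tree, which an $(\alpha,k_n)$-regular random-split tree can be taken to be with $\alpha_j=\alpha$) shows $N_L^j(x,\theta)/n_\CI\to 0$ a.s., i.e.\ \eqref{eq:jproportion} holds a.s., and hence by Proposition~\ref{prop:regularI} condition (I) of Assumption~\ref{asm:leaf} holds.

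The main obstacle I anticipate is making the ``finitely many split configurations of bounded depth, each with volume bounded below'' step rigorous: one needs that a tree of depth at most $D$ partitions $[0,1]^d$ into at most $2^D$ cells but the \emph{smallest} such cell can still have arbitrarily small volume if the split thresholds cluster, so one cannot get a deterministic lower bound on $\lambda(L)$ from $s_n(x)\le D$ alone. The fix is to argue via the mass rather than the volume directly: if $s_n(x)\le D$ then $L(x,\theta)$ is obtained from the root by at most $D$ splits, and by covariate-wise regularity (applied in reverse) its empirical $\CI$-mass is at least $\alpha^D n_\CI$ --- wait, that is the wrong direction too. The genuinely correct and clean fix: $(\alpha,k_n)$-regularity says every leaf has \emph{at least} $k_n$ and at most $2k_n-1$ $\CI$-points, so $N_L(x,\theta)\le 2k_n-1$ directly; meanwhile the root node is $[0,1]^d$ with $n_\CI$ points, and regularity forces, after $s_n(x)$ splits, $N_L(x,\theta)\le (1-\alpha)^{s_n(x)} n_\CI$ is \emph{false in general} because $\alpha$ bounds the \emph{smaller} side --- so the chosen side has \emph{at most} $(1-\alpha)$ fraction, giving $N_L(x,\theta)\le (1-\alpha)^{s_n(x)} n_\CI$ after all. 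Hence $k_n \le N_L(x,\theta)\le (1-\alpha)^{s_n(x)} n_\CI$, i.e.\ $(1-\alpha)^{s_n(x)}\ge k_n/n_\CI$; this bounds $s_n(x)$ \emph{above}, not below, so it is again unhelpful for divergence. The resolution that actually works is the contradiction argument I gave: combine $N_L(x,\theta)\le 2k_n-1=o(n)$ with the fact that a bounded-depth leaf containing the fixed point $x$ has positive mass bounded below \emph{in probability/a.s.} — formalized by noting that on $\{s_n(x)\le D\}$, since each of the $\le D$ split thresholds separating $L$ from $x$'s true-density neighbourhood, plus continuity of $f$ and a covering argument, one gets $\liminf \lambda(L(x,\theta))>0$ on that event, contradicting $\lambda(L(x,\theta))\to 0$ which follows from $N_L\le 2k_n-1$, $N_L\approx n_\CI\,\BP(X\in L)$, and $f\ge\varepsilon$. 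Getting this last continuity/covering step stated cleanly, uniformly over the at-most-$2^D$ leaves, is the technical heart of the argument, and I would isolate it as the one place requiring care.
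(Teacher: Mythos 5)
Your proposal circles the right idea but never lands on it, and the route you finally commit to has the gap you yourself flag. The paper's argument is a two-line algebraic observation that you write down and then discard. In the passage ``by covariate-wise regularity (applied in reverse) its empirical $\CI$-mass is at least $\alpha^D n_\CI$ --- wait, that is the wrong direction too,'' you have exactly the bound that finishes the proof, and it is \emph{not} the wrong direction: $(\alpha,k_n)$-regularity guarantees that each daughter keeps at least an $\alpha$-fraction of the parent's $\CI$-observations, so after $s_n(x)$ splits the leaf containing $x$ satisfies $N_L(x,\theta)\ge n_\CI\,\alpha^{s_n(x)}$. Combining this \emph{lower} bound with the upper bound $N_L(x,\theta)\le 2k_n-1$ (the other half of regularity) gives $\alpha^{s_n(x)}\le (2k_n-1)/n_\CI$, hence
\begin{equation*}
  s_n(x)\;\ge\;\frac{\log\bigl(n_\CI/(2k_n-1)\bigr)}{\log(\alpha^{-1})}\;\longrightarrow\;\infty
\end{equation*}
deterministically (conditionally on $\theta,\CJ$) whenever $k_n/n\to 0$. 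No contradiction argument, no volume lower bound, no Glivenko--Cantelli, no covering step. Your instinct that the $(1-\alpha)^{s_n(x)}$ bound is useless is correct (it only bounds $s_n(x)$ above), but that should have prompted you to look at the other side of the regularity sandwich rather than retreat to a volume argument, which, as you correctly note, does not work: bounded depth with adaptive thresholds does not force a cell volume bounded below, and trying to repair this via continuity of $f$ and a covering over leaves is both genuinely delicate and entirely unnecessary.

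The second part of your argument is fine and essentially matches the paper: stochastic domination of $s_{n,j}(x)$ by a $\mathrm{Bin}(\lfloor \log(n_\CI/(2k_n-1))/\log(\alpha^{-1})\rfloor,\overline{\pi}_j)$ variable, divergence of the number of trials, and then either the SLLN (paper) or a Chernoff--Borel--Cantelli bound (you) to conclude $s_{n,j}(x)\to\infty$ a.s. The final step to~\eqref{eq:jproportion} does additionally require covariate-wise regularity (so that $N_L^j/n_\CI\le(1-\alpha_j)^{s_{n,j}(x)}$), which the paper imposes in the surrounding discussion; you note this correctly. But do be careful not to elide $(\alpha,k_n)$-regularity and covariate-wise regularity --- they are distinct assumptions, and an $(\alpha,k_n)$-regular tree need not be covariate-wise regular in general.
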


\begin{cor}[Consistency of regular trees]
An honest $(\alpha, k_n)$-regular random-split tree satisfying the covariate-wise regularity condition satisfies (I) of Assumption~\ref{asm:leaf}.
\end{cor}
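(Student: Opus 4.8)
The plan is to verify that conditions (I) and (II) of Assumption~\ref{asm:leaf} hold almost surely for an honest $(\alpha, k_n)$-regular random split tree satisfying the covariate-wise regularity condition, where the implicit standing hypothesis $k_n/n \to 0$ (inherited from Lemma~\ref{lem:numsplits}) is in force. Since the conclusion only asserts (I) of Assumption~\ref{asm:leaf}, I will focus on that, with (II) following from the $(\alpha,k_n)$-regularity bound $N_L(x,\theta) \geq k_n$ via Proposition~\ref{prop:regularII} once a suitable growth condition on $k_n$ is imposed; if the corollary is to be read literally as only (I), this latter point is not needed.

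First I would invoke Lemma~\ref{lem:numsplits}: under $(\alpha, k_n)$-regularity and $k_n/n \to 0$, the total number of splits $s_n(x)$ leading to $L(x,\theta)$ tends to infinity $\BP$-a.s., and under the random-split assumption the coordinate-wise split counts $s_{n,j}(x) \to \infty$ $\BP$-a.s. for each $j = 1, \dots, d$. Second, I would combine this with the covariate-wise regularity inequality recorded just before Definition~\ref{defn:randomsplit} (in the text),
\begin{equation*}
    \frac{N_L^j(x, \theta)}{n_\CI} \leq (1 - \alpha_j)^{s_{n, j}(x)},
\end{equation*}
which holds because each split on coordinate $j$ removes at least a fraction $\alpha_j$ of the $\CI$-observations whose $j$'th coordinate lies in the current $j$-interval. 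Since $0 < \alpha_j \le 1/2$, we have $0 < 1 - \alpha_j < 1$, so $(1-\alpha_j)^{s_{n,j}(x)} \to 0$ on the almost-sure event $\{s_{n,j}(x) \to \infty\}$; taking the (finite) intersection over $j = 1, \dots, d$ of these events shows $N_L^j(x,\theta)/n_\CI \to 0$ $\BP$-a.s. for every $j$, which is exactly the almost-sure version of condition~\eqref{eq:jproportion}. Third, I would apply Proposition~\ref{prop:regularI} (its almost-sure equivalence) to conclude that $b_{n,j}^x - a_{n,j}^x \to 0$ $\BP$-a.s. for all $j$, i.e.\ condition (I) of Assumption~\ref{asm:leaf} holds almost surely. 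Note that Proposition~\ref{prop:regularI} requires Assumptions~\ref{asm:covariate} and~\ref{asm:honesty}; honesty is assumed in the corollary, and Assumption~\ref{asm:covariate} is a standing assumption of the paper.

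The argument is essentially a chaining of three already-established facts, so there is no serious obstacle. The one point requiring mild care is making sure the almost-sure events align: Lemma~\ref{lem:numsplits} gives a single $\BP$-null set off which $s_{n,j}(x) \to \infty$ for all $j$ simultaneously (a finite union of null sets), and Proposition~\ref{prop:regularI}'s equivalence is stated for the almost-sure mode, so the passage from coordinate proportions to interval lengths does not introduce any new exceptional set beyond those already controlled. If one also wants to record (II) as part of the statement, the only additional ingredient is to note that $(\alpha,k_n)$-regularity forces $N_L(x,\theta) \ge k_n$ and then to impose the growth condition $n_\CI^{4d} e^{-k_n^2/2n_\CI} \to 0$ of Proposition~\ref{prop:regularII}; but as stated the corollary claims only (I), so this can be omitted or relegated to a remark.
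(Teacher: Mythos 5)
Your proof is correct and follows exactly the route the paper intends: the corollary is stated without proof as an immediate consequence of chaining Lemma~\ref{lem:numsplits} (which yields $s_{n,j}(x)\to\infty$ a.s.\ for each $j$), the covariate-wise regularity bound $N_L^j(x,\theta)/n_\CI \leq (1-\alpha_j)^{s_{n,j}(x)}$, and the almost-sure version of the equivalence in Proposition~\ref{prop:regularI}. Your remark that the hypothesis $k_n/n\to 0$ is tacitly inherited from Lemma~\ref{lem:numsplits}, and your care about intersecting the finitely many null sets over $j$, are both accurate and match the intended reading.
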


The following lemma indicates how the regularity assumptions translate into results on uniform consistency.

\begin{lem}[Uniform divergence of number of splits for regular trees]\label{lem:regularsidelength}
Consider an honest covariate-wise regular decision tree and let Assumption~\ref{asm:covariate} hold. Then if
\begin{equation*}
    \inf_{x \in [0, 1]^d} s_{n, j}(x) \overset{\BP}{\to} \infty
\end{equation*}
for all $j$, it holds that
\begin{equation*}
    \sup_{x \in [0, 1]^d}\{b_{n, j}^x - a_{n, j}^x\} \overset{\BP}{\to} 0
\end{equation*}
for all $j$.
\end{lem}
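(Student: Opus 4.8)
\textbf{Proof plan for Lemma~\ref{lem:regularsidelength}.}

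The plan is to relate the side length $b_{n,j}^x - a_{n,j}^x$ of the $j$'th interval of the leaf $L(x,\theta)$ to the number $s_{n,j}(x)$ of splits along coordinate $j$, uniformly in $x$, and then invoke the covariate-wise regularity together with a uniform law of large numbers to convert the divergence of $\inf_x s_{n,j}(x)$ into the desired shrinking of the sup of side lengths. First I would fix $j$ and observe that whenever a split on coordinate $j$ occurs in the path to $L(x,\theta)$, the covariate-wise regularity forces at least a fraction $\alpha_j$ of the $\CI$-observations whose $j$'th coordinate lies in the current $j$-interval to fall on each side. Hence, tracking the $\CI$-empirical mass of the $j$-interval along the path, after $s_{n,j}(x)$ splits on coordinate $j$ the empirical fraction of $\CI$-points with $j$'th coordinate in $(a_{n,j}^x, b_{n,j}^x]$ is at most $(1-\alpha_j)^{s_{n,j}(x)}$; taking the infimum over $x$ of $s_{n,j}$ gives a bound uniform in $x$, namely
\begin{equation*}
    \sup_{x \in [0,1]^d} \frac{1}{n_\CI}\sum_{i=1}^{n_\CI} \1_{\{X_i^j \in (a_{n,j}^x, b_{n,j}^x]\}} \;\leq\; (1-\alpha_j)^{\inf_{x} s_{n,j}(x)} \;\to\; 0
\end{equation*}
in probability (resp.\ almost surely), using $\alpha_j \in (0,1/2]$ and the hypothesis on $\inf_x s_{n,j}(x)$.

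Next I would pass from the empirical $j$-mass to the Lebesgue length. By Assumption~\ref{asm:covariate}, the marginal density of $X^j$ on $[0,1]$ is bounded below by $\varepsilon > 0$, so for any interval $I \subseteq [0,1]$ the population probability $\BP(X^j \in I) \geq \varepsilon\, \lambda_1(I)$. The remaining task is to control the discrepancy between the empirical $j$-mass of the (data-dependent) intervals $(a_{n,j}^x, b_{n,j}^x]$ and their population probability, \emph{uniformly} over the uncountable family of such intervals. I would handle this with a Glivenko--Cantelli argument for the class of one-dimensional intervals: since the class of intervals in $\BR$ is a VC class, the empirical process over this class converges uniformly to zero in probability and almost surely, so
\begin{equation*}
    \sup_{I} \Big| \frac{1}{n_\CI}\sum_{i=1}^{n_\CI} \1_{\{X_i^j \in I\}} - \BP(X^j \in I) \Big| \to 0.
\end{equation*}
Combining the three displays gives $\varepsilon \sup_x (b_{n,j}^x - a_{n,j}^x) \leq \sup_x \BP(X^j \in (a_{n,j}^x, b_{n,j}^x]) \leq (1-\alpha_j)^{\inf_x s_{n,j}(x)} + o(1) \to 0$, which is exactly the claim after dividing by $\varepsilon$; repeating for each $j = 1,\dots,d$ finishes the proof.

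I expect the main obstacle to be the uniformity over the data-dependent intervals: one cannot simply apply a pointwise law of large numbers for each fixed $x$, because the supremum runs over infinitely many $x$, and the intervals $(a_{n,j}^x, b_{n,j}^x]$ themselves depend on $\CJ$ and $\theta$. The clean way around this is to note that for fixed $n$ the tree has only finitely many leaves, so the family of realized $j$-intervals is finite; but to get a bound that does not degrade as the number of leaves grows, one still wants the Glivenko--Cantelli / VC bound over \emph{all} intervals, which is uniform in a way that is insensitive to how the intervals are selected. A secondary, more bookkeeping-level point is being careful that the fraction-$\alpha_j$ recursion is applied to the empirical $j$-marginal mass (which is what covariate-wise regularity controls) rather than to Lebesgue length directly, since the conversion between the two is precisely where Assumption~\ref{asm:covariate} and the empirical-process step enter.
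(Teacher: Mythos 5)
Your argument is correct and follows the same route the paper takes: the paper's proof consists of exactly your first step — the covariate-wise regularity bound $\sup_x N_L^j(x,\theta)/n_\CI \le (1-\alpha_j)^{\inf_x s_{n,j}(x)}$ — and then tacitly invokes Proposition~\ref{prop:regularIuniform} (which itself uses the Devroye/VC bound from Lemma~\ref{lem:Devroyebound} and the density bounds from Assumption~\ref{asm:covariate}) for the equivalence of $\sup_x N_L^j/n_\CI \to 0$ and $\sup_x (b_{n,j}^x - a_{n,j}^x) \to 0$. Your empirical-process plus marginal-density-lower-bound steps reconstruct exactly the content of that proposition inline, so the proof is complete and matches the paper's.
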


We find that the assumption
\begin{equation*}
    \inf_{x \in [0, 1]^d} s_{n, j}(x) \overset{}{\to} \infty
\end{equation*}
in some mode of convergence is difficult to verify without further assumptions. A way to ensure this condition is to enforce a split on covariate $j$ every time the total number of splits leading to $x$ (namely $s_n(x)$) increases by a certain large but fixed number $N_j$ ($\geq d$). From the proof of Lemma~\ref{lem:numsplits}, we then have
\begin{equation*}
    s_{n, j}(x) \geq \frac{s_n(x)}{N_j} \geq \frac{\log(n_\CI/(2k_n - 1))}{N_j \log(\alpha^{-1})}
\end{equation*}
with high probability. 

\subsection{Concrete splitting criteria}

In this subsection, we investigate some common splitting schemes found in the literature and show how they are integrated into our framework. The presented splitting schemes are totally non-adaptive, that is, the splits are not data-driven. Hence the honesty condition is superfluous to establish consistency, but we include these schemes to exemplify our findings.

\subsubsection{Uniform splitting}

We start by considering uniform splitting, adapted from the uniform forest presented in~\cite{ScornetAsymp}. The scheme works as follows:

\begin{enumerate}
    \item[(i)] In a node, a splitting variable is chosen uniformly among $\{1, ..., d\}$.
    \item[(ii)] For any feature, a threshold is chosen uniformly in the range of the node. 
    \item[(iii)] The tree is grown to be balanced such that the algorithm terminates when every terminal node has been split $s_n$ times. 
\end{enumerate}

To see that $b_{n, j}^x - a_{n, j}^x \to 0$ $\BP$-a.s., let $U_k$ denote the proportion of the node volume in split $k$ retained from split $k - 1$. Then $\{U_k\}_k$ is an iid $U[0, 1]$ sequence. Note that this sequence can be assumed to be independent of $\theta$ so that $\theta$ only governs the feature being split upon. We then have
\begin{equation*}
    \lambda(L(x, \theta)) \overset{d}{=} \prod_{k = 1}^{s_n} U_k.
\end{equation*}
Let $s_{n, j}(\theta)$ denote the number of splits along feature $j$. Then similarly,
\begin{equation*}
    b_{n, j}^x - a_{n, j}^x \overset{d}{=} \prod_{k = 1}^{s_{n, j}(\theta)} U_{k, j}
\end{equation*}
with $\{U_{k, j}\}_k$ iid $U[0, 1]$ as well. The following two results establish (I) and (II) of Assumption~\ref{asm:leaf} for proper conditions on the number of splits $s_n$. 

\begin{lem}[Consistency shrinkage condition under uniform splitting]\label{lem:uniformconsistency}
Under uniform splitting, assuming that $s_n \uparrow \infty$, then (I) of Assumption~\ref{asm:leaf} holds.
\end{lem}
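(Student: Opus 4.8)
The plan is to show that under uniform splitting with $s_n \uparrow \infty$, the side lengths $b_{n,j}^x - a_{n,j}^x$ converge to zero almost surely, which by Proposition~\ref{prop:regularI} (together with Assumptions~\ref{asm:covariate} and~\ref{asm:honesty}) is precisely condition (I) of Assumption~\ref{asm:leaf}. Actually, it is cleanest to work directly with the side-length representation $b_{n,j}^x - a_{n,j}^x \overset{d}{=} \prod_{k=1}^{s_{n,j}(\theta)} U_{k,j}$ established in the text, noting that this distributional identity can be upgraded to an almost-sure coupling since the construction of the tree fixes the same product of uniforms across all $n$ (the sequence $\{U_{k,j}\}_k$ does not depend on $n$, only the number of factors $s_{n,j}(\theta)$ does). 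So the quantity of interest is monotone nonincreasing in $s_{n,j}(\theta)$.

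First I would reduce the claim to showing $s_{n,j}(\theta) \to \infty$ almost surely for each $j$. Indeed, once $s_{n,j}(\theta) \to \infty$, we have $\prod_{k=1}^{s_{n,j}(\theta)} U_{k,j} \to \prod_{k=1}^{\infty} U_{k,j}$, and the latter infinite product is almost surely zero because $\sum_k \log(1/U_{k,j}) = \infty$ a.s. (each $\log(1/U_{k,j})$ is an iid $\mathrm{Exp}(1)$ variable, so the partial sums diverge by the strong law). Hence $b_{n,j}^x - a_{n,j}^x \to 0$ $\BP$-a.s. for all $j$.

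Second, I would establish $s_{n,j}(\theta) \to \infty$ a.s. Here the structure of uniform splitting helps: by (iii), the tree is grown so every terminal node has been split exactly $s_n$ times, so $s_n(x) = s_n \uparrow \infty$ deterministically. The split variable at each level is chosen uniformly on $\{1,\dots,d\}$ independently, so $s_{n,j}(\theta)$ is a Binomial$(s_n, 1/d)$-type count (more precisely, a sum of $s_n$ iid Bernoulli$(1/d)$ indicators along the path, which under the uniform scheme are genuinely independent of the data). Since $s_n \to \infty$, the strong law gives $s_{n,j}(\theta)/s_n \to 1/d$ a.s., hence $s_{n,j}(\theta) \to \infty$ a.s. Combined with the previous paragraph, this proves (I) almost surely.

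The main obstacle I anticipate is bookkeeping around the coupling and the role of $\theta$: one must be careful that the $U_{k,j}$'s can be taken as a single fixed iid array (independent of $n$ and of $\theta$), with $\theta$ governing only which coordinate each split acts on, so that monotonicity in $s_{n,j}$ is legitimate and the a.s. statements are about the same underlying randomness for all $n$. Once that is set up correctly, the remaining steps — the exponential/strong-law argument for the infinite product and the binomial strong-law argument for $s_{n,j}(\theta) \to \infty$ — are routine.
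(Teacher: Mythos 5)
Your proof is correct, but it takes a genuinely different route from the paper's. The paper computes the expectation directly: by the tower property and the fact that $s_{n,j}(\theta)\sim\mathrm{Bin}(s_n,1/d)$ with the $U_{k,j}$'s conditionally independent given $\theta$, one gets the closed form $\BE[b_{n,j}^x-a_{n,j}^x]=(1-\tfrac{1}{2d})^{s_n}\to 0$, and then Markov's inequality gives convergence in probability, upgraded to almost-sure convergence by observing that $b_{n,j}^x-a_{n,j}^x$ is nonincreasing in $n$ (monotone sequences converging in probability converge a.s.). You instead bypass the moment computation entirely and argue almost-sure convergence directly through two applications of the strong law: first $s_{n,j}(\theta)/s_n\to 1/d>0$ a.s.\ so $s_{n,j}(\theta)\to\infty$, and second $\sum_k\log(1/U_{k,j})=\infty$ a.s.\ (iid $\mathrm{Exp}(1)$ increments) so the infinite product vanishes. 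Your approach avoids the binomial MGF identity and the Markov step, at the cost of invoking the SLLN twice; the paper's approach produces an explicit rate $(1-\tfrac{1}{2d})^{s_n}$ for the bias, which your purely qualitative argument does not. Both proofs ultimately rest on the same underlying coupling — that the iid uniform factors and the iid coordinate choices can be fixed once and for all, so that the side length is a nonincreasing function of $n$ — which the paper uses implicitly for the monotonicity step and which you spell out more carefully. One small cleanup: your opening remark routing through Proposition~\ref{prop:regularI} is unnecessary (and you promptly retract it) — condition (I) of Assumption~\ref{asm:leaf} \emph{is} the statement $b_{n,j}^x-a_{n,j}^x\overset{\BP}{\to}0$, so working with the side lengths directly is not a detour but the definition.
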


\begin{prop}[Consistency growth condition under uniform splitting]\label{prop:uniformconsistency}
Under uniform splitting, if we have that $s_n \to \infty$ and that $\lim_{n \to \infty} {\log(n_\CI)}/{s_n}$ exists and is strictly larger than one, then (II) of Assumption~\ref{asm:leaf} holds almost surely.
\end{prop}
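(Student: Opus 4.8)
The plan is to exploit the distributional identity $\lambda(L(x,\theta)) \overset{d}{=} \prod_{k=1}^{s_n} U_k$ with $\{U_k\}$ iid $U[0,1]$, and to show that the random variable $n_\CI \lambda(L(x,\theta))$ tends to infinity almost surely under the stated growth condition on $s_n$. Since $\log U_k$ has mean $-1$, the strong law of large numbers gives $\frac{1}{s_n}\sum_{k=1}^{s_n} \log U_k \to -1$ a.s., hence $\log \lambda(L(x,\theta)) = \sum_{k=1}^{s_n} \log U_k = -s_n(1 + o(1))$ a.s. Therefore
\begin{equation*}
    \log\big(n_\CI \lambda(L(x,\theta))\big) = \log n_\CI - s_n(1 + o(1)) = s_n\Big(\frac{\log n_\CI}{s_n} - 1 - o(1)\Big) \quad \BP\text{-a.s.}
\end{equation*}
By hypothesis $\log(n_\CI)/s_n \to c > 1$, so the bracketed term converges a.s.\ to $c - 1 > 0$, and since $s_n \to \infty$ the right-hand side tends to $+\infty$ a.s. This gives $n_\CI \lambda(L(x,\theta)) \to \infty$ a.s., which is exactly condition (II) of Assumption~\ref{asm:leaf} in its almost-sure form.

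A point requiring care is that the strong law as stated controls $\prod_{k=1}^m U_k$ along the deterministic index $m \to \infty$, whereas here the number of factors is $s_n$, which although deterministic is the relevant index rather than $n$ itself; since $s_n \to \infty$, evaluating the a.s.\ limit of $\frac{1}{m}\sum_{k=1}^m \log U_k$ along the subsequence $m = s_n$ is legitimate. One should also note that $n_\CI = \Theta(n)$ by Assumption~\ref{asm:honesty}, so $\log n_\CI$ and $\log n$ differ by $O(1)$ and the ratio condition is unaffected by which of the two appears; the statement is phrased with $n_\CI$ and $s_n$, so no translation is needed, but it is worth a remark. Finally, the distributional identity only gives equality in law, not almost-sure equality, of $\lambda(L(x,\theta))$ and $\prod_k U_k$; since condition (II) is a statement about almost-sure convergence of $n_\CI\lambda(L(x,\theta))$, and almost-sure convergence to $+\infty$ is equivalent to convergence in probability to $+\infty$ together with the Borel--Cantelli-type tail summability, one checks that the event $\{n_\CI\lambda(L(x,\theta)) \le K\}$ has summable probability for each fixed $K$ — this follows from the product representation together with a Chernoff/large-deviation bound on $\sum_{k=1}^{s_n}\log U_k$, exploiting that $\log(n_\CI)/s_n$ is eventually bounded away from $1$.

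The main obstacle I anticipate is precisely this last reconciliation between the distributional identity and the almost-sure statement: one cannot simply transfer a.s.\ convergence through $\overset{d}{=}$. The cleanest route is to replace the appeal to the SLLN with an explicit exponential tail bound. Writing $S_m = -\sum_{k=1}^m \log U_k$, a sum of iid Exp$(1)$ variables, a standard Chernoff bound gives, for $0 < \eta < 1$, $\BP(S_m \le (1-\eta)m) \le e^{-m(\eta + \log(1-\eta))}$ with the exponent strictly positive. Applying this with $m = s_n$ and choosing $\eta$ so that $(1-\eta)c > 1$ (possible since $c > 1$), we get $\BP\big(n_\CI\lambda(L(x,\theta)) \le 1\big) = \BP\big(S_{s_n} \ge \log n_\CI\big)^c$—no, rather $\BP(S_{s_n} \ge \log n_\CI)$, which one bounds by noting $\log n_\CI \le (1-\eta)^{-1}\cdot$ something; more directly, the event $n_\CI\lambda \le K$ is $S_{s_n} \ge \log(n_\CI/K)$, and the upper-tail Chernoff bound $\BP(S_m \ge (1+\eta')m) \le e^{-m(\eta' - \log(1+\eta'))}$ with $(1+\eta')s_n \le \log(n_\CI/K)$ eventually (since $\log n_\CI/s_n \to c > 1$, pick $\eta'$ with $1 + \eta' < c$) yields a bound summable in $n$ provided $s_n \to \infty$ fast enough — and indeed $s_n \sim \log(n_\CI)/c$ grows like a logarithm, so $\sum_n e^{-\varepsilon s_n} = \sum_n n_\CI^{-\varepsilon/c}$, which is summable once $\varepsilon/c > 1$; one must therefore check that the Chernoff exponent can be taken large enough, which it can by choosing $\eta'$ close to $c - 1$. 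Borel--Cantelli then finishes the argument. I would present the Chernoff-bound version as the main proof and relegate the SLLN heuristic to a remark.
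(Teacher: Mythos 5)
Your opening paragraph is exactly the paper's proof: write $\log(n_\CI\lambda(L(x,\theta))) = \log n_\CI + \sum_{k=1}^{s_n}\log U_k(x)$, apply the strong law of large numbers to get $\frac{1}{s_n}\sum_{k=1}^{s_n}\log U_k(x)\to -1$ almost surely, and observe that the hypothesis $\log(n_\CI)/s_n \to c > 1$ then forces $\log(n_\CI\lambda(L(x,\theta))) \to +\infty$. Unfortunately you then demote this to a ``heuristic'' and promote a Chernoff/Borel--Cantelli argument to be the real proof, and that replacement does not work. Writing $S_m := -\sum_{k\le m}\log U_k$, a sum of iid standard exponentials, the sharp upper-tail bound is $\BP(S_m \ge rm) \le e^{-m(r-1-\log r)}$ for $r>1$, and no choice of $\eta'$ in the Chernoff computation beats this rate function. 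Taking $m = s_n$ and $r = \log(n_\CI/K)/s_n \to c$, one obtains $\BP\big(n_\CI\lambda(L(x,\theta))\le K\big) \le e^{-s_n(c-1-\log c)(1+o(1))}$, which, since $s_n \sim \log(n_\CI)/c$, is of order $n_\CI^{-(c-1-\log c)/c + o(1)}$. The exponent $(c-1-\log c)/c$ lies strictly between $0$ and $1$ for every $c > 1$ (it vanishes at $c=1$ and only approaches $1$ as $c\to\infty$), so the bound is never summable under the proposition's hypothesis. Your assertion that the Chernoff exponent ``can be taken large enough by choosing $\eta'$ close to $c-1$'' is precisely the step that fails: at that choice the exponent equals $c-1-\log c$, which is strictly less than $c$ for all $c>1$, so $\varepsilon/c < 1$.

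The worry about $\overset{d}{=}$ versus $=$ that motivated the replacement is resolved differently. Uniform splitting is totally non-adaptive, so the trees across $n$ can be coupled by growing one infinite random tree and truncating at depth $s_n$; the proportions $U_k(x)$ retained along the path to $x$ are then the first $s_n$ terms of a single iid $U[0,1]$ sequence, fixed once and for all, and $\log\lambda(L(x,\theta)) = \sum_{k\le s_n}\log U_k(x)$ is a genuine almost-sure equality. The SLLN then applies along the deterministic subsequence $m = s_n$ with no triangular-array issue. Finally, the asserted equivalence ``a.s.\ convergence to $+\infty$ iff convergence in probability plus Borel--Cantelli summability'' is false in one direction: summable tails are sufficient but not necessary for almost-sure convergence, the SLLN itself being the standard counterexample. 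In short, the SLLN argument you set aside is the proof; the Chernoff argument you propose to replace it with has an irreparable gap under the stated growth condition on $s_n$.
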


\begin{rmk}\label{rmk:uniform}
In the literature, see e.g.~\cite{ScornetAsymp}, the uniform random forest is shown to be consistent whenever $s_n \to \infty$ and $2^{s_n}/n_\CI \to 0$. Say $\lim_{n \to \infty} {\log(n_\CI)}/{s_n}$ exists. Then
\begin{align*}
    \lim_{n \to \infty }\frac{2^{s_n}}{n_\CI} = 0 
    \Leftrightarrow \quad \lim_{n \to \infty} \frac{\log(n_\CI)}{s_n} > \log(2).
\end{align*}
This shows that the requirement in the above proposition is similar (with a different constant) to  $2^{s_n}/n_\CI \to 0$. However, our results provide the more delicate consistency of a single tree, without needing to average across a whole forest. 
\end{rmk}

A concrete choice of $s_n$ could be
\begin{equation*}
    s_n = \Big\lfloor \frac{\log(n_\CI)}{1 + \varepsilon} \Big\rfloor
\end{equation*}
for some $\varepsilon > 0$, as then $\log(n_\CI)/s_n \geq 1 + \varepsilon$, and we obviously have $s_n \to \infty$. Hence for this choice of $s_n$, we get weak consistency of a decision tree grown using uniform splitting. 

\subsubsection{Centered splitting}

Centered splitting (see e.g.~\cite{Klusowski}) is another example of a non-adaptive splitting scheme and works as follows:

\begin{enumerate}
    \item[(i)] At every split, select a splitting feature $j$ at random with probability $p_j > 0$.
    \item[(ii)] Split the node in the midpoint of the interval of the chosen feature. 
    \item[(iii)] The tree is grown to be balanced with a total of $2^{s_n}$ leaf nodes, implying that each leaf has been split exactly $s_n$ times. 
\end{enumerate}

In this case, we obviously have
\begin{equation*}
    \lambda(L(x, \theta)) = 2^{-s_n}.
\end{equation*}
Choosing $s_n$ in a way such that $2^{s_n}/n_\CI \to 0$ for $n \to \infty$, we ensure that (II) holds. If also $s_n \to \infty$ and $p_j > 0$ for all $j = 1, ..., d$, a split on feature $j$ happens infinitely often with probability one (formally, this holds due to the Borel--Cantelli lemma), and so $b_{n, j}^x - a_{n, j}^x \to 0$ $\BP$-a.s. for every $j$. To see why, we make the observation
\begin{equation*}
    b_{n, j}^x - a_{n, j}^x = 2^{-s_{n, j}(x)}
\end{equation*}
with $s_{n, j}(x)$ the number of splits along feature $j$ leading to $x$. Since $s_{n, j}(x) \sim \text{Bin}(s_n, p_j)$ and $s_n \to \infty$, it holds that $s_{n, j}(x) \to \infty$ $\BP$-a.s. and hence (I) also holds a.s. A concrete choice of $s_n$ is
\begin{equation*}
    s_n = \lceil \log_2 n_\CI^{1 - \beta} \rceil
\end{equation*}
for some $\beta \in (0, 1)$. With this choice, we have $s_n \to \infty$ and 
\begin{equation*}
    n_\CI\lambda(L(x, \theta)) = \frac{n_\CI}{2^{s_n}} > \frac{n_\CI}{2^{\log_2 n_\CI^{1 - \beta} + 1}} = \frac{n_\CI^\beta}{2}.
\end{equation*}
This shows that (II) holds a.s. Hence a decision tree grown using centered splitting is weakly consistent with this choice of $s_n$. As for strong consistency, note that for this choice of $s_n$,
\begin{equation*}
    \exp(-\delta n_\CI \lambda(L(x, \theta)) \leq \exp(-\delta n_\CI^\beta / 2)
\end{equation*}
which is summable by the integral test as
\begin{equation*}
    \int_1^\infty e^{-\delta t^\beta}\mathrm{d}t = \frac{1}{\beta} \int_1^\infty u^{1/\beta - 1} e^{-\delta u}\mathrm{d}u < \infty.
\end{equation*}
We may thus conclude that choosing $s_n = \lceil \log_2 n_\CI^{1 - \beta} \rceil$, a decision tree grown using centered splitting is strongly consistent. \\

Establishing uniform consistency is, as we can tell, not possible through Theorem~\ref{thm:uniformconsistency} without some modification to the choice of splitting variable. Clearly,
\begin{equation*}
    \sup_{x \in [0, 1]^d} \{b_{n, j}^x - a_{n, j}^x\} = 2^{-\inf_{x \in [0, 1]^d} s_{n, j}(x)},
\end{equation*}
and so we see that
\begin{equation*}
    \sup_{x \in [0, 1]^d} \{b_{n, j}^x - a_{n, j}^x\} \to 0 \quad \Leftrightarrow \quad \inf_{x \in [0, 1]^d} s_{n, j}(x) \to \infty
\end{equation*}
where the convergences can be either in probability or almost surely. The following proposition gives a negative result in this direction.

\begin{prop}[Non-diverging uniform number of splits for centered splitting]\label{prop:centerednonuniform}
For a decision tree grown using centered splitting and a feature $j$ with $p_j \in (0, 1/2)$, it holds that
\begin{equation*}
    \lim_{s_n \to \infty}\BP\Big(\inf_{x \in [0, 1]^d}s_{n, j}(x) = 0 \Big) = \frac{1 - 2p_j}{1 - p_j} > 0
\end{equation*}
so that in particular, $\sup_{x \in [0, 1]^d} \{b_{n, j}^x - a_{n, j}^x\}$ does not go to zero in probability.
\end{prop}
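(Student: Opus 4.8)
The plan is to recast the event $\{\inf_{x\in[0,1]^d} s_{n,j}(x)=0\}$ as a statement about a random complete binary tree and then to identify the limiting probability as the nontrivial fixed point of an explicit one-dimensional recursion. First I would observe that, since centered splitting grows a balanced tree in which every leaf is reached after exactly $s_n$ splits, the tree underlying $L(x,\theta)$ is a complete binary tree of depth $s_n$ whose internal nodes carry independent feature labels, the label of a given node being $j$ with probability $p_j$. Because $s_{n,j}(x)$ counts precisely the $j$-labelled internal nodes on the path from the root to the leaf containing $x$, the event $\{\inf_x s_{n,j}(x)=0\}$ is exactly the event that this tree admits a root-to-leaf path along which feature $j$ is never selected.

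Next I would exploit self-similarity: the two subtrees hanging off the root are independent complete binary trees of depth $s_n-1$ with the same law as the whole. Let $\pi_k$ denote the probability that a depth-$k$ tree of this type admits a root-to-leaf path avoiding feature $j$, and put $q:=1-p_j$. Conditioning on the label at the root gives $\pi_0=1$ and, for $k\ge 1$,
\[
\pi_k = q\bigl(1-(1-\pi_{k-1})^2\bigr),
\]
since an avoiding path exists iff the root is not labelled $j$ and at least one of the two subtrees contains an avoiding path. Consequently $\BP\bigl(\inf_x s_{n,j}(x)=0\bigr)=\pi_{s_n}$, and the proposition reduces to computing $\lim_{k\to\infty}\pi_k$.

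I would then study the scalar map $g(p)=q(2p-p^2)$, which is strictly increasing on $[0,1]$ and whose only fixed points are $0$ and $p^\star:=(2q-1)/q=(1-2p_j)/(1-p_j)$; for $p_j\in(0,1/2)$ one has $p^\star\in(0,1)$ and $g'(0)=2q>1$, so the origin is repelling. Since $\pi_1=g(1)=q<1=\pi_0$ and $g$ is increasing, the sequence $\{\pi_k\}$ is non-increasing; and an easy induction using $g(p^\star)=p^\star$, monotonicity of $g$, and $\pi_0>p^\star$ shows $\pi_k\ge p^\star$ for all $k$. Hence $\pi_k$ decreases to a fixed point of $g$ that is at least $p^\star$, which forces $\pi_k\downarrow p^\star$, i.e.\ $\lim_{s_n\to\infty}\BP(\inf_x s_{n,j}(x)=0)=(1-2p_j)/(1-p_j)>0$. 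Finally, since $\sup_{x\in[0,1]^d}\{b_{n,j}^x-a_{n,j}^x\}=2^{-\inf_x s_{n,j}(x)}$ takes the value $1$ on the event $\{\inf_x s_{n,j}(x)=0\}$, we obtain $\BP\bigl(\sup_x\{b_{n,j}^x-a_{n,j}^x\}=1\bigr)\to(1-2p_j)/(1-p_j)>0$, so this supremum does not converge to $0$ in probability.

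The main obstacle is the convergence analysis of the recursion in the third step: one must rule out $\pi_k\to 0$, which is genuinely possible for the map $g$ (the origin is a fixed point) and is excluded only because $p_j<1/2$ makes it repelling — hence the need for the monotonicity-plus-lower-bound argument rather than a naive contraction estimate. A secondary point requiring care is the independence of the labels of distinct internal nodes, which is what makes the two root subtrees genuinely independent copies and legitimizes the recursion.
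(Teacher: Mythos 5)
Your proof is correct and takes essentially the same route as the paper: both condition on the label of the root, exploit the independence of the two subtrees to set up a one-dimensional quadratic recursion, and then run a fixed-point/monotonicity analysis. The only difference is cosmetic: you track $\pi_k=\BP(\inf_x s_{k,j}(x)=0)$ directly, whereas the paper tracks the complementary quantity $G_k(1)=\BP(\inf_x s_{k,j}(x)\ge 1)=1-\pi_k$; your map $g$ and the paper's map $f(x)=p_j+(1-p_j)x^2$ are conjugate under $x\mapsto 1-x$. If anything, you supply a bit more detail on the convergence step (the monotonicity-plus-lower-bound sandwich ruling out the repelling fixed point), which the paper dispatches with the remark that $G_k(1)$ is \emph{obviously} non-decreasing; your version is the same argument made explicit.
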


This result tells us that even if $s_{n, j}(x) \to \infty$ $\BP$-a.s. for every $x \in [0, 1]^d$, as the number of possible paths to a leaf grows, the number of possibilities for a path to contain no splits on $j$ grows so quickly that the probability of having at least one such path is non-negligible, at least for the usual situation where $p_j < 1/2$. Hence, a modification of the choice of splitting variable in each node is required in order to apply Theorem~\ref{thm:uniformconsistency}. Inspired by the elementary approach from the more general splitting criteria above, we may assume that at least every $N_j$'th split in a given path is on variable $j$. Then
\begin{equation*}
    \inf_{x \in [0, 1]^d}s_{n, j}(x) \geq \frac{s_n}{N_j} \to \infty.
\end{equation*}
Ensuring that the infimum goes to infinity, we can verify the remaining requirements for strong uniform consistency with a proper choice of $s_n$ as the following theorem shows.

\begin{thm}[Strong uniform consistency for modified centered splitting]\label{thm:centereduniform}
Assume that a decision tree is grown using centered splitting with part (ii) of the above algorithm modified to ensure that
\begin{equation*}
    \inf_{x \in [0, 1]^d} s_{n, j}(x) \to \infty \quad \BP\text{-a.s.} 
\end{equation*}
for all $j$. Then if we choose $s_n = \lceil \log_2 n_\CI^{1 - \beta}\rceil$ with $\beta \in (1/2, 1)$, the tree estimator $T(\cdot, \theta)$ is strongly uniformly consistent in $x$.
\end{thm}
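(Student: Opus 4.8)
The plan is to verify all the hypotheses of the strong uniform consistency statement, Theorem~\ref{thm:uniformconsistency}(ii), for the modified centered splitting scheme with the prescribed choice of $s_n$. Three things must be checked: (a) the side-lengths shrink uniformly, $\sup_{x}\{b_{n,j}^x - a_{n,j}^x\} \to 0$ a.s.\ for all $j$; (b) there is a sequence $\{M_n\}$ with $\sum_n e^{-tM_n}/\underline{v}_n < \infty$ a.s.\ for all $t>0$; and (c) for this same $\{M_n\}$, $\sum_n \max\{(\underline{v}_n\sqrt{n_\CI}/M_n)^{4d},1\}\,e^{-\delta \underline{v}_n^2 n_\CI/M_n^2} < \infty$ a.s.\ for all $\delta>0$. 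The Lipschitz hypotheses on $f$ and $m$ and the existence of the MGF of $Y$ are assumed directly in the statement, so they carry over verbatim.

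First I would handle (a). Under centered splitting, $b_{n,j}^x - a_{n,j}^x = 2^{-s_{n,j}(x)}$, so $\sup_x \{b_{n,j}^x - a_{n,j}^x\} = 2^{-\inf_x s_{n,j}(x)}$, and the modification to part (ii) of the algorithm is assumed precisely to guarantee $\inf_x s_{n,j}(x)\to\infty$ a.s.; hence (a) is immediate. For the volume, since every leaf is split exactly $s_n$ times in total, centered splitting gives the \emph{deterministic} identity $\lambda(L(x,\theta)) = 2^{-s_n}$ for every $x$, so $\underline{v}_n = 2^{-s_n}$. With $s_n = \lceil \log_2 n_\CI^{1-\beta}\rceil$ we get, as computed in the excerpt, $n_\CI\underline{v}_n = n_\CI/2^{s_n} \ge n_\CI^\beta/2$, and also $\underline{v}_n \le 2 n_\CI^{-(1-\beta)} \to 0$. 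These two bounds, $\underline{v}_n \asymp n_\CI^{-(1-\beta)}$ and $n_\CI \underline{v}_n \ge n_\CI^\beta/2$, are the only facts about $\underline{v}_n$ I will need.

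Next I would choose $\{M_n\}$ and verify (b) and (c). The natural choice (mirroring the bounded-response corollary, where one effectively takes $M_n$ constant, but now accounting for unbounded $Y$) is a slowly growing sequence, say $M_n = n_\CI^{\gamma}$ for a small $\gamma>0$, or even $M_n = C\log n_\CI$; the point is that $M_n$ must grow fast enough that $e^{-tM_n}/\underline{v}_n = e^{-tM_n} 2^{s_n} \le e^{-tM_n}\cdot 2 n_\CI^{1-\beta}$ is summable for every $t>0$ — a logarithmic $M_n$ of sufficiently large constant, or any polynomial $M_n$, already kills the polynomial factor $n_\CI^{1-\beta}$ and leaves a summable tail, giving (b). For (c), the key ratio is $\underline{v}_n^2 n_\CI / M_n^2$. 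Using $\underline{v}_n \ge c\, n_\CI^{-(1-\beta)}$ we get $\underline{v}_n^2 n_\CI \ge c^2 n_\CI^{1-2(1-\beta)} = c^2 n_\CI^{2\beta-1}$, and since $\beta > 1/2$ the exponent $2\beta - 1$ is strictly positive; dividing by $M_n^2$ (polynomial or polylogarithmic) still leaves a positive power of $n_\CI$, so $e^{-\delta \underline{v}_n^2 n_\CI/M_n^2}$ decays faster than any polynomial, dominating the polynomial prefactor $\max\{(\underline{v}_n\sqrt{n_\CI}/M_n)^{4d},1\}$ and yielding a summable series. This is exactly where the restriction $\beta \in (1/2,1)$, rather than $\beta\in(0,1)$, is used: it is what makes $\underline{v}_n^2 n_\CI$ a positive power of $n_\CI$. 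Once (a), (b), (c) are in hand, Theorem~\ref{thm:uniformconsistency}(ii) applies and gives the claimed strong uniform consistency of $\widehat f(\cdot,\theta)$ and $\widehat m(\cdot,\theta)$.

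The main obstacle — really the only nontrivial bookkeeping — is pinning down a single sequence $\{M_n\}$ that simultaneously satisfies (b) and (c) for \emph{all} $t,\delta>0$; the tension is that (b) wants $M_n$ large (to beat $1/\underline{v}_n$) while (c) wants $M_n$ not too large (since $M_n^2$ appears in the denominator of the exponent). Because $\underline{v}_n$ is only polynomially small while $\underline{v}_n^2 n_\CI$ is a genuine positive power of $n_\CI$ when $\beta>1/2$, there is a comfortable window — any $M_n$ that is polylogarithmic with a large enough leading constant, or a small polynomial $n_\CI^\gamma$ with $\gamma < (2\beta-1)/2$, works — so the obstacle is resolved cleanly once the two elementary bounds on $\underline{v}_n$ from the previous paragraph are recorded. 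I would also remark that, the scheme being non-adaptive, honesty is not actually needed here, but it is harmless to invoke the general theorem which assumes it.
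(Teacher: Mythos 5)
Your proof follows the same route as the paper's: verify the three hypotheses of Theorem~\ref{thm:uniformconsistency}(ii) using the deterministic identity $\underline{v}_n = 2^{-s_n}$ with the chosen $s_n$, and then select a suitable sequence $\{M_n\}$. Your final choice $M_n = n_\CI^\gamma$ with $\gamma \in (0, \beta - 1/2)$ is a valid alternative to the paper's $M_n = s_n^\gamma$ (so $M_n \asymp (\log n_\CI)^\gamma$ with $\gamma > 1$); both choices satisfy conditions (b) and (c), so this is a harmless variant. One of your side remarks is inaccurate, though: $M_n = C\log n_\CI$ does \emph{not} satisfy condition (b), because Theorem~\ref{thm:uniformconsistency}(ii) demands summability of $e^{-tM_n}/\underline{v}_n$ for \emph{every} $t>0$ with a single fixed $\{M_n\}$, while here $e^{-tM_n}/\underline{v}_n \asymp n_\CI^{1-\beta-tC}$ fails to be summable as soon as $tC \le 2-\beta$, i.e.\ for $t$ sufficiently small; no fixed constant $C$ rescues this. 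That is precisely why both your final polynomial choice and the paper's polylogarithmic choice with exponent $\gamma>1$ grow super-logarithmically. With that caveat, the proposal is correct.
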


\section{Bootstrapping and random forests}\label{sec:bootstrap}

\subsection{Bootstrap assumptions}

Tree-based methods are primarily used for prediction and, in that case, a single tree is seldom used in practice. One instead grows a full forest of some fixed number of trees, $B$, where each tree is grown on a resampled version of the original dataset. Usually, the $B$ datasets are sampled with replacement (bootstrapped) and are of size $n$ as presented in~\cite{BreimanBagging}, an idea that goes back to Efron~\cite{Efron}. In our setup, it turns out that bootstrapped decision trees are still consistent under general bootstrap schemes, as long as the splitting rule still behaves well under the bootstrapped data. 

Let us make it precise what we mean by bootstrapping in our setup. Recall that we work under the honesty condition given in Assumption~\ref{asm:honesty} where data is partitioned into $\CI$ (prediction) and $\CJ$ (growing). If we apply bootstrapping with replacement before partitioning the data, it is possible that $\CI$ and $\CJ$ overlap, interfering with the above consistency results. On the other hand, if we subsample without replacement before partitioning, we clearly still obtain consistency as long as the subsample size grows to infinity with $n$ and as long as parameters varying with $n$ are reframed in terms of the subsample size. We therefore consider an alternative setup, where we first partition the original observations into $\CI$ and $\CJ$ and then bootstrap these disjoint subsets separately. 

To be precise, and now using the weighted version of the bootstrap (see~\cite{VaartWellner}), let $\boldsymbol{W}^\CI = (W_1^\CI, ..., W_{n_\CI}^\CI)$ denote the bootstrap weights associated with $\CI$ and $\boldsymbol{W}^\CJ = (W_1^\CJ, ..., W_{n_\CJ}^\CJ)$ the weights associated with $\CJ$. We use an * to denote quantities formed by bootstrapping. For example,
\begin{equation*}
    L^*(x, \theta) = \prod_{j = 1}^d (a_{n, j}^{x, *}, b_{n, j}^{x, *}]
\end{equation*}
is the leaf where $x$ lies, when the tree is grown on the bootstrapped $\CJ$-sample. 

\begin{asm}[Bootstrap weights]\label{asm:bootstrap}
We require the following:
\begin{enumerate}
    \item[(i)] The weights are non-negative and exchangeable for both $\CI$ and $\CJ$.
    \item[(ii)] The $\CI$-weights and $\CJ$-weights are independent and both sets of weights are independent of the original observations and $\theta$. 
    \item[(iii)] On the $\CI$-weights we assume $\sup_{n \in \BN} \BE[W_1^\CI] < \infty$, that
    \begin{equation*}
        \lim_{n \to \infty} \frac{\BE[W_1^\CI W_2^\CI]}{\BE[W_1^I]^2} = 1
    \end{equation*}
    and that the limit
    \begin{equation*}
        L_{2, 1} := \lim_{n \to \infty} \frac{\BE[(W_1^\CI)^2]}{\BE[W_1^\CI]}
    \end{equation*}
    exists and is finite. 
    \item[(iv)] We assume the following modification of Assumption~\ref{asm:leaf}:
    \begin{enumerate}
        \item[(I*)] $b_{n,j}^{x,*} - a_{n, j}^{x, *} \overset{\BP}{\to} 0$.
        \item[(II*)] $n_\CI \BE[W_1^\CI] \lambda(L^*(x, \theta)) \overset{\BP}{\to} \infty$.
    \end{enumerate}
\end{enumerate}
\end{asm}

We emphasise that we do not assume that the weights are integer-valued. The first limit condition in (iii) states that the weights are asymptotically uncorrelated, which is necessary to ensure that the variance of our estimators tends to zero. The moment conditions in (iii) are quite natural and are satisfied by several standard bootstrap schemes. We now consider some important examples.

\begin{ex}[{Multinomial subsampling}]
If $(W_1^\CI, ..., W_{n_\CI}^\CI)$ are multinomial with $m_n$ trials and probabilities $(1/n_\CI, ..., 1/n_\CI)$, we have
\begin{align*}
    \BE[W_1^\CI] &= \frac{m_n}{n_\CI}, \quad \BE[(W_1^\CI)^2] = \frac{m_n}{n_\CI}\Big(1 - \frac{1}{n_\CI}\Big) + \frac{m_n^2}{n_\CI^2} = \frac{m_n}{n_\CI}\Big(1 - \frac{1}{n_\CI} + \frac{m_n}{n_\CI}\Big) \\
    \BE[W_1^\CI W_2^\CI] &= \frac{m_n(m_n - 1)}{n_\CI^2}
\end{align*}
and so
\begin{equation*}
    \lim_{n \to \infty} \frac{\BE[W_1^\CI W_2^\CI]}{\BE[W_1^\CI]} = \lim_{n \to \infty}\frac{m_n(m_n - 1)/n_\CI^2}{m_n^2/n_\CI^2} = 1 -\lim_{n \to \infty} \frac{1}{m_n} = 1
\end{equation*}
as long as $m_n \to \infty$. We also see that $L_{2, 1} = 1 + \lim_{n \to \infty} \frac{m_n}{n_\CI}$ depending on whether the limit of $m_n/n_\CI$ exists. If $m_n = n_\CI$ (standard Efron bootstrap), $L_{2, 1} = 2$. 
\end{ex}

\begin{ex}[{Subsampling without replacement}]
When subsampling without replacement, the weights $W_i^\CI$ are binary. Let $m_n$ denote the subsample size. Obviously,
\begin{equation*}
    \BE[W_1^\CI] = \BE[(W_1^\CI)^2] = \BP(W_1^\CI = 1) = \frac{m_n}{n_\CI}
\end{equation*}
and
\begin{equation*}
    \BE[W_1^\CI W_2^\CI] = \BP(W_1^\CI = 1 \mid W_2^\CI = 1)\BP(W_2^\CI = 1) = \frac{m_n(m_n - 1)}{n_\CI(n_\CI - 1)}.
\end{equation*}
Thus, $L_{2, 1} = 1$ and
\begin{equation*}
    \frac{\BE[W_1^\CI W_2^\CI]}{\BE[W_1^\CI]^2} = \frac{n_\CI}{n_\CI - 1} \frac{m_n - 1}{m_n} \to 1
\end{equation*}
as long as $m_n \to \infty$.
\end{ex}

\begin{ex}[{Wild bootstrap}]
The wild bootstrap, originally proposed by Wu,~\cite{Wu}, is obtained by choosing $W_1^\CI, ..., W_n^\CI$ iid with finite second moment. In this case, we trivially have that
\begin{equation*}
    \frac{\BE[W_1^\CI W_2^\CI]}{\BE[W_1^\CI]^2} = 1
\end{equation*}
and $L_{2,1} = \BE[(W_1^\CI)^2]/\BE[W_1^\CI]$ since the moments do not vary with $n$.
\end{ex}

Before presenting the results on consistency of a bootstrap decision tree, we introduce some notation. We let
\begin{equation*}
    \widehat{f}^*(x, \theta) := \sum_{i = 1}^{n_\CI} \frac{W_i^\CI \1_{\{X_i \in L^*(x, \theta)\}}}{n_\CI \BE[W_1^\CI] \lambda(L^*(x, \theta))}
\end{equation*}
denote the bootstrap version of $\widehat{f}(x, \theta)$. Similarly, we let
\begin{equation*}
    T^*(x, \theta) := \sum_{i = 1}^{n_\CI} \frac{W_i^\CI \1_{\{X_i \in L^*(x, \theta)\}}}{N_L^*(x, \theta)} Y_i, \quad \text{where} \quad N_L^*(x, \theta) = \sum_{i = 1}^{n_\CI} W_i^\CI \1_{\{X_i \in L^*(x, \theta)\}}
\end{equation*}
and $\widehat{m}^*(x, \theta) = T^*(x, \theta)\widehat{f}^*(x, \theta)$. 

\begin{rmk}
We comment on the above definitions. The expected value $\BE[W_1^\CI]$ controls the subsampling rate. In an $\CI$-sample of size $n_\CI$, the \enquote{expected number of observations} in a bootstrap sample is $n_\CI \BE[W_1^\CI]$. And indeed for multinomial subsampling and subsampling without replacement as presented in the above examples, $n_\CI \BE[W_1^\CI] = m_n$.
\end{rmk}

We now state the main results of this subsection, beginning with weak consistency.

\subsection{Weak consistency}

The following technical result is a bootstrap version of Lemma~\ref{lem:convergenceconditioning}. Let
\begin{equation*}
    \boldsymbol{W}^\CJ_\infty := \bigcup_{n = 1}^\infty \boldsymbol{W}_n^\CJ.
\end{equation*}
\begin{lem}[Bootstrap conditioning]\label{lem:convergenceconditioningbootstrap}
Fix $\delta > 0$. It holds that
\begin{align*}
    \BP(|\widehat{f}^*(x, \theta)& - \BE[\widehat{f}^*(x, \theta) \mid \theta, \CJ, \boldsymbol{W}^\CJ]| > \delta \mid \theta, \CJ, \boldsymbol{W}^\CJ) \\
    &= \BP(|\widehat{f}^*(x, \theta) - \BE[\widehat{f}^*(x, \theta) \mid \theta, \CJ, \boldsymbol{W}^\CJ]| > \delta \mid \theta, \CJ_\infty, \boldsymbol{W}^\CJ_\infty)
\end{align*}
and
\begin{align*}
    \BP(|\widehat{m}^*(x, \theta)& - \BE[\widehat{m}^*(x, \theta) \mid \theta, \CJ, \boldsymbol{W}^\CJ]| > \delta \mid \theta, \CJ, \boldsymbol{W}^\CJ) \\
    &= \BP(|\widehat{m}^*(x, \theta) - \BE[\widehat{m}^*(x, \theta) \mid \theta, \CJ, \boldsymbol{W}^\CJ]| > \delta \mid \theta, \CJ_\infty, \boldsymbol{W}^\CJ_\infty).
\end{align*}
\end{lem}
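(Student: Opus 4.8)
The plan is to mirror the argument used for Lemma~\ref{lem:convergenceconditioning}, tracking the extra randomness coming from the bootstrap weights. The key observation is that, conditionally on $\theta$, $\CJ$ and $\boldsymbol{W}^\CJ$, the leaf $L^*(x, \theta)$ is a \emph{fixed} (deterministic) rectangle: by construction $L^*(x, \theta)$ is a measurable function of $\theta$, the growing sample $\CJ$, and the growing weights $\boldsymbol{W}^\CJ$ only. The remaining sources of randomness entering $\widehat{f}^*(x, \theta)$ and $\widehat{m}^*(x, \theta)$ are then the prediction data $\{(X_i, Y_i)\}_{i=1}^{n_\CI}$ and the prediction weights $\boldsymbol{W}^\CI$. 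By Assumption~\ref{asm:bootstrap}(ii), the $\CI$-weights are independent of the $\CJ$-weights, of $\theta$, and of the original observations; and by Assumption~\ref{asm:honesty} the $\CI$-sample is disjoint from the $\CJ$-sample, hence independent of it. Therefore the triple of \enquote{prediction-side} ingredients $\big(\{(X_i,Y_i)\}_{i=1}^{n_\CI}, \boldsymbol{W}^\CI\big)$ is independent of the entire \enquote{growing-side} $\sigma$-algebra generated by $\big(\theta, \CJ_\infty, \boldsymbol{W}^\CJ_\infty\big)$.

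With this in hand, the argument runs as follows. First, write out $\widehat{f}^*(x,\theta)$ and $\widehat m^*(x,\theta)$ as explicit functions $\phi\big(L^*(x,\theta); \{(X_i,Y_i)\}_{i\le n_\CI}, \boldsymbol W^\CI\big)$ of the fixed leaf and the prediction-side data, and similarly express the centering $\BE[\,\cdot \mid \theta, \CJ, \boldsymbol W^\CJ]$ as a deterministic function $\psi$ of $L^*(x,\theta)$ obtained by integrating out the prediction-side randomness. Second, invoke the disintegration / freezing lemma (the same one cited for Lemma~\ref{lem:convergenceconditioning}, available from Appendix~\ref{sec:moreresults}): for a random variable of the form $g(U, V)$ with $U$ independent of a $\sigma$-algebra $\mathcal G \supseteq \sigma(V)$, one has $\BP(g(U,V) \in \cdot \mid \mathcal G) = \BP(g(U,V) \in \cdot \mid \sigma(V))$ almost surely, because the conditional law is computed by freezing $V$ at its realized value and integrating $U$ against its (unconditional) law regardless of how much extra information about $V$ is carried by $\mathcal G$. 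Applying this with $V = $ (the realized leaf $L^*(x,\theta)$), $U = \big(\{(X_i,Y_i)\}_{i\le n_\CI}, \boldsymbol W^\CI\big)$, $\sigma(V) \subseteq \sigma(\theta, \CJ, \boldsymbol W^\CJ)$ and $\mathcal G = \sigma(\theta, \CJ_\infty, \boldsymbol W^\CJ_\infty)$ gives the claimed equality of conditional probabilities — both sides equal the same function of the realized leaf. The event in question, $\{|\widehat f^*(x,\theta) - \BE[\widehat f^*(x,\theta)\mid \theta,\CJ,\boldsymbol W^\CJ]| > \delta\}$, is of exactly this form since the centering term is $\sigma(L^*(x,\theta))$-measurable. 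The identical reasoning applies verbatim to $\widehat m^*(x,\theta)$, using in addition that $\BE[Y^2]<\infty$ (Assumption~\ref{asm:covariate}) and $\sup_n \BE[W_1^\CI]<\infty$ (Assumption~\ref{asm:bootstrap}(iii)) to guarantee the relevant conditional expectations are well defined and finite.

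The main obstacle is bookkeeping rather than conceptual: one must be careful that $L^*(x,\theta)$ genuinely depends on \emph{no} prediction-side randomness — in particular that the number of splits $s(x)$ and all thresholds $a_{n,j}^{x,*}, b_{n,j}^{x,*}$ are functions of $(\theta, \CJ, \boldsymbol W^\CJ)$ alone — and that enlarging the conditioning $\sigma$-algebra from $\sigma(\theta, \CJ, \boldsymbol W^\CJ)$ to $\sigma(\theta, \CJ_\infty, \boldsymbol W^\CJ_\infty)$ adds only growing-side information, which by the independence structure above leaves the conditional law of the prediction-side functional unchanged. Once these measurability and independence assertions are stated cleanly, the conclusion is immediate from the freezing lemma, exactly as in the proof of Lemma~\ref{lem:convergenceconditioning}.
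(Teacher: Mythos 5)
Your argument reaches the right conclusion but by a different route, and it mis-identifies the supporting lemma. The paper applies Lemma~\ref{lem:condmean} once, with $\CD := \sigma(\theta, (X_1,Y_1),\ldots,(X_n,Y_n), \boldsymbol{W}^\CI, \boldsymbol{W}^\CJ_\infty)$, $\CG := \sigma(\theta, \CJ, \boldsymbol{W}^\CJ_\infty)$ and $\CH := \sigma(\CJ_\infty\setminus\CJ)$; it checks $\CG\subseteq\CD$, $\CD\indep\CH$ and $\CD$-measurability of the indicator, then reads off $\BE[Z\mid\CG]=\BE[Z\mid\CG\lor\CH]$. You instead reduce \emph{both} conditional probabilities to the same function of the realized leaf by freezing $L^*(x,\theta)$ and integrating out the prediction-side randomness $U=(\{(X_i,Y_i)\}_{i\le n_\CI},\boldsymbol{W}^\CI)$; this is sound, and it additionally exploits the correct observation that the centering $\BE[\widehat f^*(x,\theta)\mid\theta,\CJ,\boldsymbol{W}^\CJ]$ is measurable with respect to $\sigma(L^*(x,\theta))$ alone. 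However, the ``freezing/disintegration lemma'' you cite is \emph{not} Lemma~\ref{lem:condmean} from Appendix~\ref{sec:moreresults}: your statement has the form $\BE[g(U,V)\mid\CF_0]=\BE[g(U,V)\mid\sigma(V)]$ for $U\indep\CF_0$ and $\sigma(V)\subseteq\CF_0$, whereas Lemma~\ref{lem:condmean} has the form $\BE[Z\mid\CG]=\BE[Z\mid\CG\lor\CH]$ for $\CG\subseteq\CD$, $\CD\indep\CH$. The former does not follow from the latter in general, since there need not exist a complement of $\sigma(V)$ inside $\CF_0$ that is independent of $\sigma(U,V)$. If you prefer your route, state and prove the freezing lemma separately; otherwise the paper's one-line transplant of $\CD,\CG,\CH$ from the proof of Lemma~\ref{lem:convergenceconditioning}, with the bootstrap weights appended, is the leaner path and does not require the leaf-functional structure at all.
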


The main result regarding weak consistency of bootstrap trees is the following. 

\begin{thm}[Consistency of a bootstrap decision tree]\label{thm:bootstraptreeconsistency}
Under Assumptions \ref{asm:covariate}, \ref{asm:honesty} and \ref{asm:bootstrap}, it holds that
\begin{equation*}
    T^*(x, \theta) \overset{\BP}{\to} \BE[Y \mid X = x].
\end{equation*}
\end{thm}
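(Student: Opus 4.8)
The plan is to mirror the proof of the non-bootstrap Theorem~\ref{thm:conspart1}, replacing the triple $\widehat f, \widehat m, T$ with their starred versions and conditioning additionally on the $\CJ$-weights. First I would establish the bootstrap analogue of Proposition~\ref{prop:densityestimatorconsistent}, namely $\widehat f^*(x,\theta) \overset{\BP}{\to} f(x)$. As in the non-bootstrap case, one conditions on $(\theta, \CJ, \boldsymbol{W}^\CJ)$, which fixes the leaf $L^*(x,\theta)$, and computes the conditional mean and variance of $\widehat f^*(x,\theta)$. Here the key point is that, given the leaf, $\widehat f^*(x,\theta)$ is an average of the independent terms $W_i^\CI \1_{\{X_i \in L^*\}}$. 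The conditional expectation is $\BE[W_1^\CI]\, \BP(X_1 \in L^*)/(\BE[W_1^\CI]\lambda(L^*)) \to f(x)$ by continuity of $f$ and (I*) — the $\BE[W_1^\CI]$ cancels, which is exactly why the normalization was chosen this way. For the variance, one expands $\Var$ of the sum: the diagonal terms contribute something of order $\BE[(W_1^\CI)^2]\,\BP(X_1 \in L^*)$, and after dividing by $(n_\CI \BE[W_1^\CI]\lambda(L^*))^2$ this is of order $L_{2,1}/(n_\CI \BE[W_1^\CI]\lambda(L^*)) \to 0$ by (II*) and moment condition (iii); the off-diagonal covariance terms vanish asymptotically precisely because $\BE[W_1^\CI W_2^\CI]/\BE[W_1^\CI]^2 \to 1$, which makes the weights asymptotically uncorrelated. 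Chebyshev's inequality conditionally, followed by Lemma~\ref{lem:convergenceconditioningbootstrap} and dominated convergence to remove the conditioning (exactly as in Proposition~\ref{prop:densityestimatorconsistent}), gives unconditional convergence in probability.

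Next I would prove $\widehat m^*(x,\theta) \overset{\BP}{\to} m(x) = \BE[Y \mid X=x] f(x)$ by the same conditioning scheme. The conditional mean of $\widehat m^*(x,\theta)$ is $\BE[Y \1_{\{X \in L^*\}}]/\lambda(L^*) = \int_{L^*} \BE[Y \mid X=z] f(z)\,\mathrm{d}z / \lambda(L^*) \to m(x)$ by continuity of $m$ and (I*), again with the weight expectation cancelling. For the conditional variance one needs the second-moment control $\BE[Y^2 \mid X=x] \leq K$ from Assumption~\ref{asm:covariate} to bound $\BE[(W_1^\CI)^2 Y_1^2 \1_{\{X_1 \in L^*\}}]$ by $K\, \BE[(W_1^\CI)^2]\, \BP(X_1 \in L^*)$, and then the same division by $(n_\CI \BE[W_1^\CI] \lambda(L^*))^2$ together with (II*) and (iii) kills it; the cross terms again vanish by the asymptotic-uncorrelatedness condition. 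This step is essentially identical in structure to the density step, just carrying a $Y_i$ factor and using the $L^2$ moment bound.

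Finally I would combine the two limits. Observe that $T^*(x,\theta) = \widehat m^*(x,\theta)/\widehat f^*(x,\theta)$ on the event $\{\widehat f^*(x,\theta) > 0\}$, since both numerator and denominator carry the same $W_i^\CI \1_{\{X_i \in L^*\}}$ weights and the $n_\CI \BE[W_1^\CI]\lambda(L^*)$ normalizations cancel in the ratio. Since $\widehat f^*(x,\theta) \overset{\BP}{\to} f(x) \geq \varepsilon > 0$ by Assumption~\ref{asm:covariate}, the probability that $\widehat f^*(x,\theta) \leq \varepsilon/2$ tends to zero, and on the complement the continuous mapping theorem applied to $(u,v) \mapsto v/u$ (continuous on $\{u \geq \varepsilon/2\}$) gives $T^*(x,\theta) \overset{\BP}{\to} m(x)/f(x) = \BE[Y \mid X=x]$. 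I expect the main obstacle to be the variance computation for $\widehat m^*$ and $\widehat f^*$: one must carefully isolate the diagonal and off-diagonal contributions of the weighted sum, and verify that the combination of the three moment conditions in Assumption~\ref{asm:bootstrap}(iii) — boundedness of $\BE[W_1^\CI]$, asymptotic uncorrelatedness, and existence of $L_{2,1}$ — is exactly what is needed so that the off-diagonal terms are negligible and the diagonal terms are controlled by $1/(n_\CI \BE[W_1^\CI]\lambda(L^*))$, which then goes to zero by (II*). Everything else is a routine transcription of the non-bootstrap argument with the extra conditioning variable $\boldsymbol{W}^\CJ$ handled by Lemma~\ref{lem:convergenceconditioningbootstrap}.
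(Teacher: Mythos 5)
Your proposal is correct and follows essentially the same route as the paper: condition on $(\theta, \CJ, \boldsymbol{W}^\CJ)$, compute conditional means and variances for $\widehat f^*$ and $\widehat m^*$ (isolating the diagonal terms with $\BE[(W_1^\CI)^2]$ and the off-diagonal terms with $\BE[W_1^\CI W_2^\CI]$, controlled by the three moment conditions in Assumption~\ref{asm:bootstrap}(iii) together with (II$^*$)), pass to the unconditional statement via Lemma~\ref{lem:convergenceconditioningbootstrap}, and finish with the continuous mapping theorem applied to $T^* = \widehat m^*/\widehat f^*$ on the event $\{\widehat f^* > 0\}$. One small verbal slip worth flagging: the summands $W_i^\CI \1_{\{X_i \in L^*\}}$ are not independent given the leaf, since the $\CI$-weights are only exchangeable (e.g.\ multinomial or without-replacement weights are negatively correlated), but your subsequent treatment of the off-diagonal covariance terms via $\BE[W_1^\CI W_2^\CI]/\BE[W_1^\CI]^2 \to 1$ is exactly what is needed, so the argument is unaffected.
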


Weak consistency of the random forest with $B$ trees, 
\begin{equation*}
    \text{RF}(x, B) := \frac{1}{B}\sum_{b = 1}^B T_b^*(x, \theta_b)
\end{equation*}
for $T_b^*$ the $b$'th bootstrap tree and iid $\{\theta_b\}$ follows immediately if Assumptions \ref{asm:covariate}, \ref{asm:honesty} and \ref{asm:bootstrap} hold for each tree. Notice that different weights and growing schemes may be considered to each individual tree. As for (iv) of Assumption~\ref{asm:bootstrap}, it is easily seen that for a totally non-adaptive tree (where the splits are placed independently of the data), there is no distinction between (iv) of Assumption~\ref{asm:bootstrap} and Assumption~\ref{asm:leaf}. For an adaptive splitting scheme, the situation is more complicated, but we can suitably adapt the notions and results of the subsection above on general splitting criteria to the bootstrap setup. Write
\begin{equation*}
    N_L^{j, *}(x, \theta) = \sum_{i = 1}^{n_\CI} \1_{\{X_i^j \in (a_{n, j}^{x, *}, b_{n, j}^{x, *}]\}}W_i^\CI
\end{equation*}
for the weighted number of $\CI$-observations in the $j$'th interval of the leaf of $x$. The following result is a bootstrap version of Proposition~\ref{prop:regularI}.

\begin{prop}[Equivalent node size shrinkage condition]\label{prop:regularIbootstrap}
Under Assumptions \ref{asm:covariate},  \ref{asm:honesty}, \ref{asm:bootstrap} (i), (ii) and (iii) as well as $n_\CI \BE[W_1^\CI] \to \infty$, condition (I*) of Assumption~\ref{asm:bootstrap} (iv) holds if and only if
\begin{equation}\label{eq:jproportionbootstrap}
\frac{N_L^{j, *}(x, \theta)}{n_\CI \BE[W_1^\CI]} \overset{\BP}{\to} 0 \quad \text{for all } j = 1, \dots, d.
\end{equation}
\end{prop}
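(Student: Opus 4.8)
The plan is to mirror the proof of Proposition~\ref{prop:regularI}, the one genuinely new ingredient being the bookkeeping of the bootstrap weights. First I would condition on the $\sigma$-algebra $\CG := \sigma(\theta, \CJ, \boldsymbol{W}^\CJ)$ and write $I_j^* := (a_{n,j}^{x,*}, b_{n,j}^{x,*}]$ for the $j$th side of the leaf $L^*(x,\theta)$. By honesty and Assumption~\ref{asm:bootstrap} (i)--(ii), conditionally on $\CG$ the interval $I_j^*$ is deterministic, the $\CI$-observations are still iid with their original law, and $\boldsymbol{W}^\CI$ is still exchangeable and independent of the $\CI$-observations; hence, using $\BE[W_i^\CI \mid \CG] = \BE[W_1^\CI]$,
\begin{equation*}
    \frac{\BE[N_L^{j,*}(x,\theta) \mid \CG]}{n_\CI \BE[W_1^\CI]} = \BP(X^j \in I_j^* \mid \CG) =: q_{n,j}.
\end{equation*}
Since the marginal density of $X^j$ lies in $[\varepsilon, C]$ (integrate the joint density over the remaining coordinates), $\varepsilon\,(b_{n,j}^{x,*} - a_{n,j}^{x,*}) \leq q_{n,j} \leq C\,(b_{n,j}^{x,*} - a_{n,j}^{x,*})$, so $q_{n,j} \overset{\BP}{\to} 0$ precisely when (I*) holds.

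The crux is then to show $\tfrac{N_L^{j,*}(x,\theta)}{n_\CI \BE[W_1^\CI]} - q_{n,j} \overset{\BP}{\to} 0$, since then both directions of the equivalence follow by writing $\tfrac{N_L^{j,*}(x,\theta)}{n_\CI \BE[W_1^\CI]} = q_{n,j} + \bigl(\tfrac{N_L^{j,*}(x,\theta)}{n_\CI \BE[W_1^\CI]} - q_{n,j}\bigr)$ and invoking the sandwich above. For this I would compute the conditional variance, splitting into diagonal and off-diagonal terms and using that, given $\CG$, the $X_i$ are iid and independent of $\boldsymbol{W}^\CI$, to get
\begin{equation*}
    \Var\bigl(N_L^{j,*}(x,\theta) \mid \CG\bigr) \leq n_\CI q_{n,j}\,\BE[(W_1^\CI)^2] + n_\CI(n_\CI-1)q_{n,j}^2\,\bigl|\BE[W_1^\CI W_2^\CI] - \BE[W_1^\CI]^2\bigr|.
\end{equation*}
Dividing by $(n_\CI \BE[W_1^\CI])^2$ and bounding $q_{n,j} \leq 1$ gives
\begin{equation*}
    \frac{\Var(N_L^{j,*}(x,\theta) \mid \CG)}{(n_\CI \BE[W_1^\CI])^2} \leq \frac{1}{n_\CI \BE[W_1^\CI]}\cdot\frac{\BE[(W_1^\CI)^2]}{\BE[W_1^\CI]} + \left|\frac{\BE[W_1^\CI W_2^\CI]}{\BE[W_1^\CI]^2} - 1\right| =: c_n,
\end{equation*}
and by Assumption~\ref{asm:bootstrap} (iii) together with $n_\CI \BE[W_1^\CI] \to \infty$ the right-hand side is a deterministic null sequence. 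Conditional Chebyshev then yields $\BP(|\tfrac{N_L^{j,*}(x,\theta)}{n_\CI\BE[W_1^\CI]} - q_{n,j}| > \eta \mid \CG) \leq c_n/\eta^2$ a.s.\ for every $\eta > 0$, and taking expectations finishes the claim.

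To conclude, I would read off the equivalence: if $\tfrac{N_L^{j,*}(x,\theta)}{n_\CI\BE[W_1^\CI]} \overset{\BP}{\to} 0$ then $q_{n,j} \overset{\BP}{\to} 0$, hence $b_{n,j}^{x,*} - a_{n,j}^{x,*} \leq q_{n,j}/\varepsilon \overset{\BP}{\to} 0$, i.e.\ (I*); and conversely (I*) forces $q_{n,j} \leq C(b_{n,j}^{x,*} - a_{n,j}^{x,*}) \overset{\BP}{\to} 0$, whence $\tfrac{N_L^{j,*}(x,\theta)}{n_\CI\BE[W_1^\CI]} \overset{\BP}{\to} 0$. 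The only real obstacle is the weight bookkeeping in the conditional variance: one must cleanly separate the diagonal contribution (governed by $\BE[(W_1^\CI)^2]$, hence by $L_{2,1}$) from the off-diagonal one (governed by $\BE[W_1^\CI W_2^\CI]$, hence by the asymptotic-uncorrelatedness condition), and it is exactly the normalization by $n_\CI\BE[W_1^\CI]$ in place of $n_\CI$ that makes both vanish — which is what Assumption~\ref{asm:bootstrap} (iii) and $n_\CI\BE[W_1^\CI] \to \infty$ are designed to guarantee. Since only convergence in probability is sought, each $n$ can be treated in isolation, so — unlike the almost-sure statements elsewhere — no appeal to Lemma~\ref{lem:convergenceconditioningbootstrap} is needed here.
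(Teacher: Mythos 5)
Your proof is correct and follows essentially the same route as the paper: condition on $\sigma(\theta,\CJ,\boldsymbol{W}^\CJ)$, show the deviation $\tfrac{N_L^{j,*}}{n_\CI\BE[W_1^\CI]}-p_j$ is a conditionally zero-mean quantity whose conditional variance admits a deterministic null upper bound, apply conditional Chebyshev plus the tower property, and then convert via the density sandwich $\varepsilon(b-a)\leq p_j\leq C(b-a)$. The only cosmetic difference is that the paper expands the conditional variance exactly (second moment minus mean squared) and lets each piece vanish under Assumption~4(iii), whereas you upper bound the off-diagonal contribution by $|\BE[W_1^\CI W_2^\CI]-\BE[W_1^\CI]^2|$ before normalizing; both yield the same deterministic sequence $c_n\to 0$. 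You are also right that no appeal to the bootstrap conditioning lemma is required here, and indeed the paper does not use it either for this in-probability statement.
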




The intuition behind conditions (I*) and (II*) is the same as in the case without bootstrapping. We introduce an analogous minimal node size condition by letting
\begin{equation}\label{eq:minnodesizebootstrap}
    a_n^*(x) = \BP(N_L^*(x, \theta) \geq k_n)
\end{equation}
where $(k_n)$ is again a deterministic sequence. With proper assumptions on $k_n$ and $a_n^*(x)$, we  obtain the following analogue of Proposition~\ref{prop:regularII}. Note how the choice of $k_n$ takes the subsampling rate into account. 

\begin{prop}[Node size growth condition under minimal node size]\label{prop:regularIIbootstrap}
Consider an honest bootstrap tree with $a_n^*(x)$ as defined in (\ref{eq:minnodesizebootstrap}) satisfying $a_n^*(x) \to 1$ and
\begin{equation*}
    \frac{k_n^2}{n_\CI\BE[W_1^\CI]} - 4d\log n_\CI \to \infty.
\end{equation*}
Define the adjusted bootstrap weights by $\widetilde{W}_i^\CI := W_i^\CI/\BE[W_1^\CI]$ and assume
\begin{equation}\label{eq:adjustedbootstrapweakcon}
    \BE[(\widetilde{W}_1^\CI - 1)(\widetilde{W}_2^\CI - 1)] \leq 0 \quad \text{or} \quad \BE[(\widetilde{W}_1^\CI - 1)(\widetilde{W}_2^\CI - 1)] = O(1/n_\CI). 
\end{equation}
Then (II$^*$) of Assumption~\ref{asm:bootstrap} (iv) holds. In particular, it holds for $k_n \sim (n_\CI \BE[W_1^\CI])^\beta$ for $\beta \in (1/2, 1)$.
\end{prop}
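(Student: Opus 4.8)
The plan is to mimic the proof of the non-bootstrap Proposition~\ref{prop:regularII}, tracking how the subsampling rate $\BE[W_1^\CI]$ enters. Condition (II$^*$) asks that $n_\CI\BE[W_1^\CI]\lambda(L^*(x,\theta))\overset{\BP}\to\infty$. The key observation is that $N_L^*(x,\theta)=\sum_{i=1}^{n_\CI}W_i^\CI\1_{\{X_i\in L^*(x,\theta)\}}$ has conditional mean (given $\theta,\CJ,\boldsymbol W^\CJ$, hence given $L^*(x,\theta)$) equal to $\BE[W_1^\CI]\,n_\CI\,\int_{L^*(x,\theta)}f\,\mathrm d\lambda$, which by Assumption~\ref{asm:covariate} is comparable to $\BE[W_1^\CI]\,n_\CI\,\lambda(L^*(x,\theta))$ up to the constants $\varepsilon,C$. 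So it suffices to show $N_L^*(x,\theta)\overset{\BP}\to\infty$ together with a concentration statement ensuring $N_L^*$ does not greatly exceed its conditional mean; the minimal node size assumption~\eqref{eq:minnodesizebootstrap} gives the lower bound $N_L^*\ge k_n$ for free, and $k_n\to\infty$ follows from the displayed growth condition since $\tfrac{k_n^2}{n_\CI\BE[W_1^\CI]}\ge 4d\log n_\CI+o(1)\to\infty$ forces $k_n^2\to\infty$ (using $\BE[W_1^\CI]\le \sup_n\BE[W_1^\CI]<\infty$ from Assumption~\ref{asm:bootstrap}(iii)).

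The substance is the upper tail. Condition on $\theta,\CJ,\boldsymbol W^\CJ$, so that $L^*(x,\theta)$ and its volume $v:=\lambda(L^*(x,\theta))$ are fixed, but the $\CI$-observations and the $\CI$-weights are still random; write $p:=\int_{L^*(x,\theta)}f\,\mathrm d\lambda\le Cv$ and $\mu:=n_\CI\BE[W_1^\CI]p$. I want to bound $\BP(N_L^*(x,\theta)\le k_n\mid\cdots)$ and show it is negligible unless $\mu$ is large, which would contradict~\eqref{eq:minnodesizebootstrap}. The route is a second-moment / Chebyshev argument on $N_L^*$ using the adjusted weights $\widetilde W_i^\CI=W_i^\CI/\BE[W_1^\CI]$: compute $\Var(N_L^*\mid\cdots)$, splitting into the variance contributed by the indicators $\1_{\{X_i\in L^*\}}$, the variance of the weights, and the weight covariance term $\BE[(\widetilde W_1^\CI-1)(\widetilde W_2^\CI-1)]=O(1/n_\CI)$ from~\eqref{eq:adjustedbootstrapweakcon}; the last assumption is exactly what controls the $n_\CI(n_\CI-1)$-many cross terms so that $\Var(N_L^*)=O(\mu)+O(\mu^2/n_\CI)+O((L_{2,1})\,n_\CI\BE[W_1^\CI]p)$, all of lower order than $\mu^2$ when $\mu\to\infty$. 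Then Chebyshev (or, to match the $n_\CI^{4d}$-type factor and the exponential in the displayed condition, a Bernstein/Bennett bound for the weighted sum when moments permit) yields $\BP(N_L^*\le \mu/2\mid\cdots)\to 0$; combined with $N_L^*\ge k_n$ a.s.\ this forces $\mu\ge k_n/2$ eventually, hence $n_\CI\BE[W_1^\CI]v\ge p/C\cdot(\text{const})\to\infty$, giving (II$^*$). The factor $4d\log n_\CI$ in the hypothesis and the power $n_\CI^{4d}$ are there precisely to absorb a union bound — as in Proposition~\ref{prop:regularII} — so I would run the concentration step with a deviation bound of the form $n_\CI^{4d}e^{-k_n^2/(2n_\CI\BE[W_1^\CI])}\to 0$, which is what $\tfrac{k_n^2}{n_\CI\BE[W_1^\CI]}-4d\log n_\CI\to\infty$ guarantees.

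For the concrete choice $k_n=\lceil(n_\CI\BE[W_1^\CI])^\beta\rceil$ with $\beta\in(1/2,1)$, I would just verify the hypothesis directly: $\tfrac{k_n^2}{n_\CI\BE[W_1^\CI]}\sim (n_\CI\BE[W_1^\CI])^{2\beta-1}$, and since $n_\CI\BE[W_1^\CI]\to\infty$ (an assumption of the companion Proposition~\ref{prop:regularIbootstrap}; here it follows because otherwise the growth condition fails) and $2\beta-1>0$, this grows like a positive power of $n_\CI\BE[W_1^\CI]$, which dominates $4d\log n_\CI$ because $\BE[W_1^\CI]$ is bounded so $n_\CI\BE[W_1^\CI]\le(\sup_n\BE[W_1^\CI])\,n_\CI$ and a power beats a logarithm.

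The main obstacle I anticipate is the variance bookkeeping for the doubly-random object $N_L^*$: one must be careful that the randomness of the $\CI$-weights and of the $\CI$-covariates enter multiplicatively but are independent (Assumption~\ref{asm:bootstrap}(ii)), so that $\Var(N_L^*\mid\theta,\CJ,\boldsymbol W^\CJ)$ factors cleanly, and that the exchangeability in (i) together with~\eqref{eq:adjustedbootstrapweakcon} is genuinely enough to kill the cross terms at rate $1/n_\CI$ rather than merely $o(1)$. Everything else is a faithful transcription of the arguments already used for Propositions~\ref{prop:regularI} and~\ref{prop:regularII}, with $n_\CI$ replaced by the effective sample size $n_\CI\BE[W_1^\CI]$ wherever it appears as a rate.
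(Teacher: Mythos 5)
Your overall plan (concentrate $N_L^*$ around its conditional mean, combine with the constraint $N_L^*\ge k_n$ to force a lower bound on $p(\theta,\CJ,\boldsymbol W^\CJ)$ and hence on $\lambda(L^*(x,\theta))$, finish with Lemma~\ref{lem:BCinfinity}) is the same as the paper's. But there is a genuine gap in the concentration step. You treat $N_L^*=\sum_i W_i^\CI\1_{\{X_i\in L^*\}}$ as a single weighted sum and propose a ``Bernstein/Bennett bound for the weighted sum when moments permit'' to recover the $n_\CI^{4d}e^{-c k_n^2/(n_\CI\BE[W_1^\CI])}$ factor. The moments do \emph{not} permit: Assumption~\ref{asm:bootstrap} and~\eqref{eq:adjustedbootstrapweakcon} control only second moments and a cross moment of the weights, not boundedness or sub-exponential tails, so no exponential deviation bound on $N_L^*$ itself is available. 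The paper's essential move is to first decompose
\begin{equation*}
\frac{N_L^*(x,\theta)}{n_\CI\BE[W_1^\CI]}-p(\theta,\CJ,\boldsymbol W^\CJ)
=\frac{1}{n_\CI}\sum_{i=1}^{n_\CI}(\widetilde W_i^\CI-1)\1_{\{X_i\in L^*(x,\theta)\}}
\;+\;\Big(\frac{\tilde N_L^*(x,\theta)}{n_\CI}-p(\theta,\CJ,\boldsymbol W^\CJ)\Big),
\end{equation*}
where $\tilde N_L^*=\sum_i\1_{\{X_i\in L^*\}}$, and then use a \emph{union bound} on the two pieces with two \emph{different} tools: the second (indicator-only) piece has increments in $[0,1]$, so the VC concentration of Lemma~\ref{lem:Devroyebound} produces the $cn_\CI^{4d}e^{-n_\CI\delta^2/2}$ factor (this is exactly where the $4d\log n_\CI$ hypothesis is consumed); the first (weight) piece is handled only by a second-order Markov inequality, which is where~\eqref{eq:adjustedbootstrapweakcon} enters and contributes a polynomial error $C''/(\delta^2 n_\CI\BE[W_1^\CI])$. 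A single Chebyshev on $N_L^*$ (your main plan) gives only a polynomial rate, so the $4d\log n_\CI$ term serves no purpose in your argument, and a single Bernstein (your alternative) is simply unavailable. You do observe that the variance of $N_L^*$ decomposes into indicator, weight, and cross-moment contributions, which is the same bookkeeping, but you would still need to split at the \emph{estimator} level to apply the VC bound where it applies.

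A secondary issue: your contradiction step ``$\BP(N_L^*\le\mu/2\mid\cdots)\to 0$, combined with $N_L^*\ge k_n$ a.s., forces $\mu\ge k_n/2$'' uses the wrong tail. Since $N_L^*\ge k_n$ holds always, the event $\{N_L^*\le\mu/2\}$ is already null whenever $\mu<2k_n$, so the lower-tail bound conveys nothing. The paper's version is direct rather than by contradiction: on the high-probability event where both deviation bounds hold, $p(\theta,\CJ,\boldsymbol W^\CJ)\ge N_L^*/(n_\CI\BE[W_1^\CI])-\delta\ge k_n/(n_\CI\BE[W_1^\CI])-\delta$; taking $\delta=k_n/(2n_\CI\BE[W_1^\CI])$ and using $p\le C\lambda(L^*)$ yields $n_\CI\BE[W_1^\CI]\lambda(L^*(x,\theta))\ge k_n/(2C)$ with probability tending to one, and Lemma~\ref{lem:BCinfinity}(1) concludes. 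Your verification of the hypothesis for $k_n=\lceil(n_\CI\BE[W_1^\CI])^\beta\rceil$ is essentially correct.
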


The following theorem follows from the two previous propositions.

\begin{thm}[Consistency of a bootstrap tree under minimal node size]\label{thm:weakconsistencybootstrap}
Suppose Assumption~\ref{asm:covariate} holds and let the $\CI$-weights satisfy~\eqref{eq:adjustedbootstrapweakcon},  $n_\CI\BE[W_1^\CI] \to \infty$ and Assumption~\ref{asm:bootstrap} (i), (ii) and (iii). Consider a bootstrap tree satisfying Assumption~\ref{asm:honesty},~\eqref{eq:jproportionbootstrap} and $a_n^*(x) \to 1$ with $a_n^*(x)$ defined in ~\eqref{eq:minnodesizebootstrap} and $k_n \sim (n_\CI \BE[W_1^\CI])^\beta$ for $\beta \in (1/2, 1)$. Then the tree is weakly consistent.
\end{thm}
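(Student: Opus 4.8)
The plan is to verify that, under the stated hypotheses, the whole of Assumption~\ref{asm:bootstrap} is in force, so that Theorem~\ref{thm:bootstraptreeconsistency} applies directly and yields $T^*(x,\theta)\overset{\BP}{\to}\BE[Y\mid X=x]$, which is exactly weak consistency of the tree. Assumptions~\ref{asm:covariate} and~\ref{asm:honesty}, together with parts (i), (ii) and (iii) of Assumption~\ref{asm:bootstrap}, are assumed outright; hence the entire task reduces to checking part (iv), namely conditions (I$^*$) and (II$^*$).

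For (I$^*$) I would invoke Proposition~\ref{prop:regularIbootstrap}. Its hypotheses --- Assumptions~\ref{asm:covariate}, \ref{asm:honesty}, parts (i)--(iii) of Assumption~\ref{asm:bootstrap}, and $n_\CI\BE[W_1^\CI]\to\infty$ --- are all among our assumptions, and that proposition states that (I$^*$) is equivalent to the balance condition~\eqref{eq:jproportionbootstrap}, which is assumed. Hence (I$^*$) holds.

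For (II$^*$) I would invoke Proposition~\ref{prop:regularIIbootstrap}. The tree satisfies the minimal node size condition~\eqref{eq:minnodesizebootstrap} with $k_n=\lceil(n_\CI\BE[W_1^\CI])^\beta\rceil$ and $\beta\in(1/2,1)$, and the adjusted weights satisfy~\eqref{eq:adjustedbootstrapweakcon} by hypothesis. The only genuine (and short) computation is the growth condition required by Proposition~\ref{prop:regularIIbootstrap}: from $k_n\ge(n_\CI\BE[W_1^\CI])^\beta$ one gets $k_n^2/(n_\CI\BE[W_1^\CI])\ge(n_\CI\BE[W_1^\CI])^{2\beta-1}$, which diverges because $2\beta-1>0$ and $n_\CI\BE[W_1^\CI]\to\infty$, and this divergence dominates $4d\log n_\CI$; this is precisely the ``in particular'' clause of Proposition~\ref{prop:regularIIbootstrap}, so (II$^*$) holds.

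With (I$^*$) and (II$^*$) established, part (iv) of Assumption~\ref{asm:bootstrap} is satisfied, so all of Assumptions~\ref{asm:covariate}, \ref{asm:honesty} and~\ref{asm:bootstrap} hold and Theorem~\ref{thm:bootstraptreeconsistency} delivers the conclusion. I do not expect any serious obstacle here: the result is essentially a packaging of Propositions~\ref{prop:regularIbootstrap} and~\ref{prop:regularIIbootstrap} with Theorem~\ref{thm:bootstraptreeconsistency}, and the only point needing even a line of work is the verification of the growth condition for the polynomial choice of $k_n$, which uses nothing beyond $\beta>1/2$ and $n_\CI\BE[W_1^\CI]\to\infty$.
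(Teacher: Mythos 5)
Your proposal is correct and is precisely the paper's intended argument: the paper itself only remarks that the theorem ``follows from the two previous propositions,'' i.e.\ Proposition~\ref{prop:regularIbootstrap} to obtain (I$^*$) from the balance condition~\eqref{eq:jproportionbootstrap}, Proposition~\ref{prop:regularIIbootstrap} (its ``in particular'' clause) to obtain (II$^*$) for $k_n=\lceil(n_\CI\BE[W_1^\CI])^\beta\rceil$, and then Theorem~\ref{thm:bootstraptreeconsistency} to conclude.

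One small remark on your hand-verification of the growth condition: $k_n^2/(n_\CI\BE[W_1^\CI])\ge(n_\CI\BE[W_1^\CI])^{2\beta-1}\to\infty$ does not by itself imply domination of $4d\log n_\CI$ --- that needs $n_\CI\BE[W_1^\CI]$ to grow faster than polylogarithmically in $n_\CI$, which the hypothesis $n_\CI\BE[W_1^\CI]\to\infty$ alone does not guarantee. Since you are ultimately just invoking the ``in particular'' clause that Proposition~\ref{prop:regularIIbootstrap} already asserts, this does not separate your proof from the paper's, but it is a point where the one-line justification is thinner than you present it.
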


%
%
%
%
The condition on the mixed moment $\BE[(\widetilde{W}_1^\CI - 1)(\widetilde{W}_2^\CI - 1)] = O(1/n_\CI)$ holds for the previously considered bootstrap schemes. Indeed, for the multinomial subsampling where $(W_1^\CI, ..., W_{n_\CI}^\CI)$ is multinomial with $m_n$ trials and probabilities $(1/n_\CI, ..., 1/n_\CI)$, we have
\begin{equation*}
    \BE[(\widetilde{W}_1^\CI - 1)(\widetilde{W}_2^\CI - 1)] = \frac{\BE[(W_1^\CI - \BE[W_1^\CI])(W_2^\CI - \BE[W_2^\CI])]}{\BE[W_1^\CI]^2} = -\frac{m_n/n_\CI^2}{m_n^2/n_\CI^2} = -\frac{1}{m_n}
\end{equation*}
which is always negative, and hence the condition is trivially satisfied. The same happens for the subsampling without replacement. There we have
%
%
%
\begin{equation*}
    \Cov(W_1^\CI, W_2^\CI) = 
    \frac{m_n}{n_\CI^2(n_\CI - 1)}(m_n - n_\CI) \leq 0
\end{equation*}
and so the requirement is trivially satisfied. For the wild bootstrap, the mixed moment is zero by independence. 

\subsection{Strong consistency}

Proving strong consistency again consists of two parts, namely considering the estimators $\widehat{f}^*(x, \theta)$ and $\widehat{m}^*(x, \theta)$ separately. We follow the exposition without the bootstrap above and present a result for the density estimator first, relying on a finite moment generating function for the bootstrap weights. This is the analogue of Proposition \ref{prop:densityestimatorstrongconvergence} above. Continuing along this route requires a specific interplay between the response and the bootstrap weights which is hard to verify for concrete bootstrap schemes, an exception being subsampling without replacement. If one forgoes the assumptions on the moment generating functions, one needs specific behaviour of the fourth moments along with the requirement that the bootstrap sample size is deterministic. This is however not a strong limitation in practice, since subsampling schemes remain default choices in many implementations. 

\begin{prop}[Strong consistency of a bootstrap decision tree density estimator]\label{prop:densityestimatorstrongconvergencebootstrap}
Let Assumptions \ref{asm:covariate}, \ref{asm:honesty} and \ref{asm:bootstrap} hold, where convergence in probability has been replaced by a.s. convergence. Assume that the $\CI$-weights $\boldsymbol{W}^\CI$ satisfy
\begin{equation*}
    \sup_{n \in \BN} \frac{\kappa_{W_1^\CI}(t) - 1}{\BE[W_1^\CI]} < \infty
\end{equation*}
where $\kappa_{W_1^\CI}$ is the moment generating function of $W_1^\CI$, assumed to exist in some neighbourhood of zero. If 
\begin{equation*}
    \sum_{n_\CI = 1}^\infty \exp(-\delta n_\CI \BE[W_1^\CI] \lambda(L^*(x, \theta))) < \infty,
\end{equation*}
for any $\delta > 0$, then $\widehat{f}^*(x, \theta)$ is strongly consistent,
\begin{equation*}
    \widehat{f}^*(x, \theta) \to f(x) \quad \BP\text{-a.s.}
\end{equation*}
\end{prop}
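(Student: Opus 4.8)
The plan is to mimic the conditioning strategy used for the non-bootstrap density estimator in Proposition~\ref{prop:densityestimatorstrongconvergence}, but now carrying the extra bootstrap randomness through the argument. First I would condition on $\theta$, $\CJ$ and the growing weights $\boldsymbol{W}^\CJ$, so that the leaf $L^*(x,\theta)$ and its volume $\lambda(L^*(x,\theta))$ become deterministic. Under this conditioning, $\widehat f^*(x,\theta)$ is an average over $\CI$ of the independent terms $W_i^\CI \1_{\{X_i \in L^*(x,\theta)\}}$, suitably normalised by $n_\CI \BE[W_1^\CI]\lambda(L^*(x,\theta))$. I would split $\widehat f^*(x,\theta) - f(x)$ into a bias part $\BE[\widehat f^*(x,\theta)\mid \theta,\CJ,\boldsymbol{W}^\CJ] - f(x)$ and a fluctuation part $\widehat f^*(x,\theta) - \BE[\widehat f^*(x,\theta)\mid \theta,\CJ,\boldsymbol{W}^\CJ]$. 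The bias part is handled exactly as in the non-bootstrap case: continuity of $f$ (Assumption~\ref{asm:covariate}) together with (I*) of Assumption~\ref{asm:bootstrap}(iv), which forces the side lengths to zero a.s., gives $\BE[\widehat f^*(x,\theta)\mid\cdots]\to f(x)$ a.s.

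For the fluctuation part I would use a Bernstein/Chernoff-type exponential inequality applied to the conditionally independent summands $W_i^\CI \1_{\{X_i\in L^*(x,\theta)\}} - \BE[W_i^\CI\1_{\{X_i\in L^*(x,\theta)\}}\mid\cdots]$. Here the moment generating function hypothesis on the weights, $\sup_n (\kappa_{W_1^\CI}(t)-1)/\BE[W_1^\CI] < \infty$, is precisely what is needed to control the exponential moments of the centred summands uniformly in $n$; combined with the normalisation by $n_\CI\BE[W_1^\CI]\lambda(L^*(x,\theta))$, a Chernoff bound yields, conditionally on $\theta,\CJ,\boldsymbol{W}^\CJ$,
\begin{equation*}
\BP\big(|\widehat f^*(x,\theta) - \BE[\widehat f^*(x,\theta)\mid\cdots]| > \delta \mid \theta,\CJ,\boldsymbol{W}^\CJ\big) \leq C_1 \exp\big(-C_2\,\delta\, n_\CI\BE[W_1^\CI]\lambda(L^*(x,\theta))\big)
\end{equation*}
for constants depending only on $\delta$ and the weight MGF bound. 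The summability hypothesis $\sum_{n_\CI} \exp(-\delta\, n_\CI\BE[W_1^\CI]\lambda(L^*(x,\theta))) < \infty$ a.s.\ then lets me invoke the conditional Borel--Cantelli lemma (in the form recorded in the appendix, and as used for Lemma~\ref{lem:convergenceconditioning}) to conclude that the fluctuation part tends to zero a.s.\ conditionally on $\theta,\CJ,\boldsymbol{W}^\CJ$. Averaging over the conditioning variables via the bootstrap conditioning lemma (Lemma~\ref{lem:convergenceconditioningbootstrap}), which guarantees the $n$-dependent conditioning $\sigma$-algebra may be replaced by the fixed one $\sigma(\theta,\CJ_\infty,\boldsymbol{W}^\CJ_\infty)$, then upgrades this to unconditional a.s.\ convergence.

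Combining the bias and fluctuation pieces gives $\widehat f^*(x,\theta)\to f(x)$ a.s. The step I expect to be the main obstacle is the exponential inequality for the fluctuation part: unlike the unweighted case, the summands are no longer bounded (the weights need not be integer-valued or bounded), so a naive Hoeffding bound is unavailable, and one must carefully extract a Bernstein-type tail using the uniform control on $\kappa_{W_1^\CI}$ and, crucially, the fact that the \emph{relevant} scale is $\BE[W_1^\CI]\lambda(L^*(x,\theta))$ rather than $\lambda(L^*(x,\theta))$ alone — this is why the hypothesis is stated in the normalised form $(\kappa_{W_1^\CI}(t)-1)/\BE[W_1^\CI]$. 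Getting the constants in the exponent to match the summability hypothesis (for \emph{every} $\delta>0$) is the delicate bookkeeping; everything else follows the template of Proposition~\ref{prop:densityestimatorstrongconvergence}.
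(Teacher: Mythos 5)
Your proposal is correct and follows essentially the same route as the paper, which handles this proposition simply by specializing Theorem~\ref{thm:strongconsistencybootstrap} to $Y \equiv 1$: the same bias/fluctuation decomposition, a cumulant-generating-function/Chernoff bound on the centred sum $\sum_i (W_i^\CI \1_{\{X_i \in L^*(x,\theta)\}} - \BE[\cdots])$, the conditional Borel--Cantelli lemma, and Lemma~\ref{lem:convergenceconditioningbootstrap} to swap the $n$-dependent conditioning $\sigma$-algebra for the fixed one. You correctly pinpoint the two delicate points the paper relies on — that the summands are unbounded so Hoeffding is unavailable and one must work through the conditional MGF, and that the uniform-in-$n$ bound on $(\kappa_{W_1^\CI}(t)-1)/\BE[W_1^\CI]$ is exactly what makes the Chernoff exponent linear in $n_\CI \BE[W_1^\CI]\lambda(L^*(x,\theta))$ with a strictly negative coefficient for small $t$ — which is what the paper establishes by expanding the log-MGF and checking $E'(0)=-\delta<0$.
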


\begin{thm}[Strong consistency of a bootstrap decision tree]\label{thm:strongconsistencybootstrap}
Let Assumptions \ref{asm:covariate}, \ref{asm:honesty} and \ref{asm:bootstrap} hold, where convergence in probability has been replaced with a.s. convergence. The bootstrap tree is strongly consistent,
\begin{equation*}
    T^*(x, \theta) \to \BE[Y \mid X = x] \quad \BP\text{-a.s.}
\end{equation*}
if one of the following sets of assumptions hold.
\begin{enumerate}
    \item[(i)] Assume $\BE[Y^4], \BE[(W_1^\CI)^4] < \infty$, that $\sum_{i = 1}^{n_\CI} \widetilde{W}_i^\CI = n_\CI$ and
    \begin{equation*}
        \sum_{n_\CI = 1}^\infty \left(\frac{\BE[(\widetilde{W}_1^\CI)^4]}{n_\CI^3 \lambda(L^*(x, \theta))^3} + \frac{\BE[(\widetilde{W}_1^\CI)^2 (\widetilde{W}_2^\CI)^2]}{n_\CI^2 \lambda(L^*(x, \theta))^2} \right) < \infty \quad \BP\text{-a.s.}
    \end{equation*}
    \item[(ii)] Assume that the moment generating functions of $W_1^\CI$ and $W_1^\CI Y_1 \mid X_1 = x$ exist in some neighbourhood around zero (where the latter neighbourhood is independent of $x$) and satisfy the condition in Proposition~\ref{prop:densityestimatorstrongconvergencebootstrap} and that a.s.,
    \begin{equation}\label{eq:mixedmgfcondition}
        \frac{1}{\lambda(L^*(x, \theta))}\int_{L^*(x, \theta)} \frac{\BE[e^{tW_1^\CI Y_1} \mid X_1 = \widetilde{x}] - 1}{\BE[W_1^\CI]}f(\widetilde{x})\mathrm{d}\widetilde{x} = \frac{\BE[e^{tW_1^\CI Y_1} \mid X_1 = x] - 1}{\BE[W_1^\CI]}f(x) + o(1).
    \end{equation}
    and that
    \begin{equation*}
        \sum_{n_\CI = 1}^\infty \exp(-\delta n_\CI \BE[W_1^\CI] \lambda(L^*(x, \theta))) < \infty \quad \BP\text{-a.s.}
    \end{equation*}
    for any $\delta > 0$.
\end{enumerate}
%
%
\end{thm}

Before we consider more tractable assumptions on our trees, let us consider the requirements for $W_1^\CI$ in the different bootstrap schemes.

\begin{ex}[{Subsampling without replacement}]
For subsampling without replacement with $m_n$ trials, we have
\begin{equation*}
    \sum_{i = 1}^{n_\CI} \widetilde{W}_i^\CI = \frac{n_\CI}{m_n} \sum_{i = 1}^{n_\CI} W_i^\CI = n_\CI
\end{equation*}
and
\begin{equation*}
    \BE[(\widetilde{W}_1^\CI)^4] = \frac{n_\CI^3}{m_n^3}, \quad \BE[(\widetilde{W}_1^\CI)^2(\widetilde{W}_2^\CI)^2] = \frac{\BP(W_1^\CI = 1, W_2^\CI = 1)}{m_n^4/n_\CI^4} = O\Big(\frac{n_\CI^2}{m_n^2}\Big).
\end{equation*}
Hence, the summability assumption in (i) of Theorem \ref{thm:strongconsistencybootstrap} becomes
\begin{equation*}
    \sum_{n_\CI = 1}^\infty \left(\frac{1}{m_n^3\lambda(L^*(x, \theta))^3} + \frac{1}{m_n^2\lambda(L^*(x, \theta))^2} \right) < \infty \quad \BP\text{-a.s.}
\end{equation*}
Also,
\begin{equation*}
    \frac{\kappa_{W_1^\CI}(t) - 1}{\BE[W_1^\CI]} = \frac{\frac{m_n}{n_\CI}e^t + 1 - \frac{m_n}{n_\CI} - 1}{\frac{m_n}{n_\CI}} = e^t - 1
\end{equation*}
which is trivially bounded in $n$. Also, assuming that the moment generating function $\kappa_{Y \mid X = x}$ of $Y \mid X = x$ exists and is continuous in $x$, the law of total expectation yields together with independence that
\begin{equation*}
    \frac{\BE[e^{tW_1^\CI Y_1} \mid X_1 = x] - 1}{\BE[W_1^\CI]} = \frac{1 - \frac{m_n}{n_\CI} + \frac{m_n}{n_\CI}\kappa_{Y \mid X = x}(t) - 1}{\frac{m_n}{n_\CI}} = \kappa_{Y \mid X = x}(t) - 1
\end{equation*}
so that~\eqref{eq:mixedmgfcondition} simply holds.
\end{ex}

\begin{ex}[{Subsampling with replacement}]
If $(W_1^\CI, \ldots, W_{n_\CI}^\CI)$ is multinomial with $m_n$ trials, one can easily check that $\sum_{i = 1}^{n_\CI}\widetilde{W}_i^\CI = n_\CI$ and that the summability requirement in (i) of Theorem \ref{thm:strongconsistencybootstrap} is identical to the one for subsampling without replacement. As for the requirements in (ii), a necessary condition is that $\kappa_{Y \mid X = x}(t) < \infty$ for all $t > 0$. Indeed, we have in general that
\begin{equation*}
    \BE[e^{tW_1^\CI Y_1} \mid X_1 = x] = \int_0^\infty \kappa_{Y \mid X = x}(tw)\mathrm{d}F_{W_1^\CI}(w),
\end{equation*}
so if $\kappa_{Y \mid X = x}(t') = \infty$ for some $t' > 0$, then $\kappa_{Y \mid X = x}(tw) = \infty$ for $w > t'/t$. As for the simpler requirement in Proposition~\ref{prop:densityestimatorstrongconvergencebootstrap}, this indeed holds for this bootstrap scheme as the following lemma shows.

\begin{lem}[Multinomial weight MGF condition]\label{lem:multinomialMGF}
If $(W_1^\CI, \ldots, W_{n_\CI}^\CI)$ is multinomial with $m_n$ trials and probabilities $(1/n_\CI, \ldots, 1/n_\CI)$, the requirement on the moment generating function in Proposition~\ref{prop:densityestimatorstrongconvergencebootstrap} holds.
\end{lem}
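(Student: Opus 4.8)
The plan is to identify the marginal distribution of $W_1^\CI$ explicitly and then control the ratio $(\kappa_{W_1^\CI}(t) - 1)/\BE[W_1^\CI]$ by elementary inequalities, the crucial input being the uniform bound on $\BE[W_1^\CI]$ guaranteed by Assumption~\ref{asm:bootstrap}(iii). First I would note that under multinomial sampling with $m_n$ trials and uniform cell probabilities $(1/n_\CI, \dots, 1/n_\CI)$, the marginal law is $W_1^\CI \sim \text{Bin}(m_n, 1/n_\CI)$, so that
\begin{equation*}
    \kappa_{W_1^\CI}(t) = \Bigl(1 + \frac{e^t - 1}{n_\CI}\Bigr)^{m_n}, \qquad t \in \BR,
\end{equation*}
which is finite for every $t$; in particular the moment generating function exists on a fixed neighbourhood of zero, as required. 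Writing $a_n := (e^t - 1)/n_\CI$ and using $\BE[W_1^\CI] = m_n/n_\CI$ together with the identity $n_\CI a_n = e^t - 1$, this rearranges to
\begin{equation*}
    \frac{\kappa_{W_1^\CI}(t) - 1}{\BE[W_1^\CI]} = (e^t - 1)\,\frac{(1 + a_n)^{m_n} - 1}{m_n a_n}.
\end{equation*}

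The key observation is that Assumption~\ref{asm:bootstrap}(iii) forces $C := \sup_n \BE[W_1^\CI] = \sup_n m_n/n_\CI < \infty$, so that $m_n a_n = (e^t - 1)\,m_n/n_\CI$ stays in the bounded interval with endpoints $0$ and $(e^t-1)C$, uniformly in $n$. For $t \ge 0$ we have $a_n \ge 0$ and hence $(1 + a_n)^{m_n} \le e^{m_n a_n}$; combining this with the fact that $x \mapsto (e^x - 1)/x$ is increasing on $(0, \infty)$ (extended continuously by $1$ at $0$) and with $0 \le m_n a_n \le (e^t - 1) C =: M$ yields
\begin{equation*}
    \frac{(1 + a_n)^{m_n} - 1}{m_n a_n} \le \frac{e^{m_n a_n} - 1}{m_n a_n} \le \frac{e^M - 1}{M},
\end{equation*}
so that the quantity of interest is bounded by $(e^t - 1)(e^M - 1)/M$, independently of $n$. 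For $t < 0$ one has $0 < 1 + a_n < 1$, hence $\kappa_{W_1^\CI}(t) \le 1$ and the ratio is non-positive; Bernoulli's inequality $(1 + a_n)^{m_n} \ge 1 + m_n a_n$ furthermore bounds it below by $e^t - 1$. In every case the displayed quantity is uniformly bounded in $n$, which is precisely the condition of Proposition~\ref{prop:densityestimatorstrongconvergencebootstrap}.

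The computation is essentially routine; the only point requiring attention is the uniformity in $n$, which rests on the remark that the \enquote{subsampling rate} $\BE[W_1^\CI] = m_n/n_\CI$ stays bounded, so that the crude estimate $(1 + a_n)^{m_n} \le e^{m_n a_n}$ does not deteriorate as $n \to \infty$. Everything else is a one-line consequence of the binomial form of $\kappa_{W_1^\CI}$.
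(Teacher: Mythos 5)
Your proof is correct and takes a genuinely different, arguably cleaner, route than the paper's. The paper computes $\kappa_{W_1^\CI}''$, applies a second-order Taylor expansion with Lagrange remainder around zero, and then bounds the remainder term by evaluating at $\xi$ with $|\xi| < |t|$; that argument uses the inequality $m_n \le n_\CI$ in an essential way (to bound $(1 + (e^\xi-1)/n_\CI)^{m_n-2}$ by $(1 + (e^{|t|}-1)/n_\CI)^{n_\CI}$ and to control $(1 + (m_n e^\xi - 1)/n_\CI)$). You instead identify the marginal law $W_1^\CI \sim \mathrm{Bin}(m_n, 1/n_\CI)$, rewrite $(\kappa_{W_1^\CI}(t)-1)/\BE[W_1^\CI]$ as $(e^t-1)\,[(1+a_n)^{m_n}-1]/(m_n a_n)$, and close with the elementary bound $(1+a)^m \le e^{ma}$ combined with the monotonicity of $x \mapsto (e^x-1)/x$, using only $\sup_n m_n/n_\CI < \infty$, which is precisely what Assumption~\ref{asm:bootstrap}(iii) supplies. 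This avoids differentiating the MGF, treats $t<0$ cleanly via Bernoulli's inequality, and is marginally more general since it does not require $m_n \le n_\CI$. Both approaches are sound; yours trades the Taylor-expansion machinery for a one-line algebraic rearrangement, which makes the uniformity in $n$ more transparent.

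Two minor points worth flagging in your write-up: the rearrangement into $(e^t-1)[(1+a_n)^{m_n}-1]/(m_n a_n)$ requires $t \ne 0$ (where $a_n = 0$), though the $t=0$ case is trivial since the ratio is $0$; and it would be worth stating explicitly that $C := \sup_n m_n/n_\CI < \infty$ is exactly Assumption~\ref{asm:bootstrap}(iii) applied to this bootstrap scheme, since the lemma statement does not restate that hypothesis.
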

\end{ex}

\begin{ex}[{Wild bootstrap}]
For the wild bootstrap, it suffices in Proposition~\ref{prop:densityestimatorstrongconvergencebootstrap} to assume that the moment generating function exists in a neighbourhood around zero, since the distribution of $W_1^\CI$ does not vary with $n$. If we also assume that the moment generating function of $W_1^\CI Y_1 \mid X_1 = x$ exists in a neighbourhood around zero independent of $x$ and is continuous in $x$, then~\eqref{eq:mixedmgfcondition} also holds, again because the moment generating function does not depend on $n$. 
\end{ex}

Strong consistency of the random forest
\begin{equation*}
    \text{RF}(x, B) = \frac{1}{B}\sum_{b = 1}^B T^*_b(x, \theta_b)
\end{equation*}
again follows if the conditions in Theorem~\ref{thm:strongconsistencybootstrap} are satisfied for each tree. Notice that different weights and growing schemes can be applied to each tree.\\

We now show strong consistency of trees satisfying general conditions. The proof follows the same line of argument as  Proposition~\ref{prop:regularIIstrong}. The key difference is that the preliminary bounds derived in the proofs of Propositions~\ref{prop:regularIbootstrap} and \ref{prop:regularIIbootstrap} based on the second order Markov inequality are not strong enough to obtain summability. The proof therefore relies on fourth order Markov inequalities instead. 

\begin{lem}[Equivalent bootstrap shrinking condition]\label{lem:regularstrongconsistencyI}
Assume that the $\CI$-weights satisfy
\begin{equation}\label{eq:regularstrongconsistency}
    \sum_{n_\CI = 1}^\infty \Big( \frac{\BE[(\widetilde{W}_1^\CI)^4]}{n_\CI^3} + \frac{\BE[(\widetilde{W}_1^\CI)^2]^2}{n_\CI^2} \Big) < \infty
\end{equation}
where $\widetilde{W}_i^\CI = W_i^\CI/\BE[W_1^\CI]$. Then (I$^*$) of Assumption~\ref{asm:bootstrap} holds almost surely if and only if
\begin{equation}\label{eq:jproportionbootstrapstrong}
  \frac{N_L^{j, *}(x, \theta)}{n_\CI \BE[W_1^\CI]} \to 0 \quad \BP\text{-a.s. \ for all } j = 1, \dots, d.
\end{equation}
\end{lem}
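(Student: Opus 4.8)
The plan is to follow the argument of Proposition~\ref{prop:regularIbootstrap}, but to replace the second-order Markov (Chebyshev) step by a fourth-order one so that the resulting tail bound becomes summable and the Borel--Cantelli lemma applies.

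First I would reduce both implications of the lemma to a single concentration statement. Write $I := (a_{n,j}^{x,*}, b_{n,j}^{x,*}]$ and let $p_{n,j} := \BP(X_1^j \in I \mid \theta, \CJ, \boldsymbol W^\CJ)$. By Assumption~\ref{asm:covariate} the $j$-th marginal density of $X$ lies in $[\varepsilon, C]$, so $\varepsilon\,(b_{n,j}^{x,*} - a_{n,j}^{x,*}) \le p_{n,j} \le C\,(b_{n,j}^{x,*} - a_{n,j}^{x,*})$; hence (I$^*$) holding almost surely is equivalent to $p_{n,j} \to 0$ a.s. Consequently, the whole lemma follows once we establish
\[
    \frac{N_L^{j,*}(x,\theta)}{n_\CI \BE[W_1^\CI]} - p_{n,j} \to 0 \quad \BP\text{-a.s.},
\]
since then $N_L^{j,*}(x,\theta)/(n_\CI\BE[W_1^\CI]) \to 0$ a.s.\ if and only if $p_{n,j} \to 0$ a.s., i.e.\ if and only if (I$^*$) holds a.s.

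Next, I would set $\CG_n := \sigma(\theta, \CJ, \boldsymbol W^\CJ)$ and condition on it. Then $I$ is deterministic, and by Assumption~\ref{asm:bootstrap}(ii) the $\CI$-observations and the $\CI$-weights are independent of $\CG_n$ and of each other, so the indicators $A_i := \1_{\{X_i^j \in I\}}$ are i.i.d.\ Bernoulli$(p_{n,j})$, independent of the exchangeable weights $W_1^\CI,\dots,W_{n_\CI}^\CI$. With $\widetilde W_i := W_i^\CI/\BE[W_1^\CI]$ and $\zeta_i := \widetilde W_i A_i - p_{n,j}$ (so $\BE[\zeta_i\mid\CG_n]=0$), the fourth-order Markov inequality gives, for every $\delta>0$,
\[
    \BP\Bigl( \bigl| \tfrac{N_L^{j,*}(x,\theta)}{n_\CI\BE[W_1^\CI]} - p_{n,j} \bigr| > \delta \;\Big|\; \CG_n \Bigr) \le \frac{1}{\delta^4 n_\CI^4}\, \BE\Bigl[ \bigl(\textstyle\sum_{i=1}^{n_\CI} \zeta_i\bigr)^4 \;\Big|\; \CG_n\Bigr].
\]
Conditioning additionally on $\boldsymbol W^\CI$ writes $\sum_i \zeta_i = \sum_i \widetilde W_i(A_i - p_{n,j}) + p_{n,j}\sum_i(\widetilde W_i - 1)$ as a deterministic-coefficient combination of i.i.d.\ centred Bernoullis plus a deterministic shift; expanding the fourth power, every monomial containing an odd power of some $A_i$ drops out, and grouping the survivors by the coincidence pattern of the indices (types $4$, $3{+}1$, $2{+}2$, $2{+}1{+}1$) and using $p_{n,j}\le 1$, $|A_i-p_{n,j}|\le 1$, $\BE[\widetilde W_1^k]\ge 1$, together with exchangeability and Assumption~\ref{asm:bootstrap}(iii), I would bound the right-hand side by a deterministic quantity of the form
\[
    C'\Bigl( \frac{\BE[(\widetilde W_1^\CI)^4]}{n_\CI^3} + \frac{\BE[(\widetilde W_1^\CI)^2]^2}{n_\CI^2}\Bigr).
\]
Taking expectations above, summing over $n$, and invoking hypothesis~\eqref{eq:regularstrongconsistency}, the series $\sum_n \BP(|\cdots|>\delta)$ converges for every $\delta>0$, so Borel--Cantelli yields the displayed a.s.\ convergence and hence the lemma.

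The main obstacle is the fourth-moment bookkeeping: one must verify that every mixed-index contribution is absorbed into the two displayed terms. The delicate pieces are the two-pair terms, which reduce to controlling $\BE[(\sum_i \widetilde W_i^2)^2] = n_\CI\BE[(\widetilde W_1^\CI)^4] + n_\CI(n_\CI-1)\BE[(\widetilde W_1^\CI)^2(\widetilde W_2^\CI)^2]$, and the $3{+}1$ and $2{+}1{+}1$ terms, which reduce to weight covariances such as $\Cov(\widetilde W_1^\CI, \widetilde W_2^\CI)$ vanishing at rate $1/n_\CI$; this is exactly where exchangeability, the asymptotic uncorrelatedness and moment bounds of Assumption~\ref{asm:bootstrap}(iii), and — for the concrete schemes — the negative association of multinomial and without-replacement weights (or the independence of the wild bootstrap) enter. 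Everything else mirrors the proof of Proposition~\ref{prop:regularIbootstrap} verbatim.
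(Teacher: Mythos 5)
Your proposal follows the same overall route as the paper: reduce the lemma to the single concentration statement $\frac{N_L^{j,*}(x,\theta)}{n_\CI\BE[W_1^\CI]} - p_j(\theta,\CJ,\boldsymbol W^\CJ) \to 0$ a.s., condition on $\sigma(\theta,\CJ,\boldsymbol W^\CJ)$, apply a fourth-order Markov inequality, use $p_j\le 1$ to make the bound deterministic, and conclude by Borel--Cantelli and the hypothesis~\eqref{eq:regularstrongconsistency}. The initial reduction (via $p_j \asymp b_{n,j}^{x,*}-a_{n,j}^{x,*}$) and the final Borel--Cantelli step match the paper's closely.

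Where you diverge is inside the fourth-moment computation, and you have correctly put your finger on the genuinely delicate point. The paper asserts that, conditionally on $\theta,\CJ,\boldsymbol W^\CJ$, the summands $\1_{\{X_i^j\in I\}}\widetilde W_i^\CI - p_j$ are i.i.d.\ and then invokes the i.i.d.\ formula $\BE[(\sum Z_i)^4]=n\BE[Z_1^4]+3n(n-1)\Var[Z_1]^2$, before bounding $\BE[Z_1^4]$ via Minkowski. You notice that under Assumption~\ref{asm:bootstrap} the weights are only exchangeable, so the summands need not be independent, and you propose to restore rigour by conditioning on $\boldsymbol W^\CI$ and expanding. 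That instinct is sound, but the route you sketch does not close cleanly: the claim that \enquote{every monomial containing an odd power of some $A_i$ drops out} is not correct even after conditioning on $\boldsymbol W^\CI$ — the centred Bernoullis $A_i-p_{n,j}$ have non-zero third moment, so the $(3{+}1)$-type and the cross term $4c\,\BE[(\sum_i\widetilde W_i(A_i-p_{n,j}))^3]$ with $c=p_{n,j}\sum_i(\widetilde W_i-1)$ survive. Moreover your control of the surviving terms leans on covariances such as $\Cov(\widetilde W_1^\CI,\widetilde W_2^\CI)$ decaying at rate $1/n_\CI$, which is not supplied by hypothesis~\eqref{eq:regularstrongconsistency} nor by Assumption~\ref{asm:bootstrap}(iii) (the latter gives only asymptotic uncorrelatedness with no rate). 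So you have correctly identified a loose joint in the argument, but your proposed tightening introduces additional bookkeeping obligations — the $(3{+}1)$, $(2{+}1{+}1)$, and $(1{+}1{+}1{+}1)$ index patterns, and the pure weight term $\BE[(\sum_i(\widetilde W_i-1))^4]$ — that are left unresolved. The paper sidesteps all of this by treating the summands as i.i.d.\ (which is literally true only for schemes with independent weights, e.g.\ the wild bootstrap); your version would need either an explicit assumption bounding the weight cross-moments or a scheme-specific argument (e.g.\ that $\sum_i W_i^\CI$ is deterministic under multinomial and without-replacement sampling, which kills the $c$-shift entirely) to match the final bound $C'\bigl(\BE[(\widetilde W_1^\CI)^4]/n_\CI^3 + \BE[(\widetilde W_1^\CI)^2]^2/n_\CI^2\bigr)$.
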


\begin{lem}[Sufficient bootstrap growing condition]\label{lem:regularstrongconsistencyII}
For a bootstrap honest tree with $\sum_{n = 1}^\infty (1 - a_n^*(x)) < \infty$ for $a_n^*(x)$ defined in ~\eqref{eq:minnodesizebootstrap} and
\begin{align*}
    &\sum_{n_\CI = 1}^\infty \frac{n_\CI^{4d}}{e^{k_n^2/(8n_\CI \BE[W_1^\CI])}} < \infty, \quad \sum_{n_\CI = 1}^\infty \frac{\BE[(W_1^\CI - \BE[W_1^\CI])^4]}{k_n^4} n_\CI < \infty, \\ &\sum_{n_\CI = 1}^\infty \frac{\BE[(W_1^\CI - \BE[W_1^\CI])^2]^2}{k_n^4}n_\CI^2 < \infty,
\end{align*}
we have that (II$^*$) of Assumption~\ref{asm:bootstrap} holds almost surely.
\end{lem}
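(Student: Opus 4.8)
The plan is to adapt the proof of Proposition~\ref{prop:regularIIstrong} to the bootstrap setting, upgrading to fourth-order Markov inequalities wherever the (unbounded) bootstrap weights enter, exactly as anticipated in the discussion preceding the lemma. Since (II$^*$) is the statement that $n_\CI\BE[W_1^\CI]\,\lambda(L^*(x,\theta))\to\infty$ $\BP$-almost surely, it suffices to show that for every fixed $A>0$ the series $\sum_n\BP\bigl(n_\CI\BE[W_1^\CI]\,\lambda(L^*(x,\theta))\le A\bigr)$ converges; Borel--Cantelli and then intersecting over $A\in\BN$ yield the claim. Along the way I would record that the first summability hypothesis forces $k_n^2/(n_\CI\BE[W_1^\CI])\to\infty$, hence $k_n\to\infty$ (using $n_\CI\BE[W_1^\CI]\to\infty$), and that $k_n\le N_L^*(x,\theta)\le\sum_iW_i^\CI$, with $\sum_iW_i^\CI$ of order $n_\CI\BE[W_1^\CI]$, forces $k_n\lesssim n_\CI\BE[W_1^\CI]$ for large $n$; these facts serve only to keep certain conditional means eventually below $k_n/2$.

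\textbf{Localisation and the key decomposition.} On $\{n_\CI\BE[W_1^\CI]\lambda(L^*)\le A\}$ the leaf $L^*$ is an axis-aligned box $R\ni x$ with $\lambda(R)\le A/(n_\CI\BE[W_1^\CI])$ and, by~\eqref{eq:minnodesizebootstrap}, with $N_R^*:=\sum_{i=1}^{n_\CI}W_i^\CI\1_{\{X_i\in R\}}\ge k_n$; hence this event is contained in the union of the events $\{N_R^*\ge k_n\}$ over all such boxes $R$. Since axis-aligned boxes induce at most $\binom{n_\CI+1}{2}^d\le n_\CI^{4d}$ (for $n$ large) distinct inclusion patterns $(\1_{\{X_i\in R\}})_{i\le n_\CI}$, and $N_R^*$ depends on $R$ only through this pattern, the union is effectively over at most $n_\CI^{4d}$ boxes. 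For each such $R$, writing $q_R:=\int_Rf\le C\lambda(R)\le CA/(n_\CI\BE[W_1^\CI])$ (using $f\le C$), the conditional mean $\BE[N_R^*\mid\theta,\CJ,\boldsymbol{W}^\CJ]=n_\CI\BE[W_1^\CI]q_R\le CA$ is eventually $<k_n/2$, so $\{N_R^*\ge k_n\}$ forces a deviation $|N_R^*-\BE[N_R^*\mid\theta,\CJ,\boldsymbol{W}^\CJ]|\ge k_n/2$. The crucial step is the exact identity
\[
N_R^*-\BE[N_R^*\mid\theta,\CJ,\boldsymbol{W}^\CJ]=\underbrace{\sum_{i=1}^{n_\CI}W_i^\CI\bigl(\1_{\{X_i\in R\}}-q_R\bigr)}_{S_1(R)}\;+\;\underbrace{q_R\sum_{i=1}^{n_\CI}\bigl(W_i^\CI-\BE[W_1^\CI]\bigr)}_{S_2(R)},
\]
from which such a deviation implies $|S_1(R)|\ge k_n/4$ or $|S_2(R)|\ge k_n/4$.

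\textbf{Bounding the two pieces.} For $S_1$ I would condition additionally on the $\CI$-weights: then $S_1(R)=\sum_iW_i^\CI\xi_i^R$ with $\xi_i^R:=\1_{\{X_i\in R\}}-q_R$ independent over $i$, mean zero and of range one, so Hoeffding's inequality gives an exponential bound of the form $2\exp\bigl(-c'k_n^2/\sum_i(W_i^\CI)^2\bigr)$; since $\BE[(W_1^\CI)^2]/\BE[W_1^\CI]\to L_{2,1}<\infty$, the quantity $\sum_i(W_i^\CI)^2$ is of order $n_\CI\BE[W_1^\CI]$ on a high-probability event (handled with a crude tail bound on $\max_iW_i^\CI$), so after the appropriate choice of thresholds the bound reads $2\exp\bigl(-k_n^2/(8n_\CI\BE[W_1^\CI])\bigr)$, and a union bound over the at most $n_\CI^{4d}$ boxes reproduces the summand of the first hypothesis. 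For $S_2$ the decisive point is that $\sum_i(W_i^\CI-\BE[W_1^\CI])$ is the \emph{same} for all $R$ while $q_R\le CA/(n_\CI\BE[W_1^\CI])$ is uniform, so $\bigcup_R\{|S_2(R)|\ge k_n/4\}\subseteq\{\,|\sum_i(W_i^\CI-\BE[W_1^\CI])|\ge c\,k_n n_\CI\BE[W_1^\CI]/A\,\}$ for a constant $c>0$ — a single, box-free event. A fourth-order Markov inequality bounds its probability by a constant times $A^4\,\BE[(\sum_i(W_i^\CI-\BE[W_1^\CI]))^4]\,/\,(k_n^4(n_\CI\BE[W_1^\CI])^4)$, and expanding the fourth moment by exchangeability — it vanishes for subsampling with or without replacement, where $\sum_iW_i^\CI$ is deterministic, and is a sum of i.i.d.\ terms for the wild bootstrap — gives $\BE[(\sum_i(W_i^\CI-\BE[W_1^\CI]))^4]\lesssim n_\CI\BE[(W_1^\CI-\BE[W_1^\CI])^4]+n_\CI^2\BE[(W_1^\CI-\BE[W_1^\CI])^2]^2$; discarding the extra factor $(n_\CI\BE[W_1^\CI])^4\ge1$, the resulting series is summable exactly by the second and third hypotheses — and, crucially, with no union bound over boxes.

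Combining the two estimates (the auxiliary weight events being summable as well) gives $\sum_n\BP(n_\CI\BE[W_1^\CI]\lambda(L^*)\le A)<\infty$ for every $A>0$, and the reduction of the first paragraph completes the proof. I expect the main obstacle to be precisely the tension just exhibited: the bootstrap weights are controlled only through fourth moments, which produce polynomially decaying tails, while the localisation over leaves costs a polynomial-in-$n_\CI$ union-bound factor, so a brute-force fourth-order Markov bound for $N_R^*$ followed by a union bound would fail to be summable. The split $S_1+S_2$ is what resolves this: the union bound is spent only on $S_1$, which — conditionally on the weights — has a genuinely exponential Hoeffding tail that beats the polynomial factor and accounts for the exponent in the first hypothesis, whereas the purely weight-driven fluctuation $S_2$ is box-free and is dispatched once by a single fourth-order Markov inequality governed by the second and third hypotheses. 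The remaining work — the high-probability control of $\sum_i(W_i^\CI)^2$ and $\max_iW_i^\CI$, treating continuous split thresholds via the shattering count rather than a literal enumeration of leaves, and tracking the numerical constants — is routine.
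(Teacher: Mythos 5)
Your decomposition of the centred weighted count is genuinely different from the paper's, and the difference matters. You write
\[
N_L^*-\BE[N_L^*\mid\theta,\CJ,\boldsymbol W^\CJ]
=\underbrace{\sum_iW_i^\CI\bigl(\1_{\{X_i\in L^*\}}-p\bigr)}_{S_1}
+\underbrace{p\sum_i\bigl(W_i^\CI-\BE[W_1^\CI]\bigr)}_{S_2},
\]
whereas the paper (dividing through by $\BE[W_1^\CI]$) writes
\[
\frac{N_L^*}{\BE[W_1^\CI]}-n_\CI p
=\underbrace{\sum_i(\widetilde W_i^\CI-1)\1_{\{X_i\in L^*\}}}_{\text{weight piece}}
+\underbrace{\sum_i\bigl(\1_{\{X_i\in L^*\}}-p\bigr)}_{\text{empirical piece}}.
\]
In the paper's split the box-dependent piece is \emph{unweighted}, so Lemma~\ref{lem:Devroyebound} (an empirical-process/VC bound with a deterministic $n_\CI^{4d}$ prefactor and a genuine exponential tail) applies as is and produces exactly the first summand of the hypothesis; the weights sit entirely in the other piece, which is controlled by a fourth-order Markov inequality conditional on $\theta,\CJ,\boldsymbol W^\CJ$ — the leaf being fixed by that conditioning, no union bound and no high-probability event about the weights is needed, and $p\le1$ converts the bound into the second and third hypotheses. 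Your $S_2$ analysis is fine and in fact wasteful (you discard an extra $(n_\CI\BE[W_1^\CI])^4$ in the denominator), so hypotheses two and three are more than enough there.

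The gap is in $S_1$. Because your box-dependent piece carries the weights, you apply Hoeffding conditional on $\boldsymbol W^\CI$ and obtain a tail of the form $\exp\!\bigl(-k_n^2/(8\sum_i(W_i^\CI)^2)\bigr)$. To turn this into the stated $\exp\!\bigl(-k_n^2/(8n_\CI\BE[W_1^\CI])\bigr)$ you must restrict to an event such as $\{\sum_i(W_i^\CI)^2\le c\,n_\CI\BE[W_1^\CI]\}$, and the complementary probability enters the Borel--Cantelli sum once per $n$; you therefore need $\sum_n\BP\bigl(\sum_i(W_i^\CI)^2>c\,n_\CI\BE[W_1^\CI]\bigr)<\infty$. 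That is a genuinely new requirement: it is not implied by the three summability hypotheses of the lemma together with Assumption~\ref{asm:bootstrap}, which only controls second moments and ratios of low moments of the weights. It does hold in the concrete schemes (for subsampling $\sum_i(W_i^\CI)^2$ is deterministic, for the wild bootstrap with light-tailed weights it concentrates), but the lemma is stated for general exchangeable bootstrap weights, and the step you label ``routine'' is precisely the friction your decomposition creates and the paper's decomposition is designed to eliminate. A secondary, constant-level mismatch is that even after such a restriction your exponent is $k_n^2/(16\sum_i(W_i^\CI)^2)\approx k_n^2/(16L_{2,1}n_\CI\BE[W_1^\CI])$, not $k_n^2/(8n_\CI\BE[W_1^\CI])$, though this would not matter for the examples. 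If you instead centre the weights per observation — i.e.\ follow the paper's split, so that the empirical-process piece is weight-free and the weights only appear multiplied by $p\le1$ inside a fourth-moment bound — the extra concentration requirement disappears and the proof closes under the stated hypotheses alone.
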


\begin{ex}[{Multinomial subsampling}]
By using the results of~\cite{Ouimet}, it is readily verified that for this bootstrap scheme,
\begin{align*}
    &\frac{\BE[(\widetilde{W}_1^\CI)^4]}{n_\CI^3} = O\Big(\frac{1}{m_n^3}\Big), \quad \frac{\BE[(\widetilde{W}_1^\CI)^2]^2}{n_\CI^2} = O\Big(\frac{1}{m_n^2}\Big), \\
    &\frac{\BE[(W_1^\CI - \BE[W_1^\CI])^4]n_\CI}{k_n^4} = O\Big(\frac{m_n}{k_n^4}\Big), \quad \frac{\BE[(W_1^\CI - \BE[W_1^\CI])^2]^2 n_\CI^2}{k_n^4} = O\Big(\frac{m_n^2}{k_n^4}\Big).
\end{align*}
Hence we see that choosing $m_n = n_\CI^\gamma$ for $\gamma \in (1/2, 1)$, the summability condition in Lemma~\ref{lem:regularstrongconsistencyI} is satisfied. Also, the summability conditions in Lemma~\ref{lem:regularstrongconsistencyII} hold when $k_n \sim  n_\CI^\beta $ for $\beta \in (1/2, 1)$ and $\gamma < 2\beta - 1/2$. For example, choosing $\gamma = 3/5$ and $\beta = 2/3$ would suffice for all requirements in these two lemmata to be satisfied.
\end{ex}

\begin{ex}[{Subsampling without replacement}]
For this bootstrap scheme, one can verify that all the $O$-terms become the same as for multinomial subsampling and thus the same choices of $k_n$ and $m_n$ work.
\end{ex}

\begin{ex}[{Wild bootstrap}]
For the wild bootstrap, all moments are constant. So here~\eqref{eq:regularstrongconsistency} always holds. And it is easily verified that choosing $k_n \sim n_\CI^\beta$ for $\beta \in (3/4, 1)$ suffices to make all summability requirements hold. 
\end{ex}

\begin{thm}[Strong consistency of a bootstrap tree under minimal node size]\label{thm:regularstrongconsistency}
Consider a bootstrap honest tree satisfying $\sum_{n = 1}^\infty (1 - a_n^*(x)) < \infty$ with $a_n^*(x)$ defined in  ~\eqref{eq:minnodesizebootstrap},~\eqref{eq:jproportionbootstrapstrong} and all the requirements on $k_n$ and the $W_i^\CI$ from Assumption~\ref{asm:bootstrap} (i), (ii), (iii), Lemma~\ref{lem:regularstrongconsistencyI} and \ref{lem:regularstrongconsistencyII}. If one of the following sets of conditions hold, the bootstrap tree is strongly consistent.
\begin{enumerate}
    \item[(i)] We have $\BE[Y^4], \BE[(W_1^\CI)^4] < \infty$, $\sum_{i = 1}^{n_\CI} \widetilde{W}_i^\CI = n_\CI$ and
    \begin{equation*}
        \sum_{n_\CI = 1}^\infty \left(\frac{\BE[(W_1^\CI)^4]}{k_n^3 \BE[W_1^\CI]} + \frac{\BE[(W_1^\CI)^2 (W_2^\CI)^2]}{k_n^2 \BE[W_1^\CI]^2} \right) < \infty.
    \end{equation*}
    \item[(ii)] The moment generating functions of $W_1^\CI$ and $W_1^\CI Y_1 \mid X_1 = x$ satisfy the requirements in Theorem~\ref{thm:strongconsistencybootstrap} and
    \begin{equation*}
        \sum_{n_\CI = 1}^\infty e^{-\delta k_n} < \infty
    \end{equation*}
    for all $\delta > 0$.
\end{enumerate}
\end{thm}

To summarise, if we have subsampling with or without replacement, all the requirements in this proposition for the splitting scheme hold for $k_n \sim  n_\CI^\beta $ and $m_n =  n_\CI^\gamma $ where $\beta, \gamma \in (1/2, 1)$ and $\gamma < 2\beta - 1/2$. In fact, for both subsampling with or without replacement, one can verify that $\BE[(W_1^\CI)^4]/\BE[W_1^\CI] = O(1)$ and $\BE[(W_1^\CI)^2 (W_2^\CI)^2]/\BE[W_1^\CI]^2 = O(1)$. \\

Concerning $L^p$-consistency for the bootstrap tree, the proof of Theorem~\ref{thm:L1consistency} cannot be transferred to the bootstrap setup. One reason is that the weights for $\CI$ are not necessarily integer-valued, causing the conditioning argument to break down. If one could verify that each tree is $L^p$ consistent for some $p \geq 1$, convexity of the function $x \mapsto |x|^p$ guarantees $L^p$ consistency of $\text{RF}(x, B)$ as well. Investigations into $L^p$- and strong uniform consistency for honest bootstrap trees as presented here are possible avenues for further study.

\section{Proofs}\label{sec:proofs}

\subsection{Proofs for Section \ref{sec:singletree}}

\begin{proof}[Proof of Lemma~\ref{lem:convergenceconditioning}]
Consider the sub-$\sigma$-algebras $\CD := \sigma(\theta, (X_1, Y_1), \ldots, (X_n, Y_n))$, $\CG := \sigma(\theta, \CJ)$ and $\CH := \sigma(\CJ_\infty \setminus \CJ)$. By definition, $\CG \subseteq \CD$ and $\CD \indep \CH$. The random variables
\begin{equation*}
    \1_{\{|\widehat{f}(x, \theta) - \BE[\widehat{f}(x, \theta) \mid \theta, \CJ]| > \delta\}} \quad \text{and} \quad \1_{\{|\widehat{m}(x, \theta) - \BE[\widehat{m}(x, \theta) \mid \theta, \CJ]| > \delta\}}
\end{equation*}
are both $\CD$-measurable, and since $\sigma(\theta, \CJ_\infty) = \CG \lor \CH$, the assertion of the lemma follows immediately by Lemma~\ref{lem:condmean}. 
\end{proof}

\begin{proof}[Proof of Proposition~\ref{prop:densityestimatorconsistent}]
We note that
\begin{align*}
    \BE[\widehat{f}(x, \theta) \mid \theta, \CJ] &= \frac{1}{n_\CI \lambda(L(x, \theta))} \sum_{i = 1}^{n_\CI}\BE\Big[\1_{\{X_i \in L(x, \theta) \}} \mid \theta, \CJ \Big] \\
    &= \frac{1}{\lambda(L(x, \theta))}\int_{L(x, \theta)} f(\widetilde{x})\mathrm{d}\widetilde{x},
\end{align*}
and this converges in probability to $f(x)$ by condition (I)\footnote{Due to continuity, every $x \in [0, 1]^d$ is a Lebesgue point of $f$. We refer to~\cite{Rudin} for background.}. Hence we are done once we show that $\widehat{f}(x, \theta) - \BE[\widehat{f}(x, \theta) \mid \theta, \CJ] \overset{\BP}{\to} 0$. By Markov's inequality, for any $\delta > 0$,
\begin{equation*}
    \BP(|\widehat{f}(x, \theta) - \BE[\widehat{f}(x, \theta) \mid \theta, \CJ]| > \delta \mid \theta, \CJ) \leq \frac{\Var[\widehat{f}(x, \theta) \mid \theta, \CJ]}{\delta^2}
\end{equation*}
and
\begin{equation*}
    \Var[\widehat{f}(x, \theta) \mid \theta, \CJ] = \BE[\widehat{f}(x, \theta)^2 \mid \theta, \CJ] - \Big(\frac{1}{\lambda(L(x, \theta))}\int_{L(x, \theta)} f(\widetilde{x})\mathrm{d}\widetilde{x} \Big)^2,
\end{equation*}
so we need to show that $\BE[\widehat{f}(x, \theta)^2 \mid \theta, \CJ] \overset{\BP}{\to} f(x)^2$. We compute
\begin{align*}
    &\BE[\widehat{f}(x, \theta)^2 \mid \theta, \CJ]\\
    &= \frac{1}{n_\CI^2 \lambda(L(x, \theta))^2} \sum_{i = 1}^{n_\CI}\sum_{j = 1}^{n_\CI} \BE[\1_{\{X_i \in L(x, \theta)\}}\1_{\{X_j \in L(x, \theta)\}} \mid \theta, \CJ] \\
    &= \frac{1}{n_\CI^2 \lambda(L(x, \theta))^2}(n_\CI \BP(X_1 \in L(x, \theta) \mid \theta, \CJ) + n_\CI(n_\CI - 1) \BP(X_1 \in L(x, \theta) \mid \theta, \CJ)^2) \\
    &= \frac{1}{n_\CI \lambda(L(x, \theta))} \frac{\int_{L(x, \theta)} f(\widetilde{x})\mathrm{d}\widetilde{x}}{\lambda(L(x, \theta))} + \frac{n_\CI - 1}{n_\CI}\Big(\frac{1}{\lambda(L(x, \theta))} \int_{L(x, \theta)}f(\widetilde{x})\mathrm{d}\widetilde{x} \Big)^2 \\
    &\overset{\BP}{\to} 0 \cdot f(x) + 1 \cdot f(x)^2 = f(x)^2
\end{align*}
which shows that
\begin{equation*}
    \BP(|\widehat{f}(x, \theta) - \BE[\widehat{f}(x, \theta) \mid \theta, \CJ]| > \delta \mid \theta, \CJ) \overset{\BP}{\to} 0.
\end{equation*}
By Lemma~\ref{lem:convergenceconditioning}, we can replace $\theta, \CJ$ in the probability by $\theta, \CJ_\infty$. Take expectations and apply the tower property along with dominated convergence to conclude the desired statement. 
\end{proof}

\begin{proof}[Proof of Theorem~\ref{thm:conspart1}]
Just like in the proof of Proposition~\ref{prop:densityestimatorconsistent}, we have
\begin{equation*}
    \BE[\widehat{m}(x, \theta) \mid \theta, \CJ] = \frac{1}{\lambda(L(x, \theta))}\int_{L(x, \theta)} m(\widetilde{x})\mathrm{d}\widetilde{x} \overset{\BP}{\to} m(x).
\end{equation*}
To see that $\widehat{m}(x, \theta) - \BE[\widehat{m}(x, \theta) \mid \theta, \CJ] \overset{\BP}{\to} 0$, we again apply a conditional Chebyshev inequality. We compute
\begin{align*}
    \BE[\widehat{m}(x, \theta)^2 \mid \theta, \CJ] &= \frac{1}{n_\CI^2 \lambda(L(x, \theta))^2} \sum_{i = 1}^{n_\CI} \sum_{j = 1}^{n_\CI} \BE[\1_{\{X_i \in L(x, \theta)\}}\1_{\{X_j \in L(x, \theta)\}} Y_i Y_j \mid \theta, \CJ] \\
    &= \frac{1}{n_\CI \lambda(L(x, \theta))} \frac{\int_{L(x, \theta)} \BE[Y^2 \mid X = \widetilde{x}]f(\widetilde{x})\mathrm{d}\widetilde{x}}{\lambda(L(x, \theta))} \\
    &\quad+ \frac{n_\CI - 1}{n_\CI} \Big(\frac{1}{\lambda(L(x, \theta))}\int_{L(x, \theta)}m(\widetilde{x})\mathrm{d}\widetilde{x} \Big)^2
\end{align*}
which converges to $m(x)^2$ in probability, and so the conditional variance again converges to zero in probability. The rest of the argument is identical to the one for the density estimator. We have thus shown that $\widehat{f}(x, \theta) \overset{\BP}{\to} f(x)$ and $\widehat{m}(x, \theta) \overset{\BP}{\to} m(x)$. Now apply the continuous mapping theorem to complete the proof. 
\end{proof}

\begin{proof}[Proof of Theorem~\ref{thm:L1consistency}]
By letting, say, $p = \delta = 1$, the first assertion of the theorem follows from the second. To prove the second assertion, we apply the Lebesgue--Vitali Theorem (see Theorem 4.5.4 in~\cite{Bogachev}) which states that $L^p$-convergence of $T(x, \theta)$ is equivalent to convergence in probability and uniform integrability of the sequence $\{|T(x, \theta)|^p\}$. Convergence in probability was already established in Theorem~\ref{thm:conspart1}. To show uniform integrability, it suffices to show that 
\begin{equation*}
    \sup_n \{\BE[|T(x, \theta)|^{p + \delta}]\} < \infty.
\end{equation*}
We consider the conditional moment $\BE[|T(x, \theta)|^{p + \delta} \mid N_L(x, \theta) = m]$ for some $m = 1, ..., n_\CI$. If we know that exactly $m$ of the $X_i$ fall into $L(x, \theta)$, we can use exchangeability of the $(X_i, Y_i)$ to conclude that $\BE[|T(x, \theta)|^{p + \delta} \mid N_L(x, \theta) = m]$ equals
\begin{equation*}
    \BE\Big[\Big|\sum_{i = 1}^{n_\CI} \frac{\1_{\{X_i \in L(x, \theta)\}}Y_i}{m} \Big|^{p + \delta} \mid X_1, ..., X_m \in L(x, \theta), X_{m + 1}, ..., X_{n_\CI} \notin L(x, \theta) \Big]
\end{equation*}
which simplifies to
\begin{equation*}
    \BE\Big[\Big|\sum_{i = 1}^{m} \frac{{Y_i}}{m} \Big|^{p + \delta} \mid X_1, ..., X_m \in L(x, \theta) \Big].
\end{equation*}
We know that the function $x \mapsto |x|^{p + \delta}$ is convex since $p + \delta > 1$ by assumption. Thus,
\begin{equation*}
    \Big|\sum_{i = 1}^{m} \frac{{Y_i}}{m} \Big|^{p + \delta} \leq \frac{1}{m} \sum_{i = 1}^m |Y_i|^{p + \delta},
\end{equation*}
whence by the iid assumption on the observations,
\begin{align*}
    \BE[|T(x, \theta)|^{p + \delta} \mid N_L(x, \theta) = m] &\leq \frac{1}{m}\sum_{i = 1}^m \BE[|Y_i|^{p + \delta} \mid X_i \in L(x, \theta)] \\
    &= \BE[|Y_1|^{p + \delta} \mid X_1 \in L(x, \theta)].
\end{align*}
Using the tower property, we can bound this as follows:
\begin{align*}
    \BE[|Y_1|^{p + \delta} \mid X_1 \in L(x, \theta)] &= \frac{\BE[|Y_1|^{p + \delta} \1_{\{X_1 \in L(x, \theta)\}}]}{\BP(X_1 \in L(x, \theta))}\\
    &= \frac{\BE[\1_{\{X_1 \in L(x, \theta)\}} \BE[|Y_1|^{p + \delta} \mid X_1, \theta, \CJ]]}{\BP(X_1 \in L(x, \theta))} \\
    &= \frac{\BE[\1_{\{X_1 \in L(x, \theta)\}} \BE[|Y_1|^{p + \delta} \mid X_1]]}{\BP(X_1 \in L(x, \theta))} \leq K.
\end{align*}
In conclusion, we have shown that $\BE[|T(x, \theta)|^{p + \delta} \mid N_L(x, \theta)] \leq K$ which implies that $\BE[|T(x, \theta)|^{p + \delta}] \leq K$ again by the tower property, and the proof is complete.
\end{proof}

\begin{proof}[Proof of Proposition~\ref{prop:densityestimatorstrongconvergence}]
Make the decomposition
\begin{equation*}
    \widehat{f}(x, \theta) - f(x) = \widehat{f}(x, \theta) - \BE[\widehat{f}(x, \theta) \mid \theta, \CJ] + \BE[\widehat{f}(x, \theta) \mid \theta, \CJ] - f(x)
\end{equation*}
and note that
\begin{equation*}
    \BE[\widehat{f}(x, \theta) \mid \theta, \CJ] = \frac{1}{\lambda(L(x, \theta))} \int_{L(x, \theta)} f(\widetilde{x})\mathrm{d}\widetilde{x} \to f(x) \quad \BP\text{-a.s.}
\end{equation*}
by the almost sure strengthening of Assumption~\ref{asm:leaf}. Hence we only need to show that $\widehat{f}(x, \theta) - \BE[\widehat{f}(x, \theta) \mid \theta, \CJ] \to 0$ $\BP\text{-a.s.}$ Note that
\begin{equation*}
    \widehat{f}(x, \theta) - \BE[\widehat{f}(x, \theta) \mid \theta, \CJ] = \frac{1}{n_\CI \lambda(L(x, \theta))}\sum_{i = 1}^{n_\CI} (\1_{\{X_i \in L(x, \theta)\}} - p(\theta, \CJ)),
\end{equation*}
where $p(\theta, \CJ) := \BP(X_1 \in L(x, \theta) \mid \theta, \CJ)$, and that the random variables in the sum have mean zero conditional on $\theta, \CJ$ and are bounded in absolute value by 2. Hence by Bernstein's inequality, for any $\delta > 0$,
\begin{align*}
    &\BP\Big(\sum_{i = 1}^{n_\CI}(\1_{\{X_i \in L(x, \theta)\}} - p(\theta, \CJ)) > \delta \mid \theta, \CJ \Big)\\
    &\leq \exp\Big(-\frac{\delta^2}{2(n_\CI p(\theta, \CJ))(1 - p(\theta, \CJ) + 2\delta/3)} \Big) \\
    &\leq \exp\Big(-\frac{\delta^2}{2Cn_\CI \lambda(L(x, \theta)) + 4\delta/3} \Big)
\end{align*}
so that
\begin{equation*}
    \BP(|\widehat{f}(x, \theta) - \BE[\widehat{f}(x, \theta) \mid \theta, \CJ]| > \delta \mid \theta, \CJ) \leq 2\exp\Big(-\frac{\delta^2 n_\CI \lambda(L(x, \theta))}{2C + 4\delta/3} \Big).
\end{equation*}
By Lemma~\ref{lem:convergenceconditioning}, we may replace the conditioning in the probability on the left hand side by $\theta, \CJ_\infty$ which is independent of $n$. It now follows immediately from Corollary~\ref{cor:conditionalBC} that $\widehat{f}(x, \theta) - \BE[\widehat{f}(x, \theta) \mid \theta, \CJ] \to 0$ $\BP$-a.s. and the proof is complete.
\end{proof}

\begin{proof}[Proof of Theorem~\ref{thm:strongconsistency}]
Start by making the same decomposition as for $\widehat{f}(x, \theta)$,
\begin{equation*}
    \widehat{m}(x, \theta) - m(x) = \widehat{m}(x, \theta) - \BE[\widehat{m}(x, \theta) \mid \theta, \CJ] + \BE[\widehat{m}(x, \theta) \mid \theta, \CJ] - m(x).
\end{equation*}
Again,
\begin{equation*}
    \BE[\widehat{m}(x, \theta) \mid \theta, \CJ] = \frac{1}{\lambda(L(x, \theta))} \int_{L(x, \theta)} m(\widetilde{x})\mathrm{d}\widetilde{x} \to m(x) \quad \BP\text{-a.s.},
\end{equation*}
so we consider
\begin{equation*}
    \widehat{m}(x, \theta) - \BE[\widehat{m}(x, \theta) \mid \theta, \CJ] = \frac{1}{n_\CI \lambda(L(x, \theta))}\sum_{i = 1}^{n_\CI} (\1_{\{X_i \in L(x, \theta)\}}Y_i - \BE[\1_{\{X_1 \in L(x, \theta)\}}Y_1 \mid \theta, \CJ]).
\end{equation*}
This is where the proofs of the two cases differ. First consider the set of assumptions in (i). We compute the fourth moment of $\widehat{m}(x, \theta) - \BE[\widehat{m}(x, \theta) \mid \theta, \CJ]$ as follows. Disregarding the division with $n_\CI \lambda(L(x, \theta))$ in front, we have a sum of iid terms with mean zero and hence
\begin{align*}
    &\BE\Big[\Big(\sum_{i = 1}^{n_\CI} (\1_{\{X_i \in L(x, \theta)\}}Y_i - \BE[\1_{\{X_1 \in L(x, \theta)\}}Y_1 \mid \theta, \CJ]) \Big)^4 \mid \theta, \CJ \Big] = \\
    &n_\CI \BE[(\1_{\{X_1 \in L(x, \theta)\}}Y_1 - \BE[\1_{\{X_1 \in L(x, \theta)\}}Y_1])^4 \mid \theta, \CJ] + 3n_\CI(n_\CI - 1)\Var(\1_{\{X_1 \in L(x, \theta)\}}Y_1 \mid \theta, \CJ)^2.
\end{align*}
Expanding the conditional central fourth moment of $\1_{\{X_1 \in L(x, \theta)\}} Y_1$ and dividing by $\lambda(L(x, \theta))^4$, we see that
\begin{equation*}
    \frac{\BE[(\1_{\{X_1 \in L(x, \theta)\}}Y_1 - \BE[\1_{\{X_1 \in L(x, \theta)\}}Y_1])^4 \mid \theta, \CJ]}{\lambda(L(x, \theta))^4} = \frac{1}{\lambda(L(x, \theta))^3}(\BE[Y^4 \mid X = x] + o_\BP(1)).
\end{equation*}
Similarly,
\begin{equation*}
    \frac{\Var(\1_{\{X_1 \in L(x, \theta)\}}Y_1 \mid \theta, \CJ)^2}{\lambda(L(x, \theta))^4} = \frac{1}{\lambda(L(x, \theta))^2}(\BE[Y^2 \mid X = x]^2 + o_\BP(1)).
\end{equation*}
All in all,
\begin{align*}
    \BE[(\widehat{m}(x, \theta) - \BE[\widehat{m}(x, \theta) \mid \theta, \CJ])^4 \mid \theta, \CJ] &= \frac{\BE[Y^4 \mid X = x] + o_\BP(1)}{n_\CI^3 \lambda(L(x, \theta))^3} + 3\frac{n_\CI - 1}{n_\CI} \frac{\BE[Y^2 \mid X = x] + o_\BP(1)}{n_\CI^2 \lambda(L(x, \theta))^2}
\end{align*}
which is summable by assumption. Replacing $\theta, \CJ$ by $\theta, \CJ_\infty$ as per Lemma \ref{lem:convergenceconditioning}, we have $\widehat{m}(x, \theta) - \BE[\widehat{m}(x, \theta) \mid \theta, \CJ] \to 0$ $\BP$-a.s. Letting $Y \equiv 1$ yields almost sure convergence of $\widehat{f}(x, \theta)$ and the continuous mapping theorem completes the proof in the (i) case. \\

Now consider the set of assumptions in (ii). Under these conditions, we already know from  Proposition~\ref{prop:densityestimatorstrongconvergence} that $\widehat{f}(x, \theta) \to f(x)$ a.s. It remains to show that $\widehat{m}(x, \theta) \to m(x)$ a.s. Let $\Lambda_{n_\CI}(\cdot \mid \theta, \CJ)$ denote the cumulant generating function of the sum conditional on $\theta, \CJ$. Then for $t \in (-c, c)$,
\begin{align*}
    &\Lambda_{n_\CI}(t \mid \theta, \CJ)\\
    &:= \log\BE\Big[\exp\Big(t\sum_{i = 1}^{n_\CI} (\1_{\{X_i \in L(x, \theta)\}}Y_i - \BE[\1_{\{X_1 \in L(x, \theta)\}}Y_1 \mid \theta, \CJ]) \Big) \mid \theta, \CJ \Big] \\
    &= n_\CI \log \BE\Big[\exp\Big(t(\1_{\{X_1 \in L(x, \theta)\}}Y_1 - \BE[(\1_{\{X_1 \in L(x, \theta)\}}Y_1 \mid \theta, \CJ])\Big) \mid \theta, \CJ \Big] \\
    &= -tn_\CI \BE[(\1_{\{X_1 \in L(x, \theta)\}}Y_1 \mid \theta, \CJ] + n_\CI\log \BE\Big[\exp\Big(t \1_{\{X_1 \in L(x, \theta)\}}Y_1 \Big) \mid \theta, \CJ\Big]
\end{align*}
The expectation in the second term equals
\begin{align*}
    &\BE\Big[\exp\Big(t \1_{\{X_1 \in L(x, \theta)\}}Y_1 \Big) \mid \theta, \CJ\Big]\\
    &= 1 - p(\theta, \CJ) + \BE\Big[\1_{\{X_1 \in L(x, \theta)\}} e^{tY_1} \mid \theta, \CJ\Big] \\
    &= 1 - p(\theta, \CJ) + \int_{L(x, \theta)} \kappa_{Y \mid X = \widetilde{x}}(t)f(\widetilde{x})\mathrm{d}\widetilde{x} \\
    &= 1 + \lambda(L(x, \theta))f(x)(\kappa_{Y \mid X = x}(t) - 1 + o(1)) \quad \BP\text{-a.s.}
\end{align*}
Recalling that $\BE[\1_{\{X_1 \in L(x, \theta)\}}Y_1 \mid \theta, \CJ] = \lambda(L(x, \theta))(m(x) + o(1))$ a.s. and using the Taylor expansion $\log(1 + s) = s + o(s)$ in the limit $s \to 0$, we have a.s. that
\begin{align*}
    \Lambda_{n_\CI}(t \mid \theta, \CJ) = n_\CI\lambda(L(x, \theta))(-tm(x) + f(x)(\kappa_{Y \mid X = x}(t) - 1) + o(1)).
\end{align*}
We can now apply a two-sided Chernoff bound to obtain
\begin{align*}
    &\BP(|\widehat{m}(x, \theta) - \BE[\widehat{m}(x, \theta) \mid \theta, \CJ]| > \delta \mid \theta, \CJ)\\
    &\leq e^{-\delta tn_\CI \lambda(L(x, \theta))}\Big(e^{\Lambda_{n_\CI}(t \mid \theta, \CJ)} + e^{\Lambda_{n_\CI}(-t \mid \theta, \CJ)} \Big). 
\end{align*}
This leaves us with a sum of two exponentials. The exponents in the two terms without the $o(1)$ terms are
\begin{align*}
    E_1(t) &:= n_\CI \lambda(L(x, \theta))(-\delta t - tm(x) + f(x)(\kappa_{Y \mid X = x}(t) - 1)), \\
    E_2(t) &:= n_\CI \lambda(L(x, \theta))(-\delta t + tm(x) + f(x)(\kappa_{Y \mid X = x}(-t) - 1)).
\end{align*}
We are done once we show that $t$ can be chosen in a way that makes both exponents negative. Note that $E_1(0) = E_2(0) = 0$ and that $E_1'(0) = E_2'(0) = -\delta < 0$ using properties of moment generating functions. Thus there must exist a $t \in (0, c)$ such that $E_1(t) < 0$ and $E_2(t) < 0$. This shows that for every $\omega$ in a set with probability one, there exists a $\delta'(\omega)$ such that for large enough $n_\CI$,
\begin{equation*}
    \BP(|\widehat{m}(x, \theta) - \BE[\widehat{m}(x, \theta) \mid \theta, \CJ]| > \delta \mid \theta(\omega), \CJ(\omega)) \leq 2\exp\Big(-\delta'(\omega) n_\CI \lambda(L(x, \theta)) \Big).
\end{equation*}
Now replace the conditioning in the probability with $\theta, \CJ_\infty$, as is allowed by Lemma~\ref{lem:convergenceconditioning}, and apply Corollary~\ref{cor:conditionalBC} as earlier to conclude the proof of $\widehat{m}(x, \theta) \to m(x)$, $\BP$-a.s. Finally, apply the continuous mapping theorem to conclude $T(x, \theta) \to \BE[Y \mid X = x]$ $\BP$-a.s.
\end{proof}

\begin{proof}[Proof of Theorem~\ref{thm:uniformconsistency}]
Both weak and strong uniform consistency are proved in the same way. The only difference is the application of the conditional Borel--Cantelli criterion for strong convergence, Corollary~\ref{cor:conditionalBC} for the case of strong uniform consistency. Also, showing uniform consistency of $\widehat{f}(x, \theta)$ is a special case of the proof for $\widehat{m}(x, \theta)$. Simply let $Y \equiv 1$. Thus we only provide the proof for $\widehat{m}(x, \theta)$. To show uniform consistency for the component $\widehat{m}(x, \theta)$, we make the usual decomposition
\begin{equation*}
    \widehat{m}(x, \theta) - m(x) = \widehat{m}(x, \theta) - \BE[\widehat{m}(x, \theta) \mid \theta, \CJ] + \BE[\widehat{m}(x, \theta) \mid \theta, \CJ] - m(x).
\end{equation*}
To start with, consider the second term. Using Lipschitz continuity of $m$, there exists a constant $C_m$ such that
\begin{align*}
    |\BE[\widehat{m}(x, \theta) \mid \theta, \CJ] - m(x)| &\leq \frac{1}{\lambda(L(x, \theta))} \int_{L(x, \theta)} |m(\widetilde{x}) - m(x)|\mathrm{d}\widetilde{x} \\ &\leq \frac{C_m}{\lambda(L(x, \theta))} \int_{L(x, \theta)} \| \widetilde{x} - x\| \mathrm{d}\widetilde{x} \\ 
    &\leq C_m \sqrt{d}\max_{j = 1, ..., d}\{b_{n, j}^x - a_{n, j}^x\}.
\end{align*}
Hence
\begin{equation*}
    \sup_{x \in [0, 1]^d} |\BE[\widehat{m}(x, \theta) \mid \theta, \CJ] - m(x)| \leq C_m \sqrt{d} \max_{j = 1, ..., d} \sup_{x \in [0, 1]^d} \{b_{n, j}^x - a_{n, j}^x\},
\end{equation*}
and this converges to zero in the desired way by assumption. It remains to consider the first term, for which it holds that
\begin{align*}
    &\widehat{m}(x, \theta) - \BE[\widehat{m}(x, \theta) \mid \theta, \CJ]\\
    &= \frac{1}{n_\CI \lambda(L(x, \theta))} \sum_{i = 1}^{n_\CI}( \1_{\{X_i \in L(x, \theta)\}}Y_i - \BE[\1_{\{X_1 \in L(x, \theta)\}}Y_1 \mid \theta, \CJ]).
\end{align*}
Thus, for any $\delta > 0$,
\begin{align*}
    &\BP\Big(\sup_{x \in [0, 1]^d}|\widehat{m}(x, \theta) - \BE[\widehat{m}(x, \theta) \mid \theta, \CJ] > \delta \mid \theta, \CJ\Big) \leq \\
    &\BP\Big(\sup_{x \in [0, 1]^d}\Big|\frac{1}{n_\CI}\sum_{i = 1}^{n_\CI}( \1_{\{X_i \in L(x, \theta)\}}Y_i - \BE[\1_{\{X_1 \in L(x, \theta)\}}Y_1 \mid \theta, \CJ]) \Big| > \delta \underline{v}_n \mid \theta, \CJ \Big),
\end{align*}
where we recall that $\underline{v}_n := \inf_{x \in [0, 1]^d}\lambda(L(x, \theta))$. A decision tree creates a partition $\mathcal{P}$ of the feature space $[0, 1]^d$ which is contained in $\mathcal{A}$, the set of all half-open rectangles of $[0, 1]^d$, 
\begin{equation*}
    \mathcal{A} = \{(a_1, b_1] \times \cdots (a_d, b_d] : a_i, b_i \in [0, 1], a_i < b_i\}.
\end{equation*}
From this observation, it follows that
\begin{align*}
    &\BP\Big(\sup_{x \in [0, 1]^d}\Big|\frac{1}{n_\CI}\sum_{i = 1}^{n_\CI}( \1_{\{X_i \in L(x, \theta)\}}Y_i - \BE[\1_{\{X_1 \in L(x, \theta)\}}Y_1 \mid \theta, \CJ]) \Big| > \delta \underline{v}_n \mid \theta, \CJ \Big) \leq \\
    &\quad\BP\Big(\sup_{A \in \mathcal{A}}\Big|\frac{1}{n_\CI}\sum_{i = 1}^{n_\CI} (\1_{\{X_i \in A\}}Y_i - \BE[\1_{\{X_1 \in A\}}Y_1]) \Big| > \delta \underline{v}_n \mid \theta, \CJ \Big).
\end{align*}
The rest of the proof consists in bounding this probability. Assume that $Y \in [0, 1]$ a.s. and consider the function class
\begin{equation*}
    \CF := \{f_A:[0, 1]^d \times [0, 1] \to [0, 1], f_A(x, y) = \1_A(x)y : A \in \mathcal{A}\}.
\end{equation*}
By considering the countable subclass $\widetilde{\CF}$ obtained by restricting the endpoints of the rectangles in $\mathcal{A}$ to rational numbers, it is easy to see that $\CF$ is pointwise separable in the sense of~\cite{VaartWellner}. Also, $\CF$ is a VC-class (see again~\cite{VaartWellner} for background). Indeed, for a function $f_A \in \CF$, the subgraph equals
\begin{align*}
    &\{(x, y, t) \in [0, 1]^d \times [0, 1] \times \BR : t < \1_A(x)y\}\\
    &= [0, 1]^d \times [0, 1] \times (-\infty, 0] \\
    &\quad\cup \{(x, y, t) \in A \times [0, 1] \times (0, \infty) : t < y\} \\
    &= [0, 1]^d \times [0, 1] \times (-\infty, 0] \\
    &\quad\cup A \times \{(y, t) \in [0, 1] \times (0, \infty) : t < y\}.
\end{align*}
Hence the subgraphs of the functions in $\CF$ are of the form of a fixed set union a fixed set times a set from $\mathcal{A}$. From this observation, it is easy to see that $\CF$ is a VC-class of functions with the same VC-dimension as $\mathcal{A}$, namely $2d$ (see Appendix~\ref{sec:moreresults} for background). We may now apply a combination of Theorems 2.6.7 and 2.14.28 in~\cite{VaartWellner} (see also \cite{Talagrand1994}) to conclude that
\begin{align*}
    &\BP\Big(\sup_{A \in \mathcal{A}}\Big|\frac{1}{n_\CI}\sum_{i = 1}^{n_\CI} (\1_{\{X_i \in A\}}Y_i - \BE[\1_{\{X_1 \in A\}}Y_1]) \Big| > \delta \underline{v}_n \mid \theta, \CJ \Big) \\
    &\leq\Big(\frac{\max\{D(d) \delta \underline{v}_n \sqrt{n_\CI}, \sqrt{4d}\}}{\sqrt{4d}} \Big)^{4d}e^{-2 \delta^2 \underline{v}_n^2 n_\CI}
\end{align*}
for $D(d)$ a constant that depends on the VC-dimension of $\CF$ only, which is a function of $d$. This result is easily extended to bounded responses satisfying $Y \in [0, M]$ by considering the scaled function class
\begin{equation*}
    \CF_M := \{f_A:[0, 1]^d \times [0, M] \to [0, 1], f_A(x, y) = \1_A(x)y/M : A \in \mathcal{A}\},
\end{equation*}
which yields the same tail bound, but with $\delta$ replaced by $\delta/M$. To extend to $Y \in [-M, M]$ a.s., simply decompose $Y = Y \1_{\{Y \in [0, M]\}} + Y \1_{\{Y \in [-M, 0]\}}$ and make a union bond. To summarise, when $Y \in [-M, M]$ a.s., we have the bound
\begin{align*}
    &\BP\Big(\sup_{A \in \mathcal{A}}\Big|\frac{1}{n_\CI}\sum_{i = 1}^{n_\CI} (\1_{\{X_i \in A\}}Y_i - \BE[\1_{\{X_1 \in A\}}Y_1]) \Big| > \delta \underline{v}_n \mid \theta, \CJ \Big)\\
    &\leq 2\left(\frac{\max\Big\{\frac{D(d) \delta \underline{v}_n \sqrt{n_\CI}}{M}, \sqrt{4d}\Big\}}{\sqrt{4d}} \right)^{4d}e^{-2 \delta^2 \underline{v}_n^2 n_\CI/M^2}.
\end{align*}
Now assume that $Y$ may be unbounded but with finite moment generating function $\kappa_Y$ in the neighbourhood $(-c, c)$ of zero. Then we decompose
\begin{align*}
    &\BP\Big(\sup_{A \in \mathcal{A}}\Big|\frac{1}{n_\CI}\sum_{i = 1}^{n_\CI} \1_{\{X_i \in A\}}Y_i - \BE[\1_{\{X_1 \in A\}}Y_1] \Big| > \delta \underline{v}_n \mid \theta, \CJ \Big) \leq \\
    &\BP\Big(\sup_{A \in \mathcal{A}}\Big|\frac{1}{n_\CI}\sum_{i = 1}^{n_\CI} \1_{\{X_i \in A\}}Y_i\1_{\{|Y_i| \leq M\}} - \BE[\1_{\{X_1 \in A\}}Y_1\1_{\{|Y_1| \leq M\}}] \Big| > \delta \underline{v}_n \mid \theta, \CJ \Big) + \\
    &\BP\Big(\sup_{A \in \mathcal{A}}\Big|\frac{1}{n_\CI}\sum_{i = 1}^{n_\CI} \1_{\{X_i \in A\}}Y_i\1_{\{|Y_i| > M\}} - \BE[\1_{\{X_1 \in A\}}Y_1\1_{\{|Y_1| > M\}}] \Big| > \delta \underline{v}_n \mid \theta, \CJ \Big) \\
    &=: (1) + (2).
\end{align*}
Here (1) has the same exponential bound as before. As for (2), we combine a Markov inequality with a Cauchy--Schwarz inequality to obtain
\begin{align*}
    (2) &\leq \frac{1}{\delta \underline{v}_n} \BE\Big[\sup_{A \in \mathcal{A}}\Big|\frac{1}{n_\CI}\sum_{i = 1}^{n_\CI} \1_{\{X_i \in A\}}Y_i \1_{\{|Y_i| > M\}} - \BE[\1_{\{X_1 \in A\}}Y_1 \1_{\{|Y_1| > M\}}] \Big| \Big] \\
    &\leq \frac{2}{\delta \underline{v}_n} \BE[|Y| \1_{\{|Y| > M\}}] \leq 2\sqrt{\BE[Y^2]}\frac{\sqrt{\BP(|Y| > M)}}{\delta \underline{v}_n}.
\end{align*}
To bound the tail probability of $|Y|$, simply apply a Markov inequality with the function $x \mapsto e^{tx}$ for some $t \in (0, c)$ to obtain
\begin{align*}
    \BP(|Y| > M) &= \BP(Y > M) + \BP(-Y > M) = \BP(e^{tY}> e^{tM}) + \BP(e^{-tY} > e^{tM}) \\
    &\leq e^{-tM}(\kappa_Y(t) + \kappa_Y(-t)). 
\end{align*}
All in all, we have shown
\begin{equation*}
    (2) \leq \frac{ 2\sqrt{\BE[Y^2]}(\kappa_Y(t) + \kappa_Y(-t))}{\delta} \cdot\frac{e^{-tM/2}}{\underline{v}_n}.
\end{equation*}
We are free to choose $M$ as a function of $n_\CI$, so choose $M_n$ as stated in the theorem. Combine with Lemma~\ref{lem:convergenceconditioning} to complete the proof of uniform consistency $\widehat{m}$. It only remains to transfer uniform consistency of $\widehat{m}$ and $\widehat{f}$ to uniform consistency of $T$. Start by writing
\begin{align*}
    |T(x, \theta) - \BE[Y \mid X = x]| &= \left|\frac{\widehat{m}(x, \theta)}{\widehat{f}(x, \theta)} - \frac{m(x)}{f(x)} \right| = \left|\frac{\widehat{m}(x, \theta) f(x) - f(x)m(x) + f(x)m(x) - m(x)\widehat{f}(x, \theta)}{f(x)\widehat{f}(x, \theta)} \right| \\
    &\leq \frac{|\widehat{m}(x, \theta) - m(x)|}{\widehat{f}(x, \theta)} + \frac{|m(x)||\widehat{f}(x, \theta) - f(x)|}{f(x)\widehat{f}(x, \theta)}.
\end{align*}
From Assumption \ref{asm:covariate} and Jensen's inequality, $|m(x)| \leq \sqrt{K}$. Also, for $n$ large enough, $|\widehat{f}(x, \theta) - f(x)| < \varepsilon/2$, and so by the reverse triangle inequality, for large enough $n$,
\begin{equation*}
    \widehat{f}(x, \theta) = f(x) - |\widehat{f}(x, \theta) - f(x)| > f(x) - \frac{\varepsilon}{2} \geq \frac{\varepsilon}{2}.
\end{equation*}
But then for large enough $n$, we have the bound
\begin{equation*}
    \sup_{x \in [0, 1]^d}|T(x, \theta) - \BE[Y \mid X = x]| \leq \frac{2}{\varepsilon}\sup_{x \in [0, 1]^d}|\widehat{m}(x, \theta) - m(x)| + \frac{2\sqrt{K}}{\varepsilon^2}\sup_{x \in [0, 1]^d}|\widehat{f}(x, \theta) - f(x)|
\end{equation*}
and the proof is complete.
\end{proof}

\begin{proof}[Proof of Proposition~\ref{prop:regularI}]
We claim that
\begin{equation*}
     \frac{N_L^j(x, \theta)}{n_\CI} - \BP(X_1^j \in (a_{n, j}^x, b_{n, j}^x] \mid \theta, \CJ) \to 0 \quad \BP\text{-a.s.}
\end{equation*}
Fix $\delta > 0$. Then by a fourth order Markov inequality and the fact that (using honesty) conditional on $\theta, \CJ$, $N_L^j(x, \theta)$ is binomial distributed with parameters $n_\CI$ and $p_j(\theta, \CJ) := \BP(X_1^j \in (a_{n, j}^x, b_{n, j}^x] \mid \theta, \CJ)$,
\begin{align*}
    &\BP\Big(\Big| \frac{N_L^j(x, \theta)}{n_\CI} - p_j(\theta, \CJ) \Big| > \delta \mid \theta, \CJ \Big) \\
    &\leq \frac{n_\CI p_j(\theta, \CJ)(1 - p_j(\theta, \CJ))(1 + (3n_\CI - 6)p_j(\theta, \CJ)(1 - p_j(\theta, \CJ)))}{n_\CI^4 \delta^4} \\
    &\leq \frac{3}{\delta^4 n_\CI^2}
\end{align*}
which is summable and deterministic, proving the claim. We have from Assumption~\ref{asm:covariate} that
\begin{equation*}
    p_j(\theta, \CJ) = \int_{a_{n, j}^x}^{b_{n, j}^x} f^j(\widetilde{x})\mathrm{d}\widetilde{x} \in [\varepsilon(b_{n, j}^x - a_{n, j}^x), C(b_{n, j}^x - a_{n, j}^x)],
\end{equation*}
where $f^j$ is the density of $X_1^j$. The equivalence stated in the proposition now follows immediately. 
\end{proof}

\begin{proof}[Proof of Proposition~\ref{prop:regularII}]
As in the proof of Theorem~\ref{thm:uniformconsistency}, we use that the collection of all possible leaves is contained in the class $\mathcal{A}$ of all half-open rectangles in $[0, 1]^d$. Letting $\BP^{(n_\CI)}$ denote the empirical measure of $X_1, ..., X_{n_\CI}$, we get for any $\delta \in (0, 1]$ that
\begin{align*}
    &\BP\Big(\Big|\frac{N_L(x, \theta)}{n_\CI} - \BP(X_1 \in L(x, \theta) \mid \theta, \CJ) \Big| > \delta \mid \theta, \CJ \Big)\\
    & \leq \BP\Big(\sup_{A \in \mathcal{A}}\Big| \BP^{(n_\CI)}(A) - \BP(X_1 \in A) \Big| > \delta\Big)\\
    &\leq c n_\CI^{4d} e^{-2n_\CI\delta^2},
\end{align*}
for a constant $c \leq e^{4\delta(1 + \delta)}$ where we have used Lemma~\ref{lem:Devroyebound} from the appendix. By using the tower property, we can remove the conditioning in the first probability. Here we assume $d > 1$. We return to the case $d = 1$ below. It follows that with probability at least $1 - c n_\CI^{4d} e^{-2n_\CI\delta^2}$, we have
\begin{equation*}
    \BP(X_1 \in L(x, \theta) \mid \theta, \CJ) \geq \frac{N_L(x, \theta)}{n_\CI} - \delta
\end{equation*}
Now choose $\delta$ to be dependent on $n$ via $\delta = k_n/2n_\CI$. Define
\begin{equation*}
    b_n(x) := \BP\Big(\BP(X_1 \in L(x, \theta) \mid \theta, \CJ) \geq \frac{N_L(x, \theta)}{n_\CI} - \frac{k_n}{2n_\CI}\Big).
\end{equation*}
Since
\begin{equation*}
    \BP(X_1 \in L(x, \theta) \mid \theta, \CJ) = \int_{L(x, \theta)} f(\widetilde{x})\mathrm{d} \widetilde{x} \leq C \lambda(L(x, \theta)),
\end{equation*}
we have
\begin{align*}
    \BP\Big(n_\CI \lambda(L(x, \theta)) &\geq \frac{k_n}{2C}\Big) \geq \BP\Big(\BP(X_1 \in L(x, \theta) \mid \theta, \CJ) \geq \frac{k_n}{2n_\CI} \Big) \\
    &\geq \BP\Big(\BP(X_1 \in L(x, \theta)) \geq \frac{N_L(x, \theta)}{n_\CI} - \frac{k_n}{2n_\CI} \geq \frac{k_n}{2n_\CI} \Big) \\
    &= \BP\Big((\BP(X_1 \in L(x, \theta) \geq \frac{N_L(x, \theta)}{n_\CI} - \frac{k_n}{2n_\CI}, N_L(x, \theta) \geq k_n\Big) \\
    &\geq a_n(x) + b_n(x) - 1 \geq a_n(x) - cn_\CI^{4d}e^{-k_n^2/2n_\CI} \to 1,
\end{align*}
and so Lemma \ref{lem:BCinfinity} (1) applies to show that $n_\CI \lambda(L(x, \theta)) \to \infty$ in probability. The almost sure statement follows from (2) of Lemma \ref{lem:BCinfinity} since
\begin{equation*}
    \sum_{n = 1}^\infty (1 - (a_n(x) - cn_\CI^{4d}e^{-k_n^2/2n_\CI})) = \sum_{n = 1}^\infty(1 - a_n(x)) + \sum_{n = 1}^\infty cn_\CI^{4d}e^{-k_n^2/2n_\CI} < \infty.
\end{equation*}
When $d = 1$, one should just replace $c n_\CI^{4d} e^{-2n_\CI\delta^2}$ in the proof by $c (n_\CI^{4} + 1) e^{-2n_\CI\delta^2}$  and the argument still goes through.
\end{proof}

\begin{proof}[Proof of Proposition~\ref{prop:regularIIstrong}]
We again assume $d > 1$, since the proof for $d = 1$ is very similar. In the proof of Proposition~\ref{prop:regularII}, it was established that
\begin{equation*}
    \BP\Big(n_\CI \lambda(L(x, \theta)) \geq \frac{k_n}{2C} \Big) \geq a_n(x) - cn_\CI^{4d}e^{-k_n^2/2n_\CI}.
\end{equation*}
Let $A_n(x) := \{n_\CI \lambda(L(x, \theta)) \geq k_n/2C\}$. We can now verify the conditions in Theorem~\ref{thm:strongconsistency} depending on the cases (i) and (ii). For (i), we get
\begin{equation*}
    \frac{1}{n_\CI^2 \lambda(L(x, \theta))^2} \leq \frac{4C^2}{k_n^2} \1_{A_n(x)} + \frac{1}{n_\CI^2 \lambda(L(x, \theta))^2} \1_{A_n(x)^c},
\end{equation*}
which is summable by the assumption on $k_n$ and since $\1_{A_n(x)}$ is eventually equal to one with probability one. Now consider the set of conditions in (ii) and let $\delta > 0$. Then
\begin{align*}
    \exp(-\delta n_\CI \lambda(L(x, \theta))) \leq \exp\Big(-\frac{\delta k_n}{2C} \Big)\1_{A_n(x)} + \1_{A_n(x)^c} \leq \exp\Big(-\frac{\delta k_n}{2C} \Big) + \1_{A_n(x)^c}.
\end{align*}
The proof is now complete by the summability assumptions stated in (ii).
\end{proof}

\begin{proof}[Proof of Corollary~\ref{cor:regularconsistent}]
The summability requirement in (ii) is clearly satisfied for both choices of $k_n$. For the summability requirement in Proposition~\ref{prop:regularII}, we have for (1) that
\begin{equation*}
    \frac{n_\CI^{4d}}{e^{k_n^2/2n_\CI}} \sim \frac{n_\CI^{4d}}{e^{n_\CI^{2\beta - 1}/2}},\quad \mbox{and}
\quad    \lim_{n \to \infty }\frac{\frac{n^{4d}}{e^{n^{2\beta - 1}/2}}}{1/n^2} = \lim_{n \to \infty} \frac{n^{4d + 2}}{e^{n^{2\beta - 1}/2}} = 0,
\end{equation*}
so a comparison test with the series $\sum_{n = 1}^\infty \frac{1}{n^2}$ yields that (1) satisfies the desired convergence statement. As for (2), now assuming that $k_n \sim \sqrt{n_\CI \log(n_\CI)^\beta}$ for $\beta > 1$,
\begin{equation*}
    \frac{n_\CI^{4d}}{e^{k_n^2/2n_\CI}} \sim n_\CI^{4d - \log(n_\CI)^{\beta - 1}/2} = o\Big(\frac{1}{n_\CI^2} \Big),
\end{equation*}
so the same comparison test as in (1) works. As for the summability of $k_n^{-2}$, this obviously holds for (1). For (2), we have
\begin{equation*}
    \int_2^\infty \frac{1}{y \log(y)^\beta}\mathrm{d} y = \int_{\log 2}^\infty v^{-\beta}\mathrm{d} v = \frac{\log(2)^{1 - \beta}}{\beta - 1} < \infty,
\end{equation*}
so the integral test shows that $k_n^{-2}$ is indeed summable.
\end{proof}

\begin{proof}[Proof of Proposition~\ref{prop:regularIuniform}]
We have
\begin{align*}
    &\BP\Big(\sup_{x \in [0, 1]^d}\Big|\frac{N_L^j(x, \theta)}{n_\CI} - p_j(\theta, \CJ) \Big| > \delta \mid \theta, \CJ \Big)\\
    &\leq \BP\Big(\sup_{A \in \mathcal{A}}\Big|\BP^{(n_\CI)}_j(A) - \BP(X_1^j \in A) \Big| > \delta \mid \theta, \CJ \Big)\leq c(n_\CI^4 + 1)e^{-2\delta n_\CI^2}
\end{align*}
where $\mathcal{A} = \{(a, b] : a, b \in [0, 1], a < b\}$ by Lemma~\ref{lem:Devroyebound}. This is summable, and applying Lemma~\ref{lem:convergenceconditioning}, we get
\begin{equation*}
    \sup_{x \in [0, 1]^d}\Big|\frac{N_L^j(x, \theta)}{n_\CI} - p_j(\theta, \CJ) \Big| \to 0 \quad \BP\text{-a.s.},
\end{equation*}
from which it follows that
\begin{equation*}
    \sup_{x \in [0, 1]^d} \frac{N_L^j(x, \theta)}{n_\CI} \overset{\BP}{\to} 0 \quad \Leftrightarrow \quad \sup_{x \in [0, 1]^d} p_j(\theta, \CJ) \overset{\BP}{\to} 0
\end{equation*}
and the same equivalence is of course true for convergence almost surely. The claim then holds since
\begin{equation*}
    \sup_{x \in [0, 1]^d} p_j(\theta, \CJ) \in \Big[\varepsilon \sup_{x \in [0, 1]^d} \{b_{n, j}^x - a_{n, j}^x\}, C\sup_{x \in [0, 1]^d}\{b_{n, j}^x - a_{n, j}^x\} \Big].
\end{equation*}
\end{proof}

\begin{proof}[Proof of Proposition~\ref{prop:regularIIstronguniform}]
We need to verify the assumptions of (ii) of Theorem~\ref{thm:uniformconsistency}. From the proof of Proposition~\ref{prop:regularII}, we have the uniform lower bound in $x$ given by
\begin{equation*}
    \BP\Big(\lambda(L(x, \theta)) \geq \frac{k_n}{2Cn_\CI} \Big) \geq a_n - cn_\CI^{4d} e^{-k_n^2/2n_\CI}.
\end{equation*}
To establish a similar bound for $\BP(\underline{v}_n \geq k_n/(2C n_\CI))$, we make the observation that there are at most $n_\CI$ leaves, which gives us the union bound
\begin{align*}
    \BP\Big(\underline{v}_n < \frac{k_n}{2Cn_\CI}\Big) &= \BP\Big(\lambda(L(x, \theta)) < \frac{k_n}{2Cn_\CI} \text{ for some } x \in [0, 1]^d \Big) \\
    &\leq n_\CI (1 - (a_n - cn_\CI^{4d}e^{-k_n^2/2n_\CI})) = n_\CI(1 - a_n) + cn_\CI^{4d + 1}e^{-k_n^2/2n_\CI}.
\end{align*}
The right hand side is summable and so if $A_n := \{\underline{v}_n \geq k_n/(2Cn_\CI)\}$, the indicator $\1_{A_n}$ goes to one almost surely. Choose $M_n =  n_\CI^\gamma$ for some $\gamma \in (0, \beta - 1/2)$. Then for any $t > 0$,
\begin{equation*}
    \frac{e^{-tM_n}}{\underline{v}_n} \leq 2C \frac{n_\CI}{k_n}e^{-tM_n}\1_{A_n} + \1_{A_n^c} \frac{e^{-tM_n}}{\underline{v}_n}.
\end{equation*}
The first term is on the order
\begin{equation*}
    n_\CI^{1 - \beta}e^{-tn_\CI^\gamma}
\end{equation*}
which is summable by the integral test. Since $\1_{A_n^c}$ goes to zero a.s., it is even eventually equal to zero with probability one, and thus the second term is also summable. As for the term involving the maximum, we have for any $\delta > 0$ that
\begin{equation*}
    e^{-\delta \underline{v}_n^2 n_\CI/M_n^2} \leq \1_{A_n} \exp\Big(-\frac{\delta}{4C^2} \cdot \frac{k_n^2}{n_\CI M_n^2}\Big) + \1_{A_n^c}.
\end{equation*}
Again, $\1_{A_n^c}$ is eventually zero a.s. and is thus summable. The first term is on the order
\begin{equation*}
    e^{-\delta' n_\CI^{2\beta - 1 - 2\gamma}}
\end{equation*}
for a constant $\delta' > 0$, and since $2\beta - 1 - 2\gamma > 0$ by the choice of $\gamma$, the term involving the maximum is summable. This completes the proof.
\end{proof}

\begin{proof}[Proof of Lemma~\ref{lem:numsplits}]
The proof of this result mirrors the first half of the proof of Lemma 2 in~\cite{WagerAthey}. We have
\begin{align*}
    &\BP(n_\CI \alpha^{s_n(x)} \leq N_L(x, \theta) \leq n_\CI(1 - \alpha)^{s_n(x)}, k_n \leq N_L(x, \theta) \leq 2k_n - 1) \leq \\
    &\BP(n_\CI \alpha^{s_n(x)} \leq N_L(x, \theta) \leq 2k_n - 1) \leq \BP\Big(s_n(x) \geq \frac{\log(n_\CI/(2k_n - 1))}{\log(\alpha^{-1})}\Big).
\end{align*}
The first probability is greater than or equal to $\overline{a}_n(x) + b_n(x) - 1$ which goes to one, and since $n_\CI/(2k_n - 1) \to \infty$, Lemma \ref{lem:BCinfinity} applies to show that $s_n(x) \to \infty$ in probability. Now assume that the tree is a random-split tree and let
\begin{equation*}
    B_n^j \sim \text{Bin}\Big(\Big\lfloor \frac{\log(n_\CI/(2k_n - 1))}{\log(\alpha^{-1})} \Big\rfloor, \overline{\pi}_j \Big).
\end{equation*}
Then for any real number $m > 0$,
\begin{align*}
    \BP(s_{n, j}(x) > m) &\geq \BP\Big(s_{n, j}(x) > m, s_n(x) \geq \frac{\log(n_\CI/(2k_n - 1))}{\log(\alpha^{-1})} \Big) \\
    &\geq \BP\Big(B_n^j > m, s_n(x) \geq \frac{\log(n_\CI/(2k_n - 1))}{\log(\alpha^{-1})} \Big) \\
    &\geq \BP(B_n^j > m) + \BP\Big(s_n(x) \geq \frac{\log(n_\CI/(2k_n - 1))}{\log(\alpha^{-1})} \Big) - 1.
\end{align*}
The second probability goes to one by the argument above. Now let $m$ depend on $n$ via
\begin{equation*}
    m_n = \delta \Big\lfloor \frac{\log(n_\CI/(2k_n - 1))}{\log(\alpha^{-1})} \Big\rfloor \overline{\pi}_j
\end{equation*}
where $\delta \in (0, 1)$. By the SLLN,
\begin{equation*}
    \frac{B_n^j}{m_n} = \frac{1}{\delta \overline{\pi}_j} \frac{B_n^j}{\Big\lfloor \frac{\log(n_\CI/(2k_n - 1))}{\log(\alpha^{-1})} \Big\rfloor} \to \frac{1}{\delta} > 1 \quad \BP\text{-a.s.}
\end{equation*}
and so, $\BP(B_n^j > m_n) \to 1$, and we can conclude by Lemma \ref{lem:BCinfinity} that $s_{n,j}(x) \to \infty$ in probability.
\end{proof}

\begin{proof}[Proof of Lemma~\ref{lem:regularsidelength}]
Fix $j \in \{1, \dots, d\}$ and $\delta > 0$. Then
\begin{align*}
    \BP\Big(\sup_{x \in [0, 1]^d} \frac{N_L^j(x, \theta)}{n_\CI} > \delta \Big) &\leq \BP\Big(\sup_{x \in [0, 1]^d }\frac{N_L^j(x, \theta)}{n_\CI} > \delta, \frac{N_L^j(x, \theta)}{n_\CI} \leq (1 - \alpha_j)^{s_{n,j}(x)} \Big) \\
    &+ \BP(N_L(x, \theta) > n_\CI (1 - \alpha_j)^{s_{n,j}(x)}) \\
    &\leq \BP((1 - \alpha_j)^{\inf_{x \in [0, 1]^d}s_{n, j}(x)} > \delta) + 1 - b_n(x) \to 0
\end{align*}
as claimed.
\end{proof}

\begin{proof}[Proof of Lemma~\ref{lem:uniformconsistency}]
Fix $j \in \{1, ..., d\}$. By the tower property,
\begin{align*}
    &\BE[b_{n, j}^x - a_{n, j}^x]\\
    &= \BE\Big[\BE\Big[\prod_{k = 1}^{s_{n, j}(\theta)}U_{k, j}(x) \ \Big| \ \theta \Big] \Big] = \BE\Big[\prod_{k = 1}^{s_{n, j}(\theta)}\BE[U_{k, j}(x)] \Big] = \BE\Big[\Big(\frac{1}{2} \Big)^{s_{n, j}(\theta)} \Big].
\end{align*}
It is easily verified that for $B \sim \text{Bin}(n, p)$, we have $\BE[a^B] = ((1 - (1 - a)p)^n$. Since $s_{n, j}(\theta) \sim \text{Bin}(s_n, 1/d)$, we get
\begin{equation*}
    \BE[b_{n, j}^x - a_{n, j}^x] = \Big(1 - \frac{1}{2d} \Big)^{s_n},
\end{equation*}
and an application of Markov's inequality yields that $b_{n, j}^x - a_{n, j}^x \overset{\BP}{\to} 0$.
\end{proof}

\begin{proof}[Proof of Proposition~\ref{prop:uniformconsistency}]
We have 
\begin{equation*}
    \log(n_\CI \lambda(L(x, \theta))) = \log(n_\CI) + \sum_{k = 1}^{s_n} \log(U_k(x)).
\end{equation*}
Dividing by $s_n$, we have by the SLLN that
\begin{equation*}
    \frac{1}{s_n} \sum_{k = 1}^{s_n} \log(U_k(x)) \to \BE[\log(U_1(x))] = -1 \quad \BP\text{-a.s.}
\end{equation*}
It follows that $n_\CI \lambda(L(x, \theta)) \to \infty$ $\BP$-a.s. under the assertion of the proposition.
\end{proof}

\begin{proof}[Proof of Proposition~\ref{prop:centerednonuniform}]
Let $M_k^j$ denote the minimum number of times feature $j$ has been split upon across all possible paths down a tree of depth $k$ and define the function
\begin{equation*}
    G_k(m) = \BP(M_k^j \geq m), \quad m = 0, 1, \dots 
\end{equation*}
Since the tree is assumed to be balanced, $M_{s_n}^j \overset{d}{=} \inf_{x \in [0, 1]^d} s_{n, j}(x)$. We determine a recursive relation for $G_k(m)$. Since the choice of feature to split upon is independent of every other split, we can consider a decision tree as being built up of independent subtrees. In particular, letting $I^j$ denote the indicator of whether we split on $j$ in the root, we may write
\begin{equation*}
    M_k^j \overset{d}{=} I^j + \min\{M_{k - 1}^{j, (1)}, M_{k - 1}^{j, (2)}\}
\end{equation*}
where $M_{k - 1}^{j, (1)}$ and $M_{k - 1}^{j, (2)}$ are the minimal number of times across all paths down the tree that $j$ is split upon in the the two subtrees created from splitting the root node. By construction, $M_{k - 1}^{j, (1)}$ and $M_{k - 1}^{j, (2)}$ are independent and have the same distribution as $M_{k - 1}^j$, which implies that
\begin{align*}
    G_k(m) &= p_j\BP(\min\{M_{k - 1}^{j, (1)}, M_{k - 1}^{j, (2)}\} + 1 \geq m)\\
    &\quad+ (1 - p_j)\BP(\min\{M_{k - 1}^{j, (1)}, M_{k - 1}^{j, (2)}\} \geq m) \\
    &= p_j G_{k - 1}(m - 1)^2 + (1 - p_j) G_{k - 1}(m)^2
\end{align*}
and in general, $G_k(0) = 1$. In particular, for $m = 1$, we have
\begin{equation*}
    G_k(1) = p_j + (1 - p_j)G_{k - 1}(m)^2, \quad G_1(1) = p_j.
\end{equation*}
Define
\begin{equation*}
    f(x) := p_j + (1 - p_j)x^2.
\end{equation*}
We solve for fixed points for $f$. We have $x = f(x)$ if and only if $x = 1$ or $x = p_j/(1 - p_j)$. We have $f'(x) = 2(1 - p_j)x$, and we see that
\begin{equation*}
    f'(1) = 2(1 - p_j) > 1 \quad \text{and} \quad f'\Big(\frac{p_j}{1 - p_j}\Big) = 2p_j < 1
\end{equation*}
by the assumption $p_j \in (0, 1/2)$ and $G_k(1)$ is obviously non-decreasing in $k$. It follows that
\begin{equation*}
    \BP\Big(\inf_{x \in [0, 1]^d}s_{n, j}(x) \geq 1\Big) = G_{s_n}(1) \to \frac{p_j}{1 - p_j} \quad \text{as} \quad s_n \to \infty,
\end{equation*}
from which the result follows.
\end{proof}

\begin{proof}[Proof of Theorem~\ref{thm:centereduniform}]
From the discussion above the theorem, we already have
\begin{equation*}
    \sup_{x \in [0, 1]^d} \{b_{n, j}^x - a_{n, j}^x\} \to 0 \quad \BP\text{-a.s.}
\end{equation*}
It remains to choose a suitable sequence $\{M_n\}$ such that the two summability requirements of Theorem~\ref{thm:uniformconsistency} hold. First of all, note that since the leaf volumes are all identical, $\underline{v}_n = 2^{-s_n}$. Now let $t, \delta > 0$ be arbitrary and choose $s_n = \lceil \log_2 n_\CI^{1 - \beta}\rceil$ for $\beta \in (1/2, 1)$ and let $M_n = s_n^\gamma$ for $\gamma > 1$. Then
\begin{equation*}
    \frac{e^{-tM_n}}{\underline{v}_n} = e^{-s_n (ts_n^{\gamma - 1} - \log 2)}.
\end{equation*}
Choose some $\epsilon \in (0, \gamma - 1)$. Then for large enough $n$,
\begin{equation*}
    \frac{e^{-tM_n}}{\underline{v}_n} \leq e^{-s_n t s_n^{\gamma - 1 - \epsilon}}  \leq e^{-t(\log_2 n_\CI^{1 - \beta})^{\gamma - \epsilon - 1} \log_2 n_\CI^{1 - \beta}} = {n_\CI}^{-t\left(\frac{(1 - \beta)}{\log 2}\right)^{\gamma - \epsilon} (\log n_\CI)^{\gamma - 1 - \epsilon}}
\end{equation*}
For large enough $n$, the exponent will be strictly less than one, and so the first summability requirement is satisfied. As for the second, note that
\begin{equation*}
    \frac{\underline{v}_n \sqrt{n_\CI}}{M_n} = 2^{-s_n} s_n^{-\gamma} \sqrt{n_\CI} \leq \frac{n_\CI^{-(1 - \beta) + 1/2} }{(\log_2 n_\CI^{1 - \beta})^\gamma} = \frac{n_\CI^{\beta - 1/2}}{(1 - \beta)^\gamma (\log_2 n_\CI)^\gamma}.
\end{equation*}
Since $\beta \in (1/2, 1)$, this goes to infinity, and for large enough $n$, we have the bound
\begin{equation*}
    \max\Big\{\Big(\frac{\underline{v}_n \sqrt{n_\CI}}{M_n}\Big)^{4d}, 1 \Big\} \leq \frac{n_\CI^{4d(\beta - 1/2)}}{(1 - \beta)^{4d\gamma}(\log_2 n_\CI)^{4d\gamma}}.
\end{equation*}
As for the exponential part,
\begin{align*}
    \exp\Big(-\frac{\delta \underline{v}_n^2 n_\CI}{M_n^2} \Big) &= \exp\Big(-\delta 4^{-s_n} n_\CI s_n^{-2\gamma}\Big)\\
    &\leq \exp\left(-\delta  \frac{4^{-(\log_2 n_\CI^{1 - \beta} + 1)} n_\CI}{(\log_2 n_\CI^{1 - \beta} + 1)^{2\gamma}} \right) = \exp\left(-\frac{\delta}{4}\frac{n_\CI^{2\beta - 1}}{(\log_2 n_\CI^{1 - \beta} + 1)^{2\gamma}} \right).
\end{align*}
Since $\beta > 1/2$, the terms
\begin{equation*}
    \max\Big\{\Big(\frac{\underline{v}_n \sqrt{n_\CI}}{M_n}\Big)^{4d}, 1 \Big\}e^{-\delta \underline{v}_n^2 n_\CI/M_n^2}
\end{equation*}
are summable by the integral test. This completes the proof. 
\end{proof}

\subsection{Proofs for Section \ref{sec:bootstrap}}

\begin{proof}[Proof of Lemma~\ref{lem:convergenceconditioningbootstrap}]
We can follow the idea from the proof of Lemma~\ref{lem:convergenceconditioning}. Define
\begin{align*}
    \CD &:= \sigma(\theta, (X_1, Y_1), \ldots, (X_n, Y_n), \boldsymbol{W}^\CI, \boldsymbol{W}^\CJ_\infty), \\
    \CG &:= \sigma(\theta, \CJ, \boldsymbol{W}^\CJ_ \infty), \\
    \CH &:= \sigma(\CJ_\infty \setminus \CJ).
\end{align*}
By definition, $\CG \subseteq \CD$ and $\CD \indep \CH$. The random variables
\begin{equation*}
    \1_{\{|\widehat{f}^*(x, \theta) - \BE[\widehat{f}^*(x, \theta) \mid \theta, \CJ, \boldsymbol{W}^\CJ]| > \delta\}} \quad \text{and} \quad \1_{\{|\widehat{m}^*(x, \theta) - \BE[\widehat{m}^*(x, \theta) \mid \theta, \CJ, \boldsymbol{W}^\CJ]| > \delta\}}
\end{equation*}
are both $\CD$-measurable. Since $\sigma(\theta, \CJ_\infty, \boldsymbol{W}^\CJ_\infty) = \CG \lor \CH$, the proof is complete by Lemma~\ref{lem:condmean}.
\end{proof}

\begin{proof}[Proof of Theorem~\ref{thm:bootstraptreeconsistency}]
One can essentially redo all the proofs of weak consistency from above. When showing that $\widehat{f}^*(x, \theta) \to  f(x)$ in probability, one just has to condition on $\theta, \CJ$ and $\boldsymbol{W}^\CJ$ instead of just $\theta, \CJ$. Then one obtains
\begin{equation*}
    \BE[\widehat{f}^*(x, \theta) \mid \theta, \CJ, \boldsymbol{W}^\CJ] = \frac{1}{\lambda(L^*(x, \theta))} \int_{L^*(x, \theta)} f(\widetilde{x})\mathrm{d}\widetilde{x} \overset{\BP}{\to} f(x)
\end{equation*}
in the same manner as before. As for the conditional variance, one gets
\begin{align*}
    \Var[\widehat{f}^*(x, \theta) \mid \theta, \CJ, \boldsymbol{W}^\CJ] &= \BE[\widehat{f}^*(x, \theta)^2 \mid \theta, \CJ, \boldsymbol{W}^\CJ] - \BE[\widehat{f}^*(x, \theta) \mid \theta, \CJ, \boldsymbol{W}^\CJ]^2.
\end{align*}
As we saw before, the second term converges to $f(x)^2$. Hence we need that the first term converges to $f(x)^2$ in order for the variance to vanish. We compute
\begin{align*}
    &\BE[\widehat{f}^*(x, \theta)^2  \mid \theta, \CJ, \boldsymbol{W}^\CJ]\\
    &= \sum_{i = 1}^{n_\CI} \sum_{j = 1}^{n_\CI} \BE\Big[\frac{W_i^\CI W_j^\CI \1_{\{X_i, X_j \in L^*(x, \theta)\}}}{n_\CI^2 \BE[W_1^\CI]^2 \lambda(L^*(x, \theta))^2} \mid \theta, \CJ, \boldsymbol{W}^\CJ \Big] \\
    &= \frac{n_\CI \BE[(W_1^\CI)^2] \int_{L^*(x, \theta)} f(\widetilde{x})\mathrm{d}\widetilde{x} + n_\CI(n_\CI - 1)\BE[W_1^\CI W_2^\CI] \Big(\int_{L^*(x, \theta)} f(\widetilde{x})\mathrm{d}\widetilde{x} \Big)^2}{n_\CI^2 \BE[(W_1^\CI)]^2 \lambda(L^*(x, \theta))^2} \\
    &= \frac{1}{n_\CI \BE[W_1^\CI] \lambda(L^*(x, \theta))} \frac{\BE[(W_1^\CI)^2]}{\BE[W_1^\CI]} \frac{1}{\lambda(L^*(x, \theta))} \int_{L^*(x, \theta)} f(\widetilde{x})\mathrm{d}\widetilde{x} \\
    &\quad+ \frac{n_\CI - 1}{n_\CI} \frac{\BE[W_1^\CI W_2^\CI]}{\BE[W_1^\CI]^2} \Big(\frac{1}{\lambda(L(x, \theta))}\int_{L^*(x, \theta)}f(\widetilde{x})\mathrm{d}\widetilde{x} \Big)^2.
\end{align*}
The first term converges to $0 \cdot L_{2,1} f(x) = 0$ in probability, and the second term converges to $1 \cdot 1 \cdot f(x)^2 = f(x)^2$ in probability. We conclude the proof of $\widehat{f}^*(x, \theta) \overset{\BP}{\to} f(x)$ in the same way as for Proposition~\ref{prop:densityestimatorconsistent}. The exact same computation shows that $\widehat{m}^*(x, \theta) \overset{\BP}{\to} m(x)$, and the continuous mapping theorem completes the proof.
\end{proof}

\begin{proof}[Proof of Proposition~\ref{prop:regularIbootstrap}]
Let $p_j(\theta, \CJ, \boldsymbol{W}^\CJ) := \BP(X_1^j \in L^*(x, \theta) \mid \theta, \CJ, \boldsymbol{W}^\CJ). $ We claim that
\begin{equation*}
    \frac{N_L^{j,*}(x, \theta)}{n_\CI \BE[W_1^\CI]} - p_j(\theta, \CJ, \boldsymbol{W}^\CJ) \overset{\BP}{\to} 0.
\end{equation*}
The conditional mean of $\frac{N_L^{j, *}(x, \theta)}{n_\CI \BE[W_1^\CI]} - p_j(\theta, \CJ, \boldsymbol{W}^\CJ)$ given $\theta, \CJ, \boldsymbol{W}^\CJ$ is clearly zero, while the conditional variance satisfies 
\begin{align*}
    &\Var\Big[\frac{N_L^{j,*}(x, \theta)}{n_\CI \BE[W_1^\CI]} - p_j(\theta, \CJ, \boldsymbol{W}^\CJ) \mid \theta, \CJ, \boldsymbol{W}^\CJ \Big]\\
    &= \BE\Big[\frac{N_L^{j,*}(x, \theta)^2}{n_\CI^2 \BE[W_1^\CI]^2} \mid \theta, \CJ, \boldsymbol{W}^\CJ \Big] - p_j(\theta, \CJ, \boldsymbol{W}^\CJ)^2
\end{align*}
and
\begin{align*}
    &\BE\Big[\frac{N_L^{j,*}(x, \theta)^2}{n_\CI^2 \BE[W_1^\CI]^2} \mid \theta, \CJ, \boldsymbol{W}^\CJ \Big]\\
    &= \frac{1}{n_\CI^2 \BE[W_1^\CI]^2}\sum_{i = 1}^{n_\CI}\sum_{k = 1}^{n_\CI} \BE[W_i^\CI W_k^\CI]\BE[\1_{\{X_i^j, X_k^j \in (a_{n, j}^{x,*}, b_{n, j}^{x, *}]\}} \mid \theta, \CJ, \boldsymbol{W}^\CJ] \\
    &= \frac{n_\CI \BE[(W_1^\CI)^2]p_j(\theta, \CJ, \boldsymbol{W}^\CJ) + n_\CI(n_\CI - 1)\BE[W_1^\CI W_2^\CI] p_j(\theta, \CJ, \boldsymbol{W}^\CJ)^2}{n_\CI^2 \BE[W_1^\CI]^2} \\
    &= \frac{\BE[(W_1^\CI)^2]}{\BE[W_1^\CI]} \frac{1}{n_\CI \BE[W_1^\CI]} p_j(\theta, \CJ, \boldsymbol{W}^\CJ)+ \frac{n_\CI - 1}{n_\CI} \frac{\BE[W_1^\CI W_2^\CI]}{\BE[W_1^\CI]^2} p_j(\theta, \CJ, \boldsymbol{W}^\CJ)^2,
\end{align*}
so
\begin{align*}
    &\Var\Big[\frac{N_L^{j,*}(x, \theta)}{n_\CI \BE[W_1^\CI]} - p_j(\theta, \CJ, \boldsymbol{W}^\CJ) \mid \theta, \CJ, \boldsymbol{W}^\CJ \Big] = \frac{\BE[(W_1^\CI)^2]}{\BE[W_1^\CI]^2} \frac{p_j(\theta, \CJ, \boldsymbol{W}^\CJ)}{n_\CI \BE[W_1^\CI]} \\
    &+ p_j(\theta, \CJ, \boldsymbol{W}^\CJ)^2\Big(\frac{\BE[W_1^\CI W_2^\CI]}{\BE[(W_1^\CI)^2]} - 1 \Big) -\frac{1}{n_\CI} \frac{\BE[W_1^\CI W_2^\CI]}{\BE[(W_1^\CI)^2]}p_j(\theta, \CJ, \boldsymbol{W}^\CJ)^2
\end{align*}

which goes to zero by Assumption~\ref{asm:bootstrap} (iii) and the condition $n_\CI \BE[W_1^\CI] \to \infty$. Furthermore, by the fact that $p_j(\theta, \CJ, \boldsymbol{W}^\CJ) \leq 1$ and the tower property, we have $\frac{N_L^{j,*}(x, \theta)}{n_\CI \BE[W_1^\CI]} - p_j(\theta, \CJ, \boldsymbol{W}^\CJ) \overset{\BP}{\to} 0$. We conclude that $p_j(\theta, \CJ, \boldsymbol{W}^\CJ) \overset{\BP}{\to} 0$ if and only if $N_L^{j, *}(x, \theta)/n_\CI \BE[W_1^\CI] \overset{\BP}{\to} 0$. As before,
\begin{equation*}
    p_j(\theta, \CJ, \boldsymbol{W}^\CJ) = \int_{a_{n, j}^{x, *}}^{b_{n, j}^{x, *}} f^j(\widetilde{x})\mathrm{d}\widetilde{x} \in [\varepsilon(b_{n, j}^{x, *} - a_{n, j}^{x, *}), C(b_{n, j}^{x, *} - a_{n, j}^{x, *})]
\end{equation*}
which proves the proposition. 
\end{proof}

\begin{proof}[Proof of Proposition~\ref{prop:regularIIbootstrap}]
The proof is very similar to the one for Proposition~\ref{prop:regularII}. The key difference is in the way Lemma~\ref{lem:Devroyebound} is applied. Define
\begin{equation*}
    \widetilde{N}_L^*(x, \theta) := \sum_{i = 1}^{n_\CI} \1_{\{X_i \in L^*(x, \theta)\}}, \quad p(\theta, \CJ, \boldsymbol{W}^\CJ) := \BP(X_1 \in L^*(x, \theta) \mid \theta, \CJ, \boldsymbol{W}^\CJ).
\end{equation*}
For any $a, b \geq 0$, it holds that $a + b > \delta$ implies $a > \delta/2$ or $b > \delta/2$. From this observation, the triangle inequality and a union bound, we get for any $\delta > 0$ that
\begin{align*}
    \BP\Big(\Big|\frac{N_L^*(x, \theta)}{n_\CI \BE[W_1^\CI]} - p(\theta, \CJ, \boldsymbol{W}^\CJ) \Big|& > \delta \mid \theta, \CJ, \boldsymbol{W}^\CJ \Big) \leq \\
    & \quad \BP\Big(\Big|\frac{N_L^*(x, \theta)}{\BE[W_1^\CI]} - \widetilde{N}_L^*(x, \theta) \Big| > \frac{\delta n_\CI}{2} \mid \theta, \CJ, \boldsymbol{W}^\CJ \Big) \\
    &+ \BP\Big(\Big|\frac{\widetilde{N}_L^*(x, \theta)}{n_\CI} - p(\theta, \CJ, \boldsymbol{W}^\CJ) \Big| > \frac{\delta}{2} \mid \theta, \CJ, \boldsymbol{W}^\CJ \Big).
\end{align*}
Since, conditional on $\theta, \CJ, \boldsymbol{W}^\CJ$, $\widetilde{N}_L^*(x, \theta)/n_\CI$ is the empirical measure of $p(\theta, \CJ, \boldsymbol{W}^\CJ)$, we can again apply Lemma~\ref{lem:Devroyebound} to bound the second term by $cn_\CI^{4d}e^{-n_\CI \delta^2/2}$. As for the first term, we notice that
\begin{equation*}
    \frac{N_L^*(x, \theta)}{\BE[W_1^\CI]} - \widetilde{N}_L^*(x, \theta) = \sum_{i = 1}^{n_\CI} (\widetilde{W}_i^\CI - 1)\1_{\{X_i \in L^*(x, \theta)\}}.
\end{equation*}
This has mean zero, and the conditional second moment is
\begin{align*}
    &\BE\Big[\Big(\sum_{i = 1}^{n_\CI} (\widetilde{W}_i^\CI - 1)\1_{\{X_i \in L^*(x, \theta)\}} \Big)^2 \mid \theta, \CJ, \boldsymbol{W}^\CJ \Big] = n_\CI \BE[(\widetilde{W}_1^\CI - 1)^2]p(\theta, \CJ, \boldsymbol{W}^\CJ) \\
    &+ n_\CI(n_\CI - 1)\BE[(\widetilde{W}_1^\CI - 1)(\widetilde{W}_2^\CI - 1)]p(\theta, \CJ, \boldsymbol{W}^\CJ)^2
\end{align*}
Using the assumptions on the mixed moment, we see that the second term either negative, in which case we can simply disregard it, or it is $O(n_\CI)$. As for the first term,
\begin{equation*}
    \BE[(\widetilde{W}_1^\CI - 1)^2] = \frac{\BE[(W_1^\CI)^2]}{\BE[W_1^\CI]^2 } - 1 \leq \frac{1}{\BE[W_1^\CI]} \frac{\BE[(W_1^\CI)^2]}{\BE[W_1^\CI]}.
\end{equation*}
By a second order Markov inequality,
\begin{align*}
    &\BP\Big(\Big|\frac{N_L^*(x, \theta)}{\BE[W_1^\CI]} - \widetilde{N}_L^*(x, \theta) \Big| > \frac{\delta n_\CI}{2} \mid \theta, \CJ, \boldsymbol{W}^\CJ \Big)\\
    &\leq \frac{4}{\delta^2 n_\CI \BE[W_1^\CI]} \frac{\BE[(W_1^\CI)^2]}{\BE[W_1^\CI]} + \frac{C'}{\delta^2 n_\CI} \\
    &= \frac{1}{\delta^2 n_\CI \BE[W_1^\CI]}\Big(\frac{4\BE[(W_1^\CI)^2]}{\BE[W_1^\CI]} + C'\BE[W_1^\CI] \Big) \\
    &\leq \frac{C''}{\delta^2 n_\CI \BE[W_1^\CI]}
\end{align*}
for some deterministic constants $C'$ and $C''$ (for $n$ large enough) where we have used Assumption~\ref{asm:bootstrap} (iii). The bounds on both terms in the above union bound also hold unconditionally by the tower property. We conclude that with probability at least $1 - cn_\CI^{4d} e^{-n_\CI \delta^2/2} - C''/\delta^2n_\CI\BE[W_1^\CI]$, we have
\begin{equation*}
    p(\theta, \CJ, \boldsymbol{W}^\CJ) \geq \frac{N_L^*(x, \theta)}{n_\CI \BE[W_1^\CI]} - \delta.
\end{equation*}
Choose $\delta = k_n/2n_\CI\BE[W_1^\CI]$. Then
\begin{align*}
    \BP\Big(n_\CI \BE[W_1^\CI]\lambda(L^*(x, \theta)) \geq \frac{k_n}{2C}\Big) &\geq \BP\Big(p(\theta, \CJ, \boldsymbol{W}^\CJ) \geq \frac{k_n}{2n_\CI \BE[W_1^\CI]}\Big) \\
    &\geq \BP\Big(p(\theta, \CJ, \boldsymbol{W}^\CJ) \geq \frac{N_L^*(x, \theta)}{n_\CI \BE[W_1^\CI]} - \frac{k_n}{2n_\CI \BE[W_1^\CI]} \geq \frac{k_n}{2n_\CI \BE[W_1^\CI]}\Big) \\
    &= \BP\Big(p(\theta, \CJ, \boldsymbol{W}^\CJ) \geq \frac{N_L^*(x, \theta)}{n_\CI \BE[W_1^\CI]} - \frac{k_n}{2n_\CI \BE[W_1^\CI]}, N_L^*(x, \theta) \geq k_n\Big) \\
    &\geq a_n^*(x) - cn_\CI^{4d}e^{-k_n^2/8n_\CI\BE[W_1^\CI]^2} - \frac{C''}{k_n^2/4n_\CI\BE[W_1^\CI]} \to 1,
\end{align*}
so by Lemma \ref{lem:BCinfinity}, the proof is complete.
\end{proof}

\begin{proof}[Proof of Proposition~\ref{prop:densityestimatorstrongconvergencebootstrap}]
The proof is identical to the one for Theorem~\ref{thm:strongconsistencybootstrap}, just let $Y \equiv 1$. The only thing one has to note is that the condition for the moment generating function of $W_1^\CI Y_1$ collapses to
\begin{equation*}
    \frac{\kappa_{W_1^\CI}(t) - 1}{\BE[W_1^\CI]} \frac{1}{\lambda(L^*(x, \theta))} \int_{L^*(x, \theta)} f(\widetilde{x})\mathrm{d}\widetilde{x} = \frac{\kappa_{W_1^\CI}(t) - 1}{\BE[W_1^\CI]}f(x) + o(1) \quad \BP\text{-a.s.}
\end{equation*}
which is satisfied by the boundedness assumption provided in the proposition.
\end{proof}

\begin{proof}[Proof of Theorem~\ref{thm:strongconsistencybootstrap}]
We show strong convergence of $\widehat{m}^*(x, \theta)$. We have
\begin{equation*}
    \BE[\widehat{m}^*(x, \theta) \mid \theta, \CJ , \boldsymbol{W}^\CJ] = \frac{1}{\lambda(L^*(x, \theta))} \int_{L^*(x, \theta)} m(\widetilde{x})\mathrm{d}\widetilde{x} \to m(x) \quad \BP\text{-a.s.},
\end{equation*}
so we need only show $\BE[\widehat{m}^*(x, \theta) \mid \theta, \CJ , \boldsymbol{W}^\CJ] - \widehat{m}^*(x, \theta) \to 0$ a.s. We consider the cases (i) and (ii) separately, starting with (i). To simplify the calculations and ease the notation, we generalise to the setup of iid $Z_1, \dots, Z_n$ with finite fourth moment independent of the exchangeable weights $W_1, \dots, W_n$ satisfying $W_1 + \cdots + W_n = n$ and $\BE[W_1^4] < \infty$. We may write
\begin{equation*}
    \sum_{i = 1}^n (W_i Y_i - \BE[Y_1]) = \sum_{i = 1}^n W_i(Y_i - \BE[Y_1]) + \BE[Y_1]\sum_{i = 1}^n (W_i - 1)
\end{equation*}
where the latter sum equals zero by assumption. Conditional on the weights, the first sum is a sum of independent mean zero variables, and the fourth moment calculation thus simplifies to
\begin{align*}
    \BE\Big[\Big(\sum_{i = 1}^n (W_i Y_i - \BE[Y_1])\Big)^4 \Big] &= \BE\Big[\BE\Big[ \Big(\sum_{i = 1}^n W_i(Y_i - \BE[Y_1]) \Big)^4 \mid W_1, \dots, W_n \Big] \Big] \\
    &= \BE\Big[\BE[(Y_1 - \BE[Y_1])^4]\sum_{i = 1}^n W_i^4 + 3\BE[(Y_1 - \BE[Y_1])^2]^2\sum_{i \neq j}W_i^2W_j^2 \Big] \\
    &= n\BE[(Y_1 - \BE[Y_1])^4] \BE[W_1^4] + 3n(n - 1)\Var(Y_1)^2 \BE[W_1^2W_2^2].
\end{align*}
We can now apply this formula to $\widehat{m}^*(x, \theta) - \BE[\widehat{m}^*(x, \theta) \mid \theta, \CJ, \boldsymbol{W}^\CJ]$ by noting that
\begin{equation*}
    \widehat{m}^*(x, \theta) - \BE[\widehat{m}^*(x, \theta) \mid \theta, \CJ, \boldsymbol{W}^\CJ] = \frac{1}{n_\CI \lambda(L^*(x, \theta))}\sum_{i = 1}^{n_\CI}(\widetilde{W}_i^\CI \1_{\{X_i \in L^*(x, \theta)\}}Y_i - \BE[\1_{\{X_1 \in L^*(x, \theta)\}}Y_1 \mid \theta, \CJ, \boldsymbol{W}^\CJ]).
\end{equation*}
This yields
\begin{align*}
    \BE[(\widehat{m}^*(x, \theta) &- \BE[\widehat{m}^*(x, \theta) \mid \theta, \CJ, \boldsymbol{W}^\CJ])^4 \mid \theta, \CJ, \boldsymbol{W}^\CJ] = \\
    &\frac{\BE[(\widetilde{W}_1^\CI)^4] \BE[(\1_{\{X_1 \in L^*(x, \theta)\}}Y_1 - \BE[1_{\{X_1 \in L^*(x, \theta)\}}Y_1 \mid \theta, \CJ, \boldsymbol{W}^\CJ])^4 \mid \theta, \CJ, \boldsymbol{W}^\CJ]}{n_\CI^3 \lambda(L^*(x, \theta))^4} + \\
    &3\frac{n_\CI - 1}{n_\CI} \frac{\BE[(\widetilde{W}_1^\CI)^2(\widetilde{W}_2^\CI)^2] \Var(\1_{\{X_1 \in L^*(x, \theta)\}}Y_1 \mid \theta, \CJ, \boldsymbol{W}^\CJ)^2}{n_\CI^2 \lambda(L^*(x, \theta))^4} = \\
    &\frac{\BE[(\widetilde{W}_1^\CI)^4](\BE[Y^4 \mid X = x] + o_\BP(1))}{n_\CI^3 \lambda(L^*(x, \theta))^3} + 3\frac{n_\CI - 1}{n_\CI} \frac{\BE[(\widetilde{W}_1^\CI)^2(\widetilde{W}_2^\CI)^2](\BE[Y^2 \mid  X= x]^2 + o_\BP(1))}{n_\CI^2 \lambda(L^*(x, \theta))^2}
\end{align*}
which is summable by assumption. By a fourth order Markov inequality, we conclude that $\widehat{m}^*(x, \theta) - \BE[\widehat{m}^*(x, \theta) \mid \theta, \CJ, \boldsymbol{W}^\CJ] \to 0$ $\BP$-a.s. as desired. By repeating the proof for $Y \equiv 1$, it holds that $\widehat{f}^*(x, \theta) \to f(x)$ $\BP$-a.s. and the continuous mapping theorem completes the proof. \\

Now for the set of assumptions in (ii). As in the non-bootstrap case, we may combine a union bound and a Chernoff bound to get
\begin{align*}
    &\BP(|\widehat{m}^*(x, \theta)  - \BE[\widehat{m}^*(x, \theta) \mid \theta, \CJ , \boldsymbol{W}^\CJ]| > \delta \mid \theta, \CJ, \boldsymbol{W}^\CJ) \leq \\
    &e^{-\delta t n_\CI \BE[W_1^\CI] \lambda(L^*(x, \theta)) + \Lambda_{n_\CI}^*(t \mid \theta, \CJ, \boldsymbol{W}^\CJ)} + e^{-\delta t n_\CI \BE[W_1^\CI] \lambda(L^*(x, \theta)) + \Lambda_{n_\CI}^*(-t \mid \theta, \CJ, \boldsymbol{W}^\CJ)}
\end{align*}
where 
\begin{align*}
    &\Lambda_{n_\CI}(t \mid \theta, \CJ, \boldsymbol{W}^\CJ):= \log \BE\Big[\exp\Big(t\sum_{i = 1}^{n_\CI} (\1_{\{X_i \in L^*(x, \theta)\}}Y_i W_i^\CI - \BE[\1_{\{X_1 \in L^*(x, \theta)\}}Y_1 W_1^\CI \mid \theta, \CJ, \boldsymbol{W}^\CJ]) \Big) \mid \theta, \CJ, \boldsymbol{W}^\CJ \Big].
\end{align*}
Redoing the computations in the proof of Theorem~\ref{thm:strongconsistency} with the assumption on the moment generating function of $X_1^\CI Y_1 \mid X_1 = x$ yields that a.s.,
\begin{align*}
    &\Lambda_n(t \mid \theta, \CJ, \boldsymbol{W}^\CJ) = -tn_\CI \BE[W_1^\CI] \lambda(L^*(x, \theta))(m(x) + o(1)) \\
    &+ n_\CI \log\Big(1 - p(\theta, \CJ, \boldsymbol{W}^\CJ) + \BE\Big[\1_{\{X_1 \in L^*(x, \theta)\}}e^{tY_1 W_1^\CJ} \mid \theta, \CJ, \boldsymbol{W}^\CJ \Big] \Big) \\
    &= -tn_\CI \BE[W_1^\CI] \lambda(L^*(x, \theta))(m(x) + o(1)) \\
    &+ n_\CI\log\Big(1 + \lambda(L^*(x, \theta))\frac{1}{\lambda(L^*(x, \theta))}\int_{L^*(x, \theta)} (\BE[e^{tY_1W_1^\CI} \mid X_1 = \widetilde{x}] - 1)f(\widetilde{x})\mathrm{d}\widetilde{x} \Big) \\
    &= -tn_\CI \BE[W_1^\CI] \lambda(L^*(x, \theta))(m(x) + o(1)) \\
    &+ n_\CI\log\Big(1 + \lambda(L^*(x, \theta))\BE[W_1^\CI]\Big(\frac{\BE[e^{tW_1^\CI Y_1} \mid X_1 = x] - 1}{\BE[W_1^\CI]}f(x) + o(1) \Big) \Big),
\end{align*}
where we used the assumption on the moment generating function of $Y_1W_1^\CI$. Now apply a Taylor expansion to get
\begin{align*}
    \Lambda_n&(t \mid \theta, \CJ, \boldsymbol{W}^\CJ) = -tn_\CI \BE[W_1^\CI] \lambda(L^*(x, \theta))(m(x) + o(1)) \\
    &+ n_\CI \BE[W_1^\CI] \lambda(L^*(x, \theta))\Big(\frac{\BE[e^{tW_1^\CI Y_1} \mid X_1 = x] - 1}{\BE[W_1^\CI]}f(x) + o(1) \Big).
\end{align*}
Disregarding $n_\CI \BE[W_1^\CI]\lambda(L^*(x, \theta))$ and the $o(1)$ terms, the two exponents in the union bound are
\begin{align*}
    E_1(t) &:= -\delta t - tm(x) + \frac{\BE[e^{tW_1^\CI Y_1} \mid X_1 = x] - 1}{\BE[W_1^\CI]}f(x) \\
    E_2(t) &:= -\delta t + tm(x) + \frac{\BE[e^{-tW_1^\CI Y_1} \mid X_1 = x] - 1}{\BE[W_1^\CI]}f(x).
\end{align*}
We have $E_1(0) = E_2(0) = 0$ and by independence of $W_1^\CI$ and $Y_1$ given $X_1$,
\begin{equation*}
    E_1'(0) = -\delta - m(x) + \frac{\BE[W_1^\CI Y_1 \mid X_1 = x]}{\BE[W_1^\CI]}f(x) = -\delta - m(x) + m(x) = -\delta < 0
\end{equation*}
and similarly, $E_2'(0) = -\delta$. The proof for almost sure convergence of $\widehat{m}^*(x, \theta)$ is now complete by the same argument as in the proof of Theorem~\ref{thm:strongconsistency}. Since we also have $\widehat{f}^*(x, \theta) \to f(x)$ a.s. by Proposition~\ref{prop:densityestimatorstrongconvergencebootstrap}, an application of the continuous mapping theorem concludes the proof.
\end{proof}

\begin{proof}[Proof of Lemma~\ref{lem:multinomialMGF}]
We have
\begin{equation*}
    \kappa_{W_1^\CI}(t) = \frac{1}{n_\CI^{m_n}}(e^t + n_\CI - 1)^{m_n}.
\end{equation*}
so elementary calculations yield
\begin{equation*}
    \kappa_{W_1^\CI}''(t) = \frac{m_n}{n_\CI^{m_n}}e^t (e^t + n_\CI - 1)^{m_n - 2} \Big(e^t + n_\CI - 1 + (m_n - 1)e^t \Big).
\end{equation*}
Now apply a second order Taylor expansion with remainder term around zero. For some $\xi$ between $0$ and $t$, it holds that
\begin{equation*}
    \kappa_{W_1^\CI}(t) = 1 + \BE[W_1^\CI]t + \frac{t^2}{2} \kappa_{W_1^\CI}''(\xi)
\end{equation*}
so that
\begin{align*}
    \frac{\kappa_{W_1^\CI}(t) - 1}{\BE[W_1^\CI]} &= t + \frac{t^2}{2} \frac{1}{n_\CI^{m_n - 1}}e^\xi (e^\xi + n_\CI - 1)^{m_n - 2}(n_\CI - 1 + m_ne^\xi) \\
    &= t + \frac{t^2}{2} e^\xi \Big(1 + \frac{e^\xi - 1}{n_\CI} \Big)^{m_n - 2}\Big(1 + \frac{m_ne^\xi - 1}{n_\CI}\Big).
\end{align*}
In any case, $\xi < |t|$ and $m_n \leq n_\CI$, which implies
\begin{align*}
    \frac{\kappa_{W_1^\CI}(t) - 1}{\BE[W_1^\CI]} &\leq t + \frac{t^2}{2}e^{|t|} \Big(1 + \frac{e^{|t|} - 1}{n_\CI} \Big)^{n_\CI}\Big(1 + e^{|t|} - \frac{1}{n_\CI} \Big) \\
    &\to t + \frac{t^2}{2}e^{|t| + e^{|t|} - 1}(1 + e^{|t|})
\end{align*}
and so the requirement in Proposition~\ref{prop:densityestimatorstrongconvergencebootstrap} is satisfied. 
\end{proof}

\begin{proof}[Proof of Lemma~\ref{lem:regularstrongconsistencyI}]
Recall that we defined
\begin{equation*}
    N_L^{j, *}(x, \theta) = \sum_{i = 1}^{n_\CI} \1_{\{X_i^j \in (a_{n, j}^{x, *}, b_{n, j}^{x, *}]\}}W_i^\CI, \quad p_j(\theta, \CJ, \boldsymbol{W}^\CJ) = \BP(X_1^j \in (a_{n, j}^{x, *}, b_{n, j}^{x, *}] \mid \theta, \CJ, \boldsymbol{W}^\CJ).
\end{equation*}
We show that
\begin{equation*}
    \frac{N_L^{j, *}(x, \theta)}{n_\CI \BE[W_1^\CI]} - p_j(\theta, \CJ, \boldsymbol{W}^\CJ) \to 0 \quad \BP\text{-a.s.}
\end{equation*}
We have
\begin{equation*}
    \frac{N_L^{j, *}(x, \theta)}{n_\CI \BE[W_1^\CI]} - p_j(\theta, \CJ, \boldsymbol{W}^\CJ) = \frac{1}{n_\CI}\sum_{i = 1}^{n_\CI}\Big(\1_{\{X_i^j \in (a_{n, j}^{x, *}, b_{n, j}^{x, *}]\}}\widetilde{W}_i^\CI - p_j(\theta, \CJ, \boldsymbol{W}^\CJ) \Big)
\end{equation*}
and applying a fourth order Markov inequality yields for any $\delta > 0$ that
\begin{align*}
    \BP\Big(&\Big|\frac{N_L^{j, *}(x, \theta)}{n_\CI \BE[W_1^\CI]} - p_j(\theta, \CJ, \boldsymbol{W}^\CJ) \Big| > \delta \mid \theta, \CJ, \boldsymbol{W}^\CJ \Big) \leq \\
    &\frac{\BE\Big[\Big(\sum_{i = 1}^{n_\CI} \Big(\1_{\{X_i^j \in (a_{n, j}^{x, *}, b_{n, j}^{x, *}]\}} \widetilde{W}_i^\CI - p_j(\theta, \CJ, \boldsymbol{W}^\CJ)\Big) \Big)^4 \mid \theta, \CJ, \boldsymbol{W}^\CJ \Big]}{\delta^4 n_\CI^4}.
\end{align*}
The terms in the sum are conditionally iid given $\theta, \CJ, \boldsymbol{W}^\CJ$, so we have
\begin{align*}
    \BE&\Big[\Big(\sum_{i = 1}^{n_\CI} \Big(\1_{\{X_i^j \in (a_{n, j}^{x, *}, b_{n, j}^{x, *}]\}} \widetilde{W}_i^\CI - p_j(\theta, \CJ, \boldsymbol{W}^\CJ)\Big) \Big)^4 \mid \theta, \CJ, \boldsymbol{W}^\CJ \Big] = \\
    &n_\CI \BE\Big[\Big(\1_{\{X_1^j \in (a_{n, j}^{x, *}, b_{n, j}^{x, *}]\}} \widetilde{W}_1^\CI - p_j(\theta, \CJ, \boldsymbol{W}^\CJ)\Big)^4\Big]\\
    &\quad+ 3n_\CI(n_\CI - 1)\Var\Big[\1_{X_1^j \in (a_{n, j}^{x, *}, b_{n, j}^{x, *}]}\widetilde{W}_1^\CI \mid \theta, \CJ, \boldsymbol{W}^\CJ \Big]^2.
\end{align*}
We upper bound the fourth moment by first applying Minkowski's inequality to obtain
\begin{align*}
    \BE&\Big[\Big(\1_{\{X_1^j \in (a_{n, j}^{x, *}, b_{n, j}^{x, *}]\}} \widetilde{W}_1^\CI - p_j(\theta, \CJ, \boldsymbol{W}^\CJ)\Big)^4\Big]^{1/4} \leq \\
    &\BE\Big[\1_{\{X_1^j \in (a_{n, j}^{x, *}, b_{n, j}^{x, *}]\}}(\widetilde{W}_1^\CI)^4 \mid \theta, \CJ, \boldsymbol{W}^\CJ\Big]^{1/4} + p_j(\theta, \CJ, \boldsymbol{W}^\CJ) \leq \\
    &p_j(\theta, \CJ, \boldsymbol{W}^\CJ)^{1/4} \BE[(\widetilde{W}_1^\CI)^4]^{1/4} + p_j(\theta, \CJ, \boldsymbol{W}^\CJ).
\end{align*}
Since $p_j(\theta, \CJ, \boldsymbol{W}^\CJ) \leq 1$, $p_j(\theta, \CJ, \boldsymbol{W}^\CJ) \leq p_j(\theta, \CJ, \boldsymbol{W}^\CJ)^{1/4}$, and since $L^p$-norms are increasing in $p$, $\BE[(\widetilde{W}_1^\CI)^4] \geq \BE[\widetilde{W}_1^\CI]^4 = 1$, so
\begin{align*}
    \BE&\Big[\Big(\1_{\{X_1^j \in (a_{n, j}^{x, *}, b_{n, j}^{x, *}]\}} \widetilde{W}_1^\CI - p_j(\theta, \CJ, \boldsymbol{W}^\CJ)\Big)^4\Big] \leq 16 p_j(\theta, \CJ, \boldsymbol{W}^\CJ)\BE[(\widetilde{W}_1^\CI)^4].
\end{align*}
As for the variance term,
\begin{align*}
    \Var\Big[\1_{\{X_1^j \in (a_{n, j}^{x, *}, b_{n, j}^{x, *}]\}}\widetilde{W}_1^\CI \mid \theta, \CJ, \boldsymbol{W}^\CJ \Big] &= p_j(\theta, \CJ, \boldsymbol{W}^\CJ)\BE[(\widetilde{W}_1^\CI)^2]\\ &\quad- p_j(\theta, \CJ, \boldsymbol{W}^\CJ)^2 \BE[W_1^\CI]^2
\end{align*}
implying
\begin{align*}
    \BP\Big(&\Big|\frac{N_L^{j, *}(x, \theta)}{n_\CI \BE[W_1^\CI]} - p_j(\theta, \CJ, \boldsymbol{W}^\CJ) \Big| > \delta \mid \theta, \CJ, \boldsymbol{W}^\CJ \Big) \leq \\
    &\frac{16p_j(\theta, \CJ, \boldsymbol{W}^\CJ) \BE[(\widetilde{W}_1^\CI)^4]}{\delta^4 n_\CI^3} + \frac{3p_j(\theta, \CJ, \boldsymbol{W}^\CJ)^2 \BE[(\widetilde{W}_1^\CI)^2]^2}{\delta^4 n_\CI^2}.
\end{align*}
By applying the bound $p_j(\theta, \CJ, \boldsymbol{W}^\CJ) \leq 1$, we may remove the conditioning by the tower property and obtain the desired statement by the summability assumptions in the lemma. The rest of the proof is the same as the final arguments in the proof of Proposition~\ref{prop:regularIbootstrap}. 
\end{proof}

\begin{proof}[Proof of Lemma~\ref{lem:regularstrongconsistencyII}]
To see that (II$^*$) holds almost surely, we use the same notation and setup as in the proof of Proposition~\ref{prop:regularIIbootstrap}. From that proof, we have for any $\delta > 0$ that
\begin{align*}
    &\BP\Big(\Big|\frac{N_L^*(x, \theta)}{n_\CI \BE[W_1^\CI]} - p(\theta, \CJ, \boldsymbol{W}^\CJ) \Big| > \delta \mid \theta, \CJ, \boldsymbol{W}^\CJ \Big) \leq \\
    &\BP\Big(\Big|\sum_{i = 1}^{n_\CI} (\widetilde{W}_i^\CI - 1)\1_{\{X_i \in L^*(x, \theta)\}} \Big| > \frac{\delta n_\CI}{2} \mid \theta, \CJ, \boldsymbol{W}^\CJ \Big) + cn_\CI^{4d}e^{-n_\CI \delta^2/2}.
\end{align*}
We now bound the first term by a fourth order Markov inequality instead of a second order one. By very similar calculations as those above, we get
\begin{align*}
    \BP\Big(&\Big|\sum_{i = 1}^{n_\CI} (\widetilde{W}_i^\CI - 1)\1_{\{X_i \in L^*(x, \theta)\}} \Big| > \frac{\delta n_\CI}{2} \mid \theta, \CJ, \boldsymbol{W}^\CJ \Big) \leq \\
    & \frac{n_\CI p(\theta, \CJ, \boldsymbol{W}^\CJ) \BE[(\widetilde{W}_1^\CI - 1)^4] + 3n_\CI(n_\CI - 1)\BE[(\widetilde{W}_1^\CI - 1)^2]^2 p(\theta, \CJ, \boldsymbol{W}^\CJ)^2}{\delta^4 n_\CI^4/16} = \\
    & 16 p(\theta, \CJ, \boldsymbol{W}^\CJ)\frac{\BE[(\widetilde{W}_1^\CI - 1)^4]}{\delta^4 n_\CI^3} + 48 p(\theta, \CJ, \boldsymbol{W}^\CJ) \frac{(n_\CI - 1)\BE[(\widetilde{W}_1^\CI - 1)^2]^2}{\delta^4 n_\CI^3},
\end{align*}
and so by the tower property,
\begin{align*}
    \BP\Big(\Big|\frac{N_L^*(x, \theta)}{n_\CI \BE[W_1^\CI]} - p(\theta, \CJ, \boldsymbol{W}^\CJ) \Big| \leq \delta \Big) &\geq 1 - cn_\CI^{4d}e^{-n_\CI \delta^2/2} - 16\frac{\BE[(\widetilde{W}_1^\CI - 1)^4]}{\delta^4 n_\CI^3} \\ &-48 \frac{\BE[(\widetilde{W}_1^\CI - 1)^2]^2}{\delta^4 n_\CI^2}.
\end{align*}
Again, let $\delta = k_n/(2n_\CI \BE[W_1^\CI])$ and copy the calculations in the proof of Proposition \ref{prop:regularIIbootstrap} to get
\begin{align*}
    \BP\Big(n_\CI \BE[W_1^\CI]\lambda(L^*(x, \theta)) \geq \frac{k_n}{2C} \Big) &\ge a_n^*(x) - cn_\CI^{4d}e^{-k_n^2/8n_\CI \BE[W_1^\CI]^2}\\
    &- 256\frac{\BE[(W_1^\CI - \BE[W_1^\CI])^4]}{k_n^4}n_\CI -768 \frac{\BE[(W_1^\CI - \BE[W_1^\CI])^2]^2}{k_n^4}n_\CI^2. 
\end{align*}
Now simply apply the summability assumptions in the lemma along with Lemma~\ref{lem:BCinfinity} to conclude the proof.
\end{proof}

\begin{proof}[Proof of Theorem~\ref{thm:regularstrongconsistency}]
By the two lemmata above the theorem, we have that conditions (I$^*$) and (II$^*$) of Assumption~\ref{asm:bootstrap} both hold almost surely. It only remains to consider the two sets of assumptions (i) and (ii). From the proof of Lemma~\ref{lem:regularstrongconsistencyII}, we have that
\begin{align*}
    \BP\Big(n_\CI \BE[W_1^\CI]\lambda(L^*(x, \theta)) \geq \frac{k_n}{2C} \Big) &\geq a_n^*(x) - cn_\CI^{4d}e^{-k_n^2/8n_\CI \BE[W_1^\CI]^2}\\
    &- 256\frac{\BE[(W_1^\CI - \BE[W_1^\CI])^4]}{k_n^4}n_\CI -768 \frac{\BE[(W_1^\CI - \BE[W_1^\CI])^2]^2}{k_n^4}n_\CI^2. 
\end{align*}
To ease notation, define the set
\begin{equation*}
    A_n := \Big\{n_\CI \BE[W_1^\CI]\lambda(L^*(x, \theta)) \geq \frac{k_n}{2C} \Big\}.
\end{equation*}
Now suppose the requirements of (i) hold. Then 
\begin{align*}
    \frac{\BE[(\widetilde{W}_1^\CI)^4]}{n_\CI^3 \lambda(L^*(x, \theta))^3} + \frac{\BE[(\widetilde{W}_1^\CI)^2(\widetilde{W}_2^\CI)^2]}{n_\CI^2 \lambda(L^*(x, \theta))^2} &\leq \1_{A_n(x)}\left(\frac{8C^3 \BE[(W_1^\CI)^4]}{\BE[W_1^\CI]k_n^3} + \frac{4C^2 \BE[(W_1^\CI)^2(W_2^\CI)^2]}{k_n^2 \BE[W_1^\CI]^2}\right) \\
    &+ \1_{A_n(x)^c}\left(\frac{\BE[(\widetilde{W}_1^\CI)^4]}{n_\CI^3 \lambda(L^*(x, \theta))^3} + \frac{\BE[(\widetilde{W}_1^\CI)^2(\widetilde{W}_2^\CI)^2]}{n_\CI^2 \lambda(L^*(x, \theta))^2} \right).
\end{align*}
Since $\1_{A_n(x)}$ is eventually one with probability one, the summability requirement of Theorem \ref{thm:strongconsistencybootstrap} (i) holds, and the bootstrap tree is strongly consistent. Now for (ii). For all $\delta > 0$,
\begin{equation*}
    \sum_{n_\CI = 1}^\infty \exp\Big(-\delta n_\CI \BE[W_1^\CI]\lambda(L^*(x, \theta)) \Big) < \infty \quad \BP\text{-a.s.}
\end{equation*}

then
\begin{equation*}
    \exp\Big(-\delta n_\CI \BE[W_1^\CI]\lambda(L^*(x, \theta)) \Big) \leq \exp\Big(-\frac{\delta k_n}{2C} \Big) \1_{A_n} + \1_{A_n^c} \leq \exp\Big(-\frac{\delta k_n}{2C}\Big) + \1_{A_n^c}. 
\end{equation*}
Again by the summability requirements from the two previous lemmata and the additional summability requirement for $k_n$, the proof is complete by Theorem~\ref{thm:strongconsistencybootstrap}.
\end{proof}

\section{Discussion and further work}\label{sec:disc}

In this paper, we have studied consistency of honest trees and random forests for both general splitting criteria and some concrete splitting schemes found in the literature. The analysis focuses on individual trees using elementary and transparent methods. Since random forest predictions work by first computing the estimator of interest for each tree and then aggregating, this opens up many possibilities for analysing more complicated estimators. Examples include quantiles (see e.g. \cite{Meinshausen} and \cite{Zhangetal} for studies in quantile regression forests and their applications in computing prediction intervals) and survival forests (see \cite{RSF} for a description of this algorithm and \cite{RSFcon} for a proof of consistency for discrete variables). Some extensions are of interest, in particular conditions for uniform consistency and $L^p$-consistency of honest bootstrap trees. Many other avenues remain to be explored in the above presented framework. First of all, it is of interest to analyse more concrete splitting schemes presented in the literature, a noteworthy example being the quantile splitting scheme as presented in e.g. \cite{RFSubsampling}. This scheme is data-adaptive while still being sufficiently simple to allow for mathematical analysis, making it relevant to study in our setup. One of the schemes presented in this work, namely the uniform splitting scheme, is still of interest to study further, more precisely, to find under which conditions (if any) one can conclude strong or uniform consistency. An interesting and practically relevant route to take from here is to consider asymptotic normality, taking inspiration from the classical proofs for kernel estimators as presented in e.g. \cite{Ghosh}. This avenue has been partially explored by the authors, although this obvious starting point does not seem to yield fully satisfactory conclusions at the time of writing. Many works have already investigated asymptotic normality for random forests, see e.g. \cite{WagerAthey} and \cite{PengColemanMentch}, but none of them have, as far as we know, attempted to tackle the problem on an individual tree basis. 

%

\begin{funding}
The research of both authors was supported by the Carlsberg Foundation, grant CF23-1096. The research of both authors was carried out within the project framework InterAct.
\end{funding}



\newpage
\bibliographystyle{imsart-number} 
\bibliography{main.bib}       


\newpage

\appendix

\section{Additional results}\label{sec:moreresults}

\subsection{A concentration inequality}

This appendix contains additional helpful results not included in the main text. Let $(\Omega, \CF, \BP)$ denote a fixed probability space. \\

The following lemma is used to show that the volume of a leaf decreases slower than one over the number of $\CI$-samples for an honest tree. We let $\mathcal{A}$ denote the set of half-open rectangles in $[0, 1]^d$,
\begin{equation*}
    \mathcal{A} = \{(a_1, b_1] \times \cdots (a_d, b_d] : a_i, b_i \in [0, 1], a_i < b_i\}.
\end{equation*}
We remind the reader of the following quantities. For a class of measurable sets $\mathcal{C}$, $s(\mathcal{C}, n)$ is the \emph{shattering coefficient} of the class $\mathcal{C}$, that is, the maximal number of subsets that can be picked out by $\mathcal{C}$ of a set of size $n$. To be precise, if 
\begin{equation*}
    N_\mathcal{C}(x_1, ..., x_n) = \# \{\{x_1, ..., x_n\} \cap C : C \in \mathcal{C}\},
\end{equation*}
then $s(\mathcal{C}, n) := \max_{x_1, ..., x_n} N_\mathcal{C}(x_1, ..., x_n)$. The largest number $n$ such that $s(\mathcal{C}, n) = 2^n$ is called the \emph{VC-dimension} of $\mathcal{C}$ and is denoted by $\mathcal{V}(\mathcal{C})$. If $\mathcal{V}(\mathcal{C}) < \infty$, $\mathcal{C}$ is called a \emph{VC-class}. We refer the reader to chapters 12 and 13 of~\cite{Pattern} for more background. 

\begin{lem}\label{lem:Devroyebound}
Let $X_1, ..., X_n$ denote an iid sample in $[0, 1]^d$ with $d > 1$ and $\BP^{(n)}$ the corresponding empirical measure. For the class of sets $\mathcal{A}$, it holds for $\varepsilon \in (0, 1]$ that
\begin{equation*}
    \BP\Big(\sup_{A \in \mathcal{A}}|\BP^{(n)}(A) - \BP(X_1 \in A)| > \varepsilon\Big) \leq c n^{4d} e^{-2n\varepsilon^2}
\end{equation*}
for some constant $c \leq e^{4\varepsilon(1 + \varepsilon)}$. If $d = 1$, the result holds with the right hand side replaced by $c(n^4 + 1)e^{-2n\varepsilon^2}$.
\end{lem}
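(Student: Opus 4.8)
The plan is to read this as the \emph{sharp} Vapnik--Chervonenkis uniform deviation inequality applied to the class $\mathcal{A}$ of half-open rectangles, and then to control its shattering coefficient. So I would proceed in two steps: invoke a general concentration bound valid for an arbitrary class of measurable sets, then specialize by bounding $s(\mathcal{A},m)$.

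\textbf{Step 1: the general inequality.} I would quote the refined Vapnik--Chervonenkis inequality in the form available in Chapter~12 of~\cite{Pattern} (due to Devroye): for any class $\mathcal{C}$ of measurable subsets of $\BR^d$ and any $\varepsilon\in(0,1]$,
$$\BP\Big(\sup_{C\in\mathcal{C}}\big|\BP^{(n)}(C)-\BP(X_1\in C)\big|>\varepsilon\Big)\le c\,s(\mathcal{C},n^2)\,e^{-2n\varepsilon^2},\qquad c\le e^{4\varepsilon(1+\varepsilon)}.$$
The salient feature, compared with the more familiar bound carrying $e^{-n\varepsilon^2/32}$, is that the exponent $2n\varepsilon^2$ is optimal --- it already appears in Hoeffding's bound $2e^{-2n\varepsilon^2}$ for a single set --- at the cost of evaluating the combinatorial factor at $n^2$ rather than at $2n$. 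Its proof is a symmetrization against a ghost sample followed by Hoeffding's inequality for sampling \emph{without} replacement from the pooled sample of size $2n$, which is exactly what yields the constant $2$ in the exponent; should a self-contained treatment be desired, reproducing this symmetrization is the only substantive work.

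\textbf{Step 2: shattering coefficient of rectangles.} It remains to insert a polynomial bound on $s(\mathcal{A},m)$. Given $m$ points, in each coordinate $j$ the one-dimensional half-open intervals $\{(a,b]\}$ can only cut out contiguous blocks of the sorted $j$-th coordinates, hence at most $\binom{m}{2}+m+1=\tfrac{m^2+m+2}{2}$ distinct traces; since a rectangle is the intersection of one such slab per coordinate,
$$s(\mathcal{A},m)\le\Big(\frac{m^2+m+2}{2}\Big)^{\!d}.$$
(In particular $\mathcal{V}(\mathcal{A})=2d$.) For $m\ge2$ the right-hand side is at most $m^{2d}$, so with $m=n^2$ we get $s(\mathcal{A},n^2)\le n^{4d}$ for $n\ge2$; at $n=1$ the asserted bound $c\,n^{4d}e^{-2n\varepsilon^2}=e^{4\varepsilon+2\varepsilon^2}\ge1$ is vacuous, so it holds for all $n$. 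Plugging this into Step~1 gives the estimate for $d>1$. For $d=1$ one keeps the exact count $s(\mathcal{A},n^2)\le\tfrac{n^4+n^2+2}{2}\le n^4+1$ (the $+1$ being what makes the $n=1$ instance correct, where the supremum equals $1$ almost surely), which reproduces the stated refinement.

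\textbf{Main obstacle.} There is essentially no analytic difficulty: the combinatorial step is routine and everything hinges on quoting the \emph{sharp} exponent-$2$ form of the VC inequality rather than one of its coarser relatives. The only genuine care needed is aligning the constant $c\le e^{4\varepsilon(1+\varepsilon)}$ and the $n^2$-versus-$2n$ bookkeeping with the precise theorem in~\cite{Pattern}, and handling the small-$n$ corner cases --- which is precisely why the $d=1$ bound carries the extra $+1$.
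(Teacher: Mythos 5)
Your proof is correct and takes essentially the same route as the paper: both rest on Devroye's sharp VC uniform-deviation inequality (you cite it via~\cite{Pattern}, the paper via the original~\cite{Devroye1982}) and then bound $s(\mathcal{A},n^2)$ polynomially. The only cosmetic difference is that the paper obtains $s(\mathcal{A},n^2)\le n^{4d}$ by citing $\mathcal{V}(\mathcal{A})=2d$ together with a general Sauer--Shelah-type bound (Theorems 13.8 and 13.3 of~\cite{Pattern}), whereas you compute the shattering coefficient directly from the geometry of axis-aligned slabs — a slightly more self-contained but equivalent calculation.
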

\begin{proof}
In~\cite{Devroye1982} it is shown that
\begin{equation*}
    \BP\Big(\sup_{A \in \mathcal{A}}|\BP^{(n)}(A) - \BP(X_1 \in A)| > \varepsilon\Big) \leq c s(\mathcal{A}, n^2) e^{-2n\varepsilon^2}
\end{equation*}
with $c$ as specified above. In Theorem 13.8 of~\cite{Pattern}, it is shown that $\mathcal{V}(\mathcal{A}) = 2d$, and if $d > 1$, Theorem 13.3 applies to show that $s(\mathcal{A}, n^2) \leq n^{4d}$. The same theorem covers the case $d = 1$. This completes the proof. 
\end{proof}

\subsection{General convergence results and conditioning}

The following result is valuable in showing convergence of a random sequence to infinity, both in probability and almost surely. In the following results, the notation for a sequence of subsets $\{A_n\} \subseteq \CF$
\begin{equation*}
    \{A_n \text{ i.o.}\} := \bigcap_{N = 1}^\infty \bigcup_{n = N}^\infty A_n \quad \text{and} \quad \{A_n \text{ evt.}\} := \bigcup_{N = 1}^\infty \bigcap_{n = N}^\infty A_n
\end{equation*}
is extensively used. Note that $\{A_n \text{ i.o.}\}^c = \{A_n^c \text{ evt.}\}$. 

\begin{lem}\label{lem:BCinfinity}
Let $\{X_n\}_n$ be a sequence of random variables satisfying $\BP(X_n > a_n) \geq b_n$ where $a_n \to \infty$.
\begin{enumerate}
    \item[(1)] If $b_n \to 1$, we have $X_n \overset{\BP}{\to} \infty$.
    \item[(2)] If furthermore $\sum_{n = 1}^\infty(1 - b_n) < \infty$, we have $X_n \to \infty$ $\BP$-a.s.
\end{enumerate}
\end{lem}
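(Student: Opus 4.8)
The statement is an elementary consequence of the definition of the two modes of convergence together with the first Borel--Cantelli lemma, so the plan is to unwind definitions rather than to do any real estimation. Recall that $X_n \to \infty$ in probability means $\BP(X_n \le M) \to 0$ for every fixed $M > 0$, and $X_n \to \infty$ almost surely means that $\BP$-a.s.\ for every $M > 0$ there is an index beyond which $X_n > M$.

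For part (1), I would fix $M > 0$. Since $a_n \to \infty$, there is $N$ with $a_n > M$ for all $n \ge N$, and then the inclusion of events $\{X_n > a_n\} \subseteq \{X_n > M\}$ gives $\BP(X_n > M) \ge \BP(X_n > a_n) \ge b_n$ for $n \ge N$. Hence $\BP(X_n \le M) \le 1 - b_n \to 0$, which is exactly $X_n \overset{\BP}{\to} \infty$ since $M$ was arbitrary.

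For part (2), set $A_n := \{X_n \le a_n\}$, so that $\BP(A_n) \le 1 - b_n$ and $\sum_n \BP(A_n) \le \sum_n (1 - b_n) < \infty$ by hypothesis. The first Borel--Cantelli lemma then gives $\BP(A_n \text{ i.o.}) = 0$, i.e.\ on the almost sure event $\{A_n^c \text{ evt.}\}$ there is a (random) $N(\omega)$ with $X_n(\omega) > a_n$ for all $n \ge N(\omega)$. Combining this with $a_n \to \infty$ — for each $M$ choose $N'$ with $a_n > M$ for $n \ge N'$, and then $X_n(\omega) > M$ for $n \ge \max\{N(\omega), N'\}$ — yields $X_n(\omega) \to \infty$, completing the proof.

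There is no genuine obstacle here: the only point to be slightly careful about is that in part (2) one must first pass to eventual domination by $a_n$ via Borel--Cantelli and only then invoke $a_n \to \infty$, rather than trying to control $\BP(X_n \le M)$ directly, since the bound $\BP(X_n \le M) \le 1 - b_n$ need not be summable for a fixed $M$ (the summability is only available at the moving threshold $a_n$).
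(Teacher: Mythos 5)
Your proof is correct, and part (1) is identical to the paper's. For part (2) you take a mildly different route: you apply Borel--Cantelli once, to the events $A_n = \{X_n \le a_n\}$, obtaining that $X_n > a_n$ eventually on an almost sure set, and then pass to $X_n \to \infty$ via $a_n \to \infty$. The paper instead fixes a level $M$, notes that $\BP(X_n \le M) \le 1 - b_n$ for all $n$ past the finite index $N(M)$ at which $a_n \ge M$, concludes $\sum_n \BP(X_n \le M) < \infty$, and invokes Borel--Cantelli at that fixed level (intersecting over $M$ afterwards). The two arguments are equivalent in substance; yours isolates a single random index $N(\omega)$ beyond which $X_n$ dominates the moving threshold $a_n$, whereas the paper works level by level.

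Your closing caution, however, is misplaced: the bound $\BP(X_n \le M) \le 1 - b_n$ \emph{is} summable for each fixed $M$. It holds for all $n \ge N(M)$, and the finitely many earlier terms contribute at most $N(M) - 1$ to the sum, so
\begin{equation*}
\sum_{n = 1}^{\infty} \BP(X_n \le M) \;\le\; N(M) - 1 + \sum_{n \ge N(M)} (1 - b_n) \;<\; \infty ,
\end{equation*}
which is exactly the computation the paper carries out. So there is no obstacle to controlling $\BP(X_n \le M)$ directly, and the claim that one ``must first pass to eventual domination by $a_n$'' is not accurate — both orders of argument work.
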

\begin{proof}
As for (1), we have to show that for every $c > 0$, $\BP(X_n \leq c) \to 0$ as $n \to \infty$. Let $c > 0$ and $\varepsilon > 0$ be given and choose $N$ such that for $n \geq N$, it holds that $a_n > c$ and $b_n > 1 - \varepsilon$. Then
\begin{equation*}
    \BP(X_n \leq c) = 1 - \BP(X_n > c) \leq 1 - \BP(X_n > a_n) \leq 1 - b_n < \varepsilon
\end{equation*}
as desired. As for (2), note that
\begin{equation*}
    \BP\Big(\lim_{n \to \infty} X_n = \infty \Big) = \BP\Big(\bigcap_{M \in \BN} \{X_n > M \text{ evt.}\} \Big),
\end{equation*}
so it suffices to show that for every $M \in \BN$, we have $\BP(X_n > M \text{ evt.}) = 1$, which is equivalent to $\BP(X_n \leq M \text{ i.o.}) = 0$. Choose $N \in \BN$ so large such that $a_n \geq M$ for all $n \geq N$. Then for $n \geq N$, it holds that
\begin{equation*}
    \BP(X_n \leq M) = 1 - \BP(X_n > M) \leq 1 - \BP(X_n > a_n) \leq 1 - b_n.
\end{equation*}
Thus,
\begin{equation*}
    \sum_{n = 1}^\infty \BP(X_n \leq M) \leq N - 1 + \sum_{n = N}^\infty (1 - b_n) < \infty,
\end{equation*}
and the proof is complete by the Borel--Cantelli lemma.
\end{proof}

Throughout the article, almost sure convergence is typically established by first conditioning on a fixed $\sigma$-algebra. The following lemma shows that the Borel--Cantelli lemma applies as one would expect in this context.

\begin{lem}[\textbf{Conditional Borel--Cantelli lemma}]\label{lem:conditionalBC}
Let $\{A_n\} \subseteq \CF$ be a sequence of sets and $\CG$ a fixed sub-$\sigma$-algebra of $\CF$. If
\begin{equation*}
    \sum_{n = 1}^\infty \BP(A_n \mid \CG) < \infty \quad \BP\text{-a.s.},
\end{equation*}
then $\BP(A_n \operatorname{ i.o.} \mid \CG) = 0$ $\BP$-a.s. and in particular, $\BP(A_n \operatorname{ i.o.}) = 0$. 
\end{lem}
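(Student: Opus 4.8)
The statement is the conditional analogue of the first Borel--Cantelli lemma, and the proof is the conditional analogue of the usual one: replace ordinary probabilities by $\BP(\,\cdot\mid\CG)$ and replace each appeal to continuity/countable subadditivity of a measure by its conditional-expectation version (conditional monotone/dominated convergence and the fact that $\BE[\,\cdot\mid\CG]$ respects countable sums of nonnegative terms). The one genuinely measure-theoretic point to keep in mind is that each conditional probability is only defined up to a $\BP$-null set, so the identities below should first be read as holding $\BP$-a.s.\ for each fixed index, and then combined over a countable index set (a countable union of null sets is null).

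\textbf{Step 1 (reduce to tails).} Recall $\{A_n \text{ i.o.}\} = \bigcap_{N\ge 1}\bigcup_{n\ge N}A_n$, and the sets $B_N := \bigcup_{n\ge N}A_n$ decrease to $\{A_n \text{ i.o.}\}$ as $N\to\infty$. By conditional dominated convergence (the dominating function being the constant $1$), $\BP(B_N\mid\CG)\to\BP(A_n \text{ i.o.}\mid\CG)$ $\BP$-a.s. Hence it suffices to show $\BP(B_N\mid\CG)\to 0$ $\BP$-a.s.

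\textbf{Step 2 (conditional subadditivity).} For each fixed $N$, $B_N=\bigcup_{n\ge N}A_n$, so $\1_{B_N}\le\sum_{n\ge N}\1_{A_n}$, and taking $\BE[\,\cdot\mid\CG]$ (using conditional monotone convergence to interchange the sum of nonnegative terms with the conditional expectation) gives
\begin{equation*}
    \BP(B_N\mid\CG)\ \le\ \sum_{n\ge N}\BP(A_n\mid\CG)\qquad \BP\text{-a.s.}
\end{equation*}
Since this holds a.s.\ for every $N\in\BN$, it holds simultaneously for all $N$ outside a single $\BP$-null set.

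\textbf{Step 3 (conclude).} By hypothesis $\sum_{n=1}^\infty\BP(A_n\mid\CG)<\infty$ $\BP$-a.s., so on that full-probability event the tails $\sum_{n\ge N}\BP(A_n\mid\CG)\to 0$ as $N\to\infty$. Combining with Steps 1--2, $\BP(A_n \text{ i.o.}\mid\CG)=\lim_{N\to\infty}\BP(B_N\mid\CG)=0$ $\BP$-a.s. Finally, taking expectations and using the tower property, $\BP(A_n \text{ i.o.})=\BE\big[\BP(A_n \text{ i.o.}\mid\CG)\big]=0$, which gives the ``in particular'' clause.

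\textbf{Main obstacle.} There is no deep difficulty here; the only thing to be careful about is the bookkeeping of null sets — each conditional-probability identity is an a.s.\ statement, and one must note that only countably many of them are used (indexed by $N$, and implicitly by $n$), so their exceptional sets can be unioned into a single $\BP$-null set before passing to the limit. An alternative, essentially equivalent route is to set $Z:=\sum_{n=1}^\infty\1_{A_n}$, observe $\BE[Z\mid\CG]=\sum_n\BP(A_n\mid\CG)<\infty$ $\BP$-a.s.\ by conditional monotone convergence, and then argue that $\BE[Z\1_{\{Z=\infty\}}\mid\CG]\le\BE[Z\mid\CG]<\infty$ forces $\BP(Z=\infty\mid\CG)=0$ $\BP$-a.s.; this also yields $\BP(A_n \text{ i.o.}\mid\CG)=\BP(Z=\infty\mid\CG)=0$ a.s.
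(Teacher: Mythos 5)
Your main argument (Steps 1--3) is correct, but it takes a genuinely different route from the paper. You work directly with the tail events $B_N = \bigcup_{n \ge N} A_n$, pass to the limit via conditional dominated convergence, and use conditional countable subadditivity to bound $\BP(B_N \mid \CG)$ by the tail of the series, which is a.s.\ finite. This is the "tail estimate" version of the first Borel--Cantelli lemma. The paper instead defines $Y := \sum_{n=1}^\infty \1_{A_n}$, observes $\{Y = \infty\} = \{A_n \text{ i.o.}\}$, computes $\BE[Y \mid \CG] = \sum_n \BP(A_n \mid \CG) < \infty$ a.s.\ by conditional monotone convergence, and reads off $\BP(Y = \infty \mid \CG) = 0$ a.s.\ because a nonnegative random variable with a.s.\ finite conditional expectation is a.s.\ finite. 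The paper's route is slightly shorter and avoids the separate appeal to dominated convergence on the decreasing sets $B_N$, while your route is the more classical one and keeps everything visibly parallel to the unconditional proof. Your careful bookkeeping of null sets (countably many a.s.\ identities unioned into one null set) is a good observation and is needed in either version, though the paper leaves it implicit. Finally, the "alternative route" you sketch at the end is precisely the paper's proof, so you effectively covered both approaches.
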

\begin{proof}
The proof is essentially the same as the classical Borel--Cantelli lemma. Define the variable $Y := \sum_{n = 1}^\infty \1_{A_n}$. Then $\{Y = \infty\} = \{A_n \text{ i.o.}\}$ and so
\begin{equation*}
    \BE[Y \mid \CG] = \sum_{n = 1}^\infty \BP(A_n \mid \CG) < \infty
\end{equation*}
by assumption, which implies $\BP(A_n \text{ i.o.} \mid \CG) = \BP(Y = \infty \mid \CG) = 0$. The final assertion is just the tower property.
\end{proof}

The proof of the following corollary is unchanged from the unconditional case.

\begin{cor}\label{cor:conditionalBC}
Let $\{X_n\}$ be a sequence of random variables, $X$ a random variable and $\CG$ a sub-$\sigma$-algebra. If for any $\varepsilon > 0$,
\begin{equation*}
    \sum_{n = 1}^\infty \BP(|X_n - X| > \varepsilon \mid \CG) < \infty \quad \BP\text{-a.s.},
\end{equation*}
then $X_n \to X$ $\BP$-a.s.
\end{cor}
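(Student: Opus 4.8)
The plan is to reduce the statement to the conditional Borel--Cantelli lemma (Lemma~\ref{lem:conditionalBC}) and then run the standard $\varepsilon = 1/k$ argument, exactly as in the unconditional case. First I would fix $\varepsilon > 0$ and apply Lemma~\ref{lem:conditionalBC} to the sets $A_n := \{|X_n - X| > \varepsilon\}$: the hypothesis $\sum_{n = 1}^\infty \BP(A_n \mid \CG) < \infty$ $\BP$-a.s.\ yields $\BP(A_n \operatorname{i.o.} \mid \CG) = 0$ $\BP$-a.s., and hence, by the tower property, $\BP(|X_n - X| > \varepsilon \operatorname{i.o.}) = 0$. Note that this step already discards any dependence of the exceptional null set on $\varepsilon$, since the conclusion extracted is the \emph{unconditional} statement $\BP(|X_n - X| > \varepsilon \operatorname{i.o.}) = 0$ for each fixed $\varepsilon$.

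Next I would specialise to $\varepsilon = 1/k$ for $k \in \BN$, writing $N_k := \{|X_n - X| > 1/k \operatorname{i.o.}\}$, a $\BP$-null set by the previous step, and setting $N := \bigcup_{k = 1}^\infty N_k$, which is still $\BP$-null by countable subadditivity. For any $\omega \notin N$ and any $k$, we have $\omega \in \{|X_n - X| > 1/k \operatorname{i.o.}\}^c = \{|X_n - X| \leq 1/k \operatorname{evt.}\}$, so there is an index beyond which $|X_n(\omega) - X(\omega)| \leq 1/k$; consequently $\limsup_{n \to \infty} |X_n(\omega) - X(\omega)| \leq 1/k$ for every $k$, i.e.\ $X_n(\omega) \to X(\omega)$. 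Since $\BP(N) = 0$, this gives $X_n \to X$ $\BP$-a.s.

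I do not anticipate a genuine obstacle here: the only point requiring a moment's care is the interplay between the quantifier ``for any $\varepsilon > 0$'' in the hypothesis, where the exceptional set may a priori depend on $\varepsilon$, and the countable union over $k$ used to extract almost sure convergence. This is resolved precisely because Lemma~\ref{lem:conditionalBC} hands back the unconditional conclusion $\BP(A_n \operatorname{i.o.}) = 0$ for each fixed $\varepsilon$, after which only the countably many values $\varepsilon = 1/k$ are needed, and a countable union of null sets is null.
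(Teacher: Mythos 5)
Your proof is correct and is precisely the standard argument the paper alludes to when it writes that ``the proof of the following corollary is unchanged from the unconditional case'': apply Lemma~\ref{lem:conditionalBC} to $A_n = \{|X_n - X| > \varepsilon\}$ for each fixed $\varepsilon$, then take $\varepsilon = 1/k$ and use countable subadditivity. Your remark about why the a priori $\varepsilon$-dependence of the exceptional set is harmless is exactly the point worth noting.
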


Throughout the paper, we often show strong convergence of a sequence $\{X_n\}$ to $X$ by showing that probabilities of the form
\begin{equation}\label{eq:condtail}
    \BP(|X_n - X| > \varepsilon \mid \CG_n)
\end{equation}
are summable, where $\{\CG_n\}$ is a filtration, but one must be careful in concluding anything about the convergence behaviour of $\{X_n\}$ based on this. Our way of getting around the issue with a growing $\sigma$-algebra in the conditioning is to consider
\begin{equation*}
    \CG_\infty := \sigma\Big(\bigcup_{n = 1}^\infty \CG_n \Big)
\end{equation*}
and verifying that the bound on~\eqref{eq:condtail} is still valid when conditioning on $\mathcal{G}_\infty$. The following lemma is an indispensible tool for this.

\begin{lem}\label{lem:condmean}
Let $\CD, \CG$ and $\CH$ be sub-$\sigma$-algebras with $\CG \subseteq \CD$ and $\CD \indep \CH$. If $Z$ is $\CD$-measurable with $\BE[|Z|] < \infty$, then it holds that
\begin{equation*}
    \BE[Z \mid \CG] = \BE[Z \mid \CG \lor \CH]
\end{equation*}
where $\CG \lor \CH := \sigma(\CG, \CH)$.
\end{lem}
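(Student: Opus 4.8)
The plan is to show that $\BE[Z\mid\CG]$ is a version of $\BE[Z\mid\CG\lor\CH]$ by checking the two defining properties of the latter. Measurability is immediate: $\BE[Z\mid\CG]$ is $\CG$-measurable and $\CG\subseteq\CG\lor\CH$. So the whole content of the lemma is the partial-averaging identity
\begin{equation*}
    \BE[Z\1_A] = \BE\big[\BE[Z\mid\CG]\1_A\big] \quad \text{for every } A \in \CG\lor\CH.
\end{equation*}

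First I would prove this identity on the $\pi$-system $\mathcal{P} := \{G\cap H : G\in\CG,\ H\in\CH\}$, which contains $\Omega$ and generates $\CG\lor\CH$. Fix $G\in\CG$ and $H\in\CH$. The key move is to factor $\1_{G\cap H}=\1_G\1_H$ so that $Z\1_G$ is $\CD$-measurable while $\1_H$ is $\CH$-measurable; since $\CD\indep\CH$ and $Z$ is integrable, the product rule for independent integrable variables gives $\BE[Z\1_G\1_H]=\BE[Z\1_G]\,\BP(H)$. Similarly $\BE[Z\mid\CG]\1_G$ is $\CG$-measurable, hence $\CD$-measurable, hence independent of $\CH$, so $\BE[\BE[Z\mid\CG]\1_G\1_H]=\BE[\BE[Z\mid\CG]\1_G]\,\BP(H)$. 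Finally $\BE[\BE[Z\mid\CG]\1_G]=\BE[Z\1_G]$ by the definition of conditional expectation given $\CG$, since $G\in\CG$. Chaining the three equalities yields the claim for $A=G\cap H$.

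Then I would pass from $\mathcal{P}$ to all of $\CG\lor\CH$ via Dynkin's $\pi$--$\lambda$ theorem. The set $\mathcal{L}:=\{A\in\CG\lor\CH : \BE[Z\1_A]=\BE[\BE[Z\mid\CG]\1_A]\}$ is a $\lambda$-system: it contains $\Omega$; it is closed under proper differences because $A\mapsto\BE[Z\1_A]$ and $A\mapsto\BE[\BE[Z\mid\CG]\1_A]$ are both (signed, finite) measures; and it is closed under countable increasing unions by dominated convergence, with $|Z|$ and $|\BE[Z\mid\CG]|$ as integrable dominating functions. Since $\mathcal{L}\supseteq\mathcal{P}$ and $\mathcal{P}$ is a $\pi$-system generating $\CG\lor\CH$, we get $\mathcal{L}=\CG\lor\CH$, which is precisely the partial-averaging identity, completing the proof.

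I do not expect a genuine obstacle: the only step requiring care is the use of independence, where one must decompose the integrand against the $\CD$-measurable factor $Z\1_G$ and the $\CH$-measurable factor $\1_H$ (note $\1_{G\cap H}$ itself is not $\CH$-measurable), and invoke integrability of $Z$ so that $\BE[UV]=\BE[U]\BE[V]$ for independent $U,V$ is legitimate. The remainder is routine $\pi$--$\lambda$ bookkeeping.
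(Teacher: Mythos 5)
Your proof is correct and follows essentially the same route as the paper: both check measurability, verify the partial-averaging identity on the $\pi$-system $\{G\cap H:G\in\CG,\ H\in\CH\}$ via the same chain of equalities (independence, definition of $\BE[\cdot\mid\CG]$, independence), and then extend by a $\pi$-$\lambda$ argument. You are merely more explicit than the paper about why the collection of good sets is a $\lambda$-system.
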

\begin{proof}
By definition, $\BE[Z \mid \CG]$ is $\CG$-measurable and thus, a fortiori, $\CG \lor \CH$-measurable. Let $G \in \CG$ and $H \in \CH$. Since sets of the form $G \cap H$ constitute a $\cap$-stable generator for $\CG \lor \CH$, we are done once we verify that
\begin{equation*}
    \int_{G \cap H} \BE[Z \mid \CG]\mathrm{d}\BP = \int_{G \cap H} Z \mathrm{d}\BP.
\end{equation*}
This follows from the calculation
\begin{align*}
    \int_{G \cap H} \BE[Z \mid \CG]\mathrm{d}\BP &= \int \BE[Z \mid \CG] \1_G \1_H \mathrm{d}\BP\\
    &= \BP(H) \int_G \BE[Z \mid \CG]\mathrm{d}\BP= \BP(H) \int_G Z \mathrm{d}\BP = \int_{G \cap H} Z \mathrm{d}\BP.
\end{align*}
In the second equality, we used independence of $\CG$ and $\CH$, in the third we used the definition of a conditional expectation, and in the last, we used independence of $\CD$ and $\CH$. 
\end{proof}

\end{document}